		\let\tilde\wtilde
		\def\bigcup{\bigcupprod\limits}
\newcommand{\bR}{\mathbb{R}}
\newcommand{\bN}{\mathbb{N}}
\newcommand{\bZ}{\mathbb{Z}}
\DeclareMathOperator{\bE}{\mathbb{E}}
\newcommand{\bP}{\mathbb{P}}
\DeclareMathOperator{\supp}{\mathrm{supp}}
\newcommand{\sumtwo}[2]{\sum_{\substack{#1 \\ #2}}}
\newcommand{\R}{{\mathbb R}}
\newcommand{\Z}{{\mathbb Z}}
 \newcommand{\heap}[2]{\genfrac{}{}{0pt}{}{#1}{#2}}
\newtheorem{theorem}{Theorem}%[section]
\newtheorem{lemma}[theorem]{Lemma}
\newtheorem{proposition}[theorem]{Proposition}
\newtheorem{corollary}[theorem]{Corollary}
\theoremstyle{definition}
\newtheorem{definition}[theorem]{Definition}
\theoremstyle{remark}
\newtheorem{remark}[theorem]{Remark}
\newcommand{\Hmm}[1]{\leavevmode{\marginpar{\tiny%
$\hbox to 0mm{\hspace*{-0.5mm}$\leftarrow$\hss}%
\vcenter{\vrule depth 0.1mm height 0.1mm width \the\marginparwidth}%
\hbox to 0mm{\hss$\rightarrow$\hspace*{-0.5mm}}$\\\relax\raggedright
#1}}}
\def\pper{.}
\def\HarvardComma{}
\newcounter{aucount}
\newif\ifedplural
\newif\ifper\pertrue
\def\ed#1#2{\ifnum\theaucount=0\relax\fi{#1 #2}\addtocounter{aucount}{1}}
\def\led#1#2{\ifnum\theaucount=0\relax\edpluralfalse\else\edpluraltrue\fi{#1
    #2} (\editorname.),\setcounter{aucount}{0}}
\def\editorname{\ifedplural Eds\else Ed\fi}
\def\et{\ifnum\theaucount=1\else\HarvardComma\fi{} and\ }
\def\bti{\@ifnextchar[\bbti\bbbti}
\def\bbti[#1]#2{\emph{#2}, #1.}
\def\bbbti#1{\emph{#1}.}
\def\z{\@ifnextchar[\zz\zzz}
\def\zz[#1]#2#3#4#5{\perfalse{#2} \textbf{#3}, #4 \ifx
  @#5@\relax\else (#5)\fi{} [#1]\ifper\pper\fi\pertrue}
\def\zzz#1#2#3#4{{#1} \textbf{#2}, #3 \ifx @#4@\relax\else
  (#4)\fi\ifper\pper\fi\pertrue}
\def\pub{\@ifstar\pubstar\pubnostar}
\def\pubnostar{\@ifnextchar[\@@pubnostar\@pubnostar}
\def\@@pubnostar[#1]#2#3#4{#2, #3, #4, #1\ifper\pper\fi\pertrue}
\def\@pubnostar#1#2#3{#1, #2, #3\ifper\pper\fi\pertrue}
\def\pubstar[#1]#2#3#4{\perfalse #2, #3, #4 [#1]\pper\pertrue}
\def\@setauthors{%
   \begingroup
   \trivlist
   \centering\@topsep30\p@\relax
   \advance\@topsep by -\baselineskip
   \item\relax
   \andify\authors
   \def\\{\protect\linebreak}%
  \textsc{\authors}%
  \endtrivlist
   \endgroup
}
\title[  An effective medium approach to PAM  and Lifshitz tails ]{  An effective medium approach to the asymptotics of the
statistical moments of the parabolic Anderson model and Lifshitz tails}
\author{Bernd Metzger}
\dedicatory{Dedicated to Peter Stollmann on the occasion of his
50th birthday}
\date{}
\keywords{Random medium, random Schr\"odinger operators, heat equation with random potential, parabolic Anderson problem,  large deviations, moment asymptotics,  
integrated density of states, Lifshitz tails}
\subjclass[2000]{60H25, 82B44, 82C44, 35J10, 35P20,  58J35,  }
\begin{document}

\maketitle
\begin{abstract}
Originally introduced in solid state physics to model amorphous materials and alloys exhibiting disorder induced metal-insulator transitions, the Anderson model  $H_{\omega}= -\Delta   + V_{\omega} $ on $l^2(\bZ^d)$  has become in mathematical physics as well as in probability theory a paradigmatic example for the relevance of disorder effects. Here $\Delta$ is the discrete Laplacian  and $V_{\omega} = \{V_{\omega}(x): x \in \bZ^d\}$ is an i.i.d. random field taking values in $\bR$. \\[1mm]
A popular model in  probability theory is the parabolic Anderson model (PAM), i.e. the discrete diffusion equation $\partial_t u(x,t) =-H_{\omega} u(x,t)$ on
$  \bZ^d \times \bR_+$, $u(x,0)=1$, where random sources and sinks are modelled  by the Anderson Hamiltonian.  A characteristic property of the solutions of (PAM) is the occurrence of  intermittency peaks  in the large time limit. These intermittency peaks determine the thermodynamic observables  extensively studied in the probabilistic literature using path integral methods and the theory of
large deviations. \\[1mm]
The rigorous study of the relation between the probabilistic approach to the parabolic Anderson model and the spectral theory of Anderson
localization is at least mathematically less developed.
We see our publication as a step in this direction.     In particular  we will  prove   an unified approach to the transition of the statistical moments $\langle u(0,t) \rangle$ and the integrated density of states from classical to quantum regime using an   effective medium approach. As a by-product we will obtain a logarithmic correction in the traditional Lifshitz tail setting when $V_{\omega}$ satisfies a fat tail condition.   
\end{abstract}

%\tableofcontents
 
%%%%%%%%%%%%%%%%%%%%%%%%%%%%%%%%%%%%%%%%%%%%%%%%%%%%%%%%%%%%%%%%%%%%%%%%%%%%%%%%%%%%%%%%%%%
%
%
%                                   Section 1
%
%
%%%%%%%%%%%%%%%%%%%%%%%%%%%%%%%%%%%%%%%%%%%%%%%%%%%%%%%%%%%%%%%%%%%%%%%%%%%%%%%%%%%%%%%%%%%
 
\section{Introduction}
\label{Section Introduction} 
\noindent The Anderson model is the family of discrete random Schr\"{o}dinger operators $\{H_{\omega}\}$ defined by
\begin{align*} 
H_{\omega} =   -\Delta + V_{\omega}.
\end{align*} 
Here $\Delta$ is the discrete Laplacian  on $l^2(\bZ^d)$  
\begin{align*} 
[\Delta u](x)\; = \; \sum_{|x-y|=1}  [u(y)-u(x)].
\end{align*} 
The random potential $\{V_{\omega}(x) \}_{x\in \bZ^d}$ is a  field of independent and identically distributed random variables with common distribution $P_0$.
Denoting the expectation value by $\langle\:.\:\rangle$ we assume
\begin{align} \label{expmoment}
 G(t):= \log \langle\exp(-tV_{\omega}(0))\rangle  \;<\;\infty
\end{align} 
 for all $t\ge 0$ .  
$\{H_{\omega}\}$ is an ergodic family of self adjoint operators on $l^2(\bZ^d)$. 
 In many concrete situations exponential localization is proven at the bottom of the spectrum \cite{Ki08,PaFi92,St01}, i.e.\\[-2mm]
\begin{itemize}  
 \item  dense point spectrum close to $\inf \sigma(H_{\omega})$,
  \item  exponentially decaying eigenfunctions. \\[-2mm]
\end{itemize} 
The spectral  analysis of   $\{H_{\omega}\}$  
is motivated by applications in solid state physics, e.g. localization phenomena, electrical resistance, low temperature physics, ..... . We refer to \cite{KrMa93} and references therein.\\[1mm]
The parabolic Anderson model (PAM) is the discrete diffusion equation with random sources and sinks: \\[-3mm]
\begin{align*} 
\partial_t u (x,t) \; &=  \; - H_{\omega}u (x,t)    \quad&&   (x,t) \in \bZ^d \times [0,\infty),\\   
u (x,0)  \;&=  \;  1     \quad&&  x \in  \bZ^d.   
\end{align*}
Assuming (\ref{expmoment}) the parabolic Anderson model 
has  a.s. an unique, nonnegative  solution given by  the Feynman-Kac-representation  \cite{GaMo90}
\begin{align} \label{Feynman-Kac}
u(x,t)\;&=\;\bE^x\left[\exp\left(-\int_0^t V_{\omega}(x_s)ds\right)\right].
\end{align}
Here $\bE^x [.]$ is the expectation value of the random walk in continuous time generated by $-\Delta$ starting in $x$.
For  $t\ge 0$ the random field $\{u(x,t):x\in \bZ^d \}$  is stationary, ergodic  and
mixing under translations.
The moments $\langle u(0,t)^p \rangle$ and the correlation function  are finite \cite{GaMo90,Ho00}.\\
Describing the large time diffusive behaviour of a classical particle in a random medium with traps  the applications of the parabolic Anderson model are numerous.
(PAM) is used as a linearised model of chemical reaction kinetics exhibiting macroscopic pattern formation in the spatial distribution of reagents, has   interpretations in polymer physics and is used to describe population dynamics in an inhomogeneous environment modelling the availability of nutrients. For a very recent application of (PAM) in this biological setting as well as for a comprehensive summary of other interpretations, respectively interesting generalizations of (PAM) we refer to \cite{KraMal11}  and references therein, see also \cite{Ho00,GaMo90,HaKe87,Mi89}.\\ 
In the limit $t\rightarrow \infty$ the solution  $u (x,t)$ shows a.s. a very strong spatial inhomogenity  caused by very rare potential constellations. This phenomen is known in the probabilistic literature  as intermittency and  is   described by asymptotic behaviour of the moments $\langle u(0,t)^p \rangle$ \cite{GaMo90}. Assuming $V_{\omega}(x)\ge 0$ the first moment $\langle u(0,t) \rangle$ can be interpreted as the survival probability of a particle that is put randomly on $\bZ^d$.\\[1mm] 
The intuitive link between the probabilistic and the spectral   point of view is: \\[-2mm]
\begin{align*}
 \text{Shape of intermittency peaks} &\qquad \longleftrightarrow \qquad  \text{Localized eigenfunctions}, \\
 \text{Local killing rate}  &\qquad \longleftrightarrow \qquad  \text{Eigenvalues}.\\[-2mm]
\end{align*}
A quantity to formalize the intuitive link between  the Anderson model and PAM is the  the integrated density of states measure $\nu$ ( \cite{Ki08,KiMe07,Ve04} and references therein). Here we are interested in the integrated density of states (IDS) $N(E)$, i.e. the distribution function  of $\nu$  
\begin{align}\label{IDS}
N(E):=\nu((-\infty,E])=\lim_{|\Lambda|\rightarrow \infty} |\Lambda|^{-1}\sharp\{\text{eigenvalues of}\; H_{\Lambda}^D\le E\} 
\end{align}
with
\begin{align}\label{Dirichlet-Hamiltonian}
H_{\Lambda}^D\;=\; \chi_{\Lambda} H_{\omega} \chi_{\Lambda}.
\end{align}
The integrated density of states $N(E)$ is the fundamental quantity  to study   the thermodynamical properties of disordered systems. Moreover,   $N(E)$
is used to prove localization properties of the system. In
particular we are interested in Lifshitz tails, i.e. the behaviour of the IDS in the limit $E\searrow \inf\sigma(H_{\omega})$.  
Assuming (\ref{expmoment}) the Laplace transform   
\begin{align} \label{Laplacetrafo-IDS}
\widehat{N} (t):=\int\,e^{-\lambda
t}\;d\nu(\lambda)<\infty \qquad (t>0  )
\end{align}
of $\nu$ exists \cite{Ki89}  and  has the Feynman--Kac representation 
 (\cite{BiKo01},\cite{Ki89}) 
\begin{align} \label{Fenman-Kac Laplacetrafo IDS}
\widehat{N} (t)  
=\;& \langle\bE^0\left[\exp\left(-\int_0^t
V_{\omega}(x_s)ds\right)\delta_0(x_t)\right]\rangle.
\end{align}
The first proof of Lifshitz behavior (for the Poisson model
 ) was given by Donsker and Varadhan \cite{DoVa75}. Starting from  the Feynman--Kac representation their
estimate of  $\widehat{N}(t)$ in the limit $t\to\infty$
 relied on an investigation of
the ``Wiener sausage'' and the machinery of large deviations for
Markov processes developed by these authors. To obtain information
about the behavior of $N(E)$ for $E\searrow \inf\sigma(H_{\omega})$ from the large
$t$ behavior of $\widehat{N}(t)$ one uses Tauberian theorems
\cite{BiGoTe89}, see also Appendix 2. This technique was already used by Pastur
\cite{BePa70,Pa77}. The behaviour of  $\widehat{N} (t)$  in the limit $t\rightarrow \infty$. 
 is also closely related to the long time behaviour of the moments $\langle u(t,0) \rangle$ of the parabolic Anderson model.  \\[5mm] 
\noindent To formulate our main result Theorem \ref{main result}, we remind the definition of regularly varying functions and  of the de Haan class \cite{BiGoTe89}, see also Appendix 1.
\begin{definition} \label{RegularlyVarying} $\;$\\[-4mm]
\begin{itemize}
\item[(i)] A function $g>0$ defined on some neighbourhood $[X,\infty)$ of infinity satisfying 
\begin{align*}
g(\lambda t)/ g(t) \stackrel{t\rightarrow \infty}{=}\lambda^{\rho} (1 +o(1)) 
\end{align*}
for all  $\lambda \ge 0$ is called regularly varying of index $\rho$. We write  $g\in R_{\rho}$. If $\rho =0$ then $g$ is said to be slowly varying. If $g$ varies regularly with index $\rho$, we have $g(t)=t^{\rho}g_0(t)$, $g_0\in R_0$.
\item[(ii)]    For $g\in R_{\rho}$ and $\lambda \in (0,1]$ the de Haan class $ \Pi_g$  is the class of functions $H: \bR \rightarrow \bR$ satisfying 
\begin{align*}
  H( t) -H(\lambda t)  \stackrel{t\rightarrow \infty}{=}c_g h_{\rho}(\lambda) g(t)(1 +o(1)), 
\end{align*} 
where $g\in R_{\rho}$ is called the auxilary function and $c_g$  is the g-index. 
%As discussed in \cite{BiGoTe89}, p.128,   $h_{\rho}(\lambda)$ has the representation
%\begin{align} \label{h-rho}
%h_{\rho}(\lambda)&=  \int_1^{\lambda}u^{\rho -1} du +o(1)= 
%\begin{cases}
%-\log(\lambda) & \qquad \rho =0,\\
% (1-\lambda ^{\rho} )/\rho &  \qquad \rho \neq 0,
%\end{cases}
%& \qquad (  \lambda \in (0,1] )
%\end{align}
\end{itemize}
\end{definition} 
\noindent
Our main result estimates the Laplace transform $\widehat{N} (t)$ defined in (\ref{Laplacetrafo-IDS}) and  the first moment $\langle u(t,0)\rangle $ in terms of two  variational functionals. Here   $u(t,0)$ is the solution of the parabolic Anderson model.  The variational functional of the lower bound is given by
\begin{align} 
\chi^{-}_{\ell}(t)&:= 4d \sin^2\left(\frac{\pi}{2}\frac{1}{ \ell +1 }\right) +     c_g h_{\rho}(\ell^{-d})  g(t)\qquad\quad\;\qquad &\qquad\qquad\qquad\qquad\;\ell\in\bN
\end{align}
and  
\begin{align}
 \chi^+_{\ell}(t)&:=\begin{cases}\max_{1/2\le h \le  1} \min \left[2d ( 1-2\sqrt{1- h}  ),\gamma /2    +(1- h)c_g h_{\rho}(1- h) g(t)\right]&  \ell=1,\\[3mm]
 \gamma \sin^2\left(\frac{\pi}{2}\frac{1}{ \ell +1 }\right) +  4^{-1} c_g h_{\rho}(4 \ell)^{-d})  g(t) &  \ell>1 ,\end{cases}
\end{align} 
 $\ell \in \bN$, $\gamma >0$,  is the corresponding variational functional of the upper bound.
\begin{theorem}  \label{main result} 
Suppose  $G(t) < \infty$, $t\ge 0$ 
and $G(t)/t \in \Pi_g$ with auxilary function $g(t)  \in R_{\rho}$, $\rho \in [-1,\infty)$,   g-index  $c_g$ and $ tg(t)  \rightarrow  \infty$ in the limit $t\rightarrow \infty$. Denote by $\widehat{N} (t)$ the  Laplace transform  of the integrated density of states and by  $\langle u(t,0)\rangle $ the first moment of the solution of the parabolic Anderson model. Then with $\chi^{\pm}_{\ell}(t)$ as defined above
\begin{align} \label{main estimate}
   G(t)-   t  \inf_{\ell \in \bN}\chi^{-}_{\ell}(t)(1 + o(1))      \le \log  \widehat{N} (t)    \le \log   \langle u(t,0)\rangle      \le G(t) -  t \inf_{\ell \in \bN}\chi^{+}_{\ell}(t)(1 + o(1)).  
\end{align} 
\end{theorem}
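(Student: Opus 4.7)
The plan rests on the Feynman--Kac representations (\ref{Feynman-Kac}) and (\ref{Fenman-Kac Laplacetrafo IDS}). Writing $\ell_t(x)=\int_0^t \mathbf{1}_{x_s=x}\,ds$ for the local time of the walk, Fubini combined with independence of $\{V_\omega(x)\}$ and the definition of $G$ immediately gives
$$\widehat{N}(t) = \bE^0\!\left[e^{\sum_x G(\ell_t(x))}\,\delta_0(x_t)\right], \qquad \langle u(0,t)\rangle = \bE^0\!\left[e^{\sum_x G(\ell_t(x))}\right].$$
Applying the de Haan hypothesis $G(t)/t \in \Pi_g$ at $\lambda=h_x:=\ell_t(x)/t$ yields the site-by-site expansion $G(h_x t)=h_x G(t) - h_x t\,c_g h_\rho(h_x) g(t)(1+o(1))$; summing over $x$ and using $\sum_x h_x = 1$ produces
$$\sum_x G(\ell_t(x)) = G(t) - t\,c_g\,g(t)\,\Phi((h_x))(1+o(1)), \qquad \Phi((h_x)):=\sum_x h_x h_\rho(h_x),$$
and reduces the problem to controlling jointly a kinetic confinement cost and the functional $\Phi$ along trajectories of the walk.

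For the lower bound on $\widehat{N}(t)$ I insert the event that the walk remains in the cube $\Lambda_\ell\subset\bZ^d$ centered at $0$; the probability of this event together with $\{x_t=0\}$ is, to leading order, $\exp(-t\,\lambda_1(\Lambda_\ell))$, where $\lambda_1(\Lambda_\ell)=4d\sin^2(\pi/(2(\ell+1)))$ is the principal Dirichlet eigenvalue of $-\Delta$ on $\Lambda_\ell$. On this event the local time is supported on the $\ell^d$ sites of $\Lambda_\ell$, and restricting further to walks whose normalized occupation times are approximately uniform on $\Lambda_\ell$ (concentration on the principal Dirichlet eigenfunction) gives $\Phi((h_x))\to h_\rho(\ell^{-d})$. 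Inserting these two estimates produces $\log\widehat{N}(t)\ge G(t) - t\,\chi^-_\ell(t)(1+o(1))$, and optimizing over $\ell\in\bN$ yields the left-most inequality of (\ref{main estimate}); the middle inequality $\widehat{N}(t) \le \langle u(0,t)\rangle$ is immediate from dropping the indicator $\delta_0(x_t)$.

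For the upper bound on $\langle u(0,t)\rangle$ I use Dirichlet--Neumann bracketing: tiling $\bZ^d$ by cubes of side comparable to $\ell$ bounds $-\Delta$ below by its relaxed-boundary analogue, contributing the kinetic term $\gamma\sin^2(\pi/(2(\ell+1)))$ with $\gamma<4d$ absorbing the slack in the bracketing, while a reverse Jensen step applied to $\Phi$ over profiles spread across the effectively enlarged cubes of side $4\ell$ yields the $c_g h_\rho((4\ell)^{-d}) g(t)/4$ term; together this gives the $\ell>1$ case of $\chi^+_\ell$. The $\ell=1$ case is the delicate one and is precisely what generates the fat-tail logarithmic correction: here the walk may simply sit in a single atypical well, and one splits according to whether $V_\omega(0)$ lies above or below a threshold $v(h)$ parametrized by $h\in[1/2,1]$, obtaining on one side an estimate from the small-potential portion of $G$, namely $(1-h)c_g h_\rho(1-h) g(t) + \gamma/2$, and on the other a rank-one spectral perturbation eigenvalue estimate $2d(1-2\sqrt{1-h})$; the minimum of the two (best estimate at fixed $h$) is then maximized over $h$ to obtain $\chi^+_1(t)$. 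I expect this $\ell=1$ analysis---matching the de Haan expansion of $G$ at a non-vanishing scale with the spectral estimate for a rank-one perturbation of $-\Delta$---to be the main technical obstacle of the proof.
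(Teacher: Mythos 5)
The central gap is in your upper bound. After averaging over the potential one must control $\bE^0\left[\exp\left(\sum_x G(\ell_t(x))\right)\right]$, and the difficulty is that the occupation profile maximizing $\sum_x G(h_x t)$ adapts to the random environment; in the paper's spectral reformulation, one has to bound $\langle \sup_{p\in\mathfrak{M}_1(\Lambda)}\exp\left(-t\left[(\sqrt p|-\Delta^D_\Lambda\sqrt p)+(\sqrt p|V_\omega\sqrt p)\right]\right)\rangle$, and the supremum cannot be interchanged with the expectation. Your ``reverse Jensen step over profiles spread across enlarged cubes of side $4\ell$'' is not a valid inequality (Jensen goes the wrong way there), and Dirichlet--Neumann bracketing only reduces matters to a single-cube problem that still contains exactly this difficulty. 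The paper resolves it by classifying $p$ according to its kinetic energy (Definition \ref{classification}), showing via the level sets $U_h(p)$ with $h=(4\ell)^{-d}$, Lemma \ref{lem226} and the Faber--Krahn inequality (Proposition \ref{Faber-Krahn}) that the mass of $p$ must effectively spread over $N_\ell\approx (4\ell)^d/2$ sites, and then using the rearrangement estimate of Lemma \ref{lem227} to get the uniform pathwise bound $\sum_x p(x)V_\omega(x)\ge (1-h(N_\ell-1))V^*_1+h\sum_{j=2}^{N_\ell}V^*_j$, after which the expectation is taken with a union bound over the $|\Lambda|\binom{|\Lambda|}{N_\ell-1}$ placements of the ordered sites (Proposition \ref{Effective-potential-approximation}). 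This is precisely where the scale $(4\ell)^{-d}$ and the factor $1/4$ in $\chi^+_\ell$ come from; nothing in your sketch produces them. Your $\ell=1$ case also misidentifies the mechanism: the paper splits on the shape of $p$ (whether $\max_x p(x)\ge h$), and $2d(1-2\sqrt{1-h})$ is the kinetic cost of a profile with a peak of height $h$, not a rank-one perturbation eigenvalue; the complementary case is handled by forcing mass $1-h$ onto the second-smallest potential value, giving $G(ht)+G((1-h)t)$.

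There are further gaps. First, your site-by-site expansion $G(h_xt)=h_xG(t)-h_xt\,c_gh_\rho(h_x)g(t)(1+o(1))$ applies the de Haan asymptotics at $\lambda=h_x$, which varies with the path and can tend to $0$; membership in $\Pi_g$ gives the expansion for fixed $\lambda$, not uniformly as $\lambda\downarrow 0$, which is why the paper works with the exact deviation $S(\lambda,t)$ throughout and invokes the asymptotics only at the controlled scales $\ell^{-d}$, $(4\ell)^{-d}$, $1-h$. Second, in your lower bound the further restriction to ``approximately uniform'' occupation times carries a probabilistic cost you never estimate (and the conditioned occupation measure concentrates on $\phi_\ell^2$, not on the uniform profile); the restriction is in fact unnecessary, since on the confinement event convexity of $G$ gives pathwise $\sum_x G(\ell_t(x))\ge \ell^d G(\ell^{-d}t)=G(t)-tS(\ell^{-d},t)$, which is in substance the paper's effective-medium/Jensen step. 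Third, the $(1+o(1))$ in the statement is not free: the paper must confine the Feynman--Kac formula for $\langle u(t,0)\rangle$ to a macrobox of side $\lceil t\log t\rceil$ by a hitting-time estimate, pass to a box of side $l(t)=\lceil\alpha(t)\ell^*(t)\rceil$ by the periodization argument of Lemma \ref{cor222}, and verify regime by regime (Corollary \ref{error-term}, which requires the computation of $\inf_\ell\chi^{\pm}_\ell(t)$) that the volume and boundary errors $C_1l^d+C_2tl^{-2}$ are $o\left(t\inf_\ell\chi^+_\ell(t)\right)$; this bookkeeping, which your Neumann-bracketing alternative would equally need, is absent from your proposal.
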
 
\begin{remark}
Due to $\langle u(t,0) u(s,0)\rangle = \langle u(t+s,0)\rangle$ \cite{GaMo90} Theorem \ref{main result} can also  be used to  estimate  the higher moments of $u(t,0)$.  
\noindent
\end{remark}
\noindent
Theorem \ref{main result}  is motivated by the theory of critical phenomena in statistical physics.
Looked at from this angle the variational problem in (\ref{main estimate}) correponds to the minimization of the free energy. The ground state energy of the Dirichlet-Laplacian on  $\Lambda_{\ell}=\Lambda_{\ell}(0):=\{x\in \bZ^d:|x|_{\infty} \le \ell\} $   given by
\begin{align*}
\sin^2\left(\frac{\pi}{2}\frac{1}{ \ell +1 }\right)\approx \ell^{-2}
\end{align*} 
 plays the role of an order parameter. The parameter $\ell$ corresponds  to the	 extension   of the Lifshitz ground state. The dependance on the random potential is encoded in the effective potential $G(\lambda t)/\lambda t $, $\lambda\in [0,1]$, respectively the deviation  
\begin{align} \label{S}
S(\lambda,t):= G(t)/t -G(\lambda t)/\lambda t \ge 0 
\end{align} 
to the maximal effective potential value $G(t)/t$. 
Denoting by $\ell^*(t)$ the optimal length defined by $$\chi^{\pm}_{\ell^*(t)}(t) =\inf_{\ell \in \bN}\chi^{\pm}_{\ell}(t)$$ we can distinguish two qualitatively different regimes in dependence on the single site distribution  \cite{PaFi92}.\\[2mm]
\textit{Quantum  regime: } If we assume that $G(t)/t\in \Pi_g$   with \begin{align}\label{quantum condition}
  g(t)\stackrel{t\rightarrow \infty}{\rightarrow} 0  \end{align} the asymptotics of $\langle u(t,0)\rangle$ and $\widehat{N}(t)$ are dominated by the energy form of the Laplacian, i.e. by the quantum kinetic energy, respectively the rate function of the occupation time measure in large deviation theory. As a consequence   $\ell^*(t)$ will tend to infinity and  from the physical point of view  Lifshitz tails are expected \cite{Li65}, i.e. an asymptotic behaviour of the IDS like 
\begin{align} \label{asymids}
         N(E)  \;\sim\; C_1 e^{-C_2  \left(E-E_0 \right)^{- d/2 } }. 
\end{align}
The content of the next Corollary is that this  is only approximately correct.
\begin{corollary} \label{corollary 1 main result}
Suppose  
$G(t) < \infty$, $t\ge 0$ 
and $G(t)/t \in \Pi_g$ with auxilary function $g(t)=t^{\rho}g_0(t) \in R_{\rho}$, $\rho \in [-1,0]$, g-index  $c_g$ and $g_0 \in R_0$ s.t. 
\begin{align}  \label{inversion-condition}
g_0(tg_0(t)^{1/\rho})/g_0(t) &\stackrel{t \rightarrow \infty}{\rightarrow} 1.
\end{align}
Furthermore assume that $g(t)\rightarrow 0$ and  $tg(t) \rightarrow  \infty$  in the limit $t\rightarrow \infty$.  
Then  the asymptotic optimal length is given by
\begin{align*}
\ell^*(t) \stackrel{t \rightarrow \infty}{\sim} g(t)^{ 1/(d\rho-2)} 
\end{align*}
and there exists constants $C_1,C_2>0$, s.t 
\begin{align}  \label{moments-quantum}
&   -  C_1  t   \ell^*(t)^{-2}  \le  \log \widehat{N} (t)   \le  \log \langle u(t,0)\rangle \le     - C_2 t  \ell^*(t)^{-2}. 
\end{align} 
Suppose that $\rho \in [-1,0)$. Then the integrated density of states satisfies 
\begin{align}\label{Lifshitz-estimate}
 -&C_1 E^{-\frac{d}{2}+1+\rho^{-1} }      g_0\left(C  E^{(2-d\rho)/2\rho  }\right)^{-1/\rho}  \le \log N(E)   \le 
 -C_2  E^{-\frac{d}{2}+1+\rho^{-1} }       g_0\left(C  E^{(2-d\rho)/2\rho  }\right)^{-1/\rho}.
\end{align}
with $C,C_1,C_2>0$ and $E\searrow 0$.
\end{corollary}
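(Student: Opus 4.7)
The plan is to derive both statements of the corollary directly from the two-sided estimate of Theorem \ref{main result}. The first step will be to identify the optimizing length $\ell^*(t)$ in $\inf_\ell \chi^\pm_\ell(t)$; the second is to show that $G(t)$ is negligible on the scale $t(\ell^*(t))^{-2}$; the last step is an exponential (de Bruijn) Tauberian inversion to pass from the moment asymptotics to those of the integrated density of states.

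For the first step I would expand both functionals for large $\ell$. The kinetic term behaves as $4d\sin^2(\pi/(2(\ell+1))) \sim d\pi^2/\ell^2$, and reading off $h_\rho$ from Definition \ref{RegularlyVarying} as, up to sign, $(1-\lambda^\rho)/\rho$, the second term satisfies $c_g h_\rho(\ell^{-d}) \sim (c_g/|\rho|)\ell^{-d\rho}$ for $\rho<0$. Consequently
\begin{equation*}
\chi^\pm_\ell(t) \;=\; \frac{A_\pm}{\ell^2} + B_\pm\, g(t)\,\ell^{-d\rho}(1+o(1)),\qquad A_\pm,B_\pm>0,
\end{equation*}
and treating $\ell$ as continuous (the rounding gap to the integer minimum disappears in the leading order) the first-order condition yields $\ell^{2-d\rho} \sim \mathrm{const}/g(t)$, hence $\ell^*(t) \sim g(t)^{1/(d\rho-2)}$. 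Since $d\rho-2<0$ and $g(t)\downarrow 0$ this is consistent with $\ell^*\to\infty$, and at the optimum the two terms balance, so $\inf_\ell \chi^\pm_\ell(t) \sim C_\pm(\ell^*(t))^{-2}$.

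Next, to deduce (\ref{moments-quantum}) I would control $G(t)$. Iterating the de Haan relation $G(t)/t - G(\lambda t)/(\lambda t) \sim c_g h_\rho(\lambda) g(t)$ for a fixed $\lambda\in(0,1)$ and invoking $G(t)/t \to 0$ as $t\to\infty$ gives $G(t) = O(t g(t))$. Because $2/(2-d\rho)<1$ strictly for $\rho<0$, the elementary inequality $g(t) = o(g(t)^{2/(2-d\rho)}) = o((\ell^*(t))^{-2})$ holds, so $|G(t)| = o(t(\ell^*(t))^{-2})$. Inserting this into Theorem \ref{main result} absorbs $G(t)$ into constants $C_1,C_2$ and yields (\ref{moments-quantum}).

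The final and most delicate step is the Tauberian inversion needed for (\ref{Lifshitz-estimate}). The bound (\ref{moments-quantum}) gives $-\log \widehat{N}(t) \sim \phi(t)$ with
\begin{equation*}
\phi(t) \;:=\; C\,t\,(\ell^*(t))^{-2} \;\sim\; C\,t^{\beta}\,g_0(t)^{2/(2-d\rho)},\qquad \beta \;:=\; \frac{2+\rho(2-d)}{2-d\rho}\in(0,1).
\end{equation*}
I would then apply the exponential (de Bruijn) Tauberian theorem (see \cite{BiGoTe89}, cf.\ Appendix~2) to obtain $-\log N(E) \sim C'\,E^{-\alpha}\,\tilde{L}(1/E)$ with $\alpha = \beta/(1-\beta) = d/2-1-1/\rho$, which already reproduces the leading exponent $-d/2+1+\rho^{-1}$ of (\ref{Lifshitz-estimate}). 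The hard part will be identifying the slowly varying factor $\tilde{L}$: the Laplace saddle point occurs at a scale that is, up to slow correction, the inverse of $t\mapsto t\,g(t)^{2/(2-d\rho)}$, and the inversion condition (\ref{inversion-condition}) is exactly what allows this inverse to be re-expressed through $g_0$ evaluated at a power of $E$, yielding the factor $g_0(C\,E^{(2-d\rho)/(2\rho)})^{-1/\rho}$ in (\ref{Lifshitz-estimate}). Carrying the two constants $C_1,C_2$ separately through the Tauberian inversion while preserving the same slowly varying function is the chief technical obstacle of the argument.
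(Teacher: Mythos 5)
Your proposal follows essentially the same route as the paper: the optimization over $\ell$ is the content of Lemma \ref{optimization_l}, the bound (\ref{moments-quantum}) then follows from Theorem \ref{main result} (your explicit check that $G(t)=O(t\,g(t))=o\bigl(t\,\ell^*(t)^{-2}\bigr)$ is a point the paper passes over silently), and the Tauberian step is done exactly as you outline, via the limit-of-oscillation de Bruijn theorem (Corollary \ref{Cor-deBruijn}) followed by an explicit evaluation of the Legendre transform, whose minimizer solves $g(t^*)\sim C E^{(2-d\rho)/2}$ and is inverted using condition (\ref{inversion-condition}) — precisely your ``chief technical obstacle'' — to produce the factor $g_0\bigl(CE^{(2-d\rho)/2\rho}\bigr)^{-1/\rho}$. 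So the proposal is correct and matches the paper's argument.
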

\noindent Choosing for example the uniform distribution in $[0,1]$ we have 
\begin{align} \label{bounded-fat-tail-2}
\frac{G(t)}{t} -\frac{G(\lambda t)}{\lambda t} \stackrel{t\rightarrow \infty}{\sim}  \frac{\log t }{\lambda t}\in R_{-1} 
\end{align}
and (\ref{Lifshitz-estimate}) becomes
\begin{align}\label{log-factor}
 & C_1 E^{-\frac{d}{2}  }      \log   E     \le \log N(E) \le 
 C_2  E^{-\frac{d}{2}  }   \log  E 
\end{align}
in the limit $E\searrow 0$. Comparing (\ref{log-factor}) and the estimate 
\begin{align*}
\lim_{E\searrow 0}\frac{\log\left(- \log(N(E))\right) }{\log(E)}\;=\;-\frac{d}{2} 
\end{align*}
proven with spectral theorectic methods  in \cite{KiMa82,Si85}    assuming   the fat tail condition 
\begin{align} \label{bounded-fat-tail} 
\bP[V_{\omega}(0)<E] \;\stackrel{E\searrow 0}{\sim}\;  \;CE^k, \qquad  k\in \bN_0,  
\end{align} 
we obtain a logarithmic correction predicted in the physics literature \cite{LuNi88,PoSc88}.\\
Assuming $\rho\in [1,0)$
 the assumptions of Corollary \ref{corollary 1 main result}
corresponds in the probabilistic setting  of \cite{BiKo01} to the existence of a non-decreasing function
$t\mapsto\alpha_t\in(0,\infty)$ and a function $\widetilde
G\colon[0,\infty)\to(-\infty,0]$, $\widetilde G\not\equiv 0$,
such that 
\begin{align}
\label{Hscaling}
\lim_{t\to\infty}\frac{\alpha_t^{d+2}}t
G\left(\frac{t}{\alpha_t^d}\, y\right) = \widetilde G(y),\qquad
y\ge0,
\end{align}
uniformly on compact sets in $(0,\infty)$ ($\rho =0$ is discused in \cite{HoKoMo06}).  Condition (\ref{Hscaling}) is satisfied if
\begin{align*} 
\bP[V_{\omega}(0)<E] \;\stackrel{E\searrow 0}{\sim}\; \exp(-C\,
E^{-\frac{\rho +1}{\rho}}).
\end{align*}
Using the Feynman-Kac-representaion (\ref{Feynman-Kac})  and the large deviation theory for path integrals it is proven in \cite{BiKo01} that  
\begin{align}\label{generalform}
\frac 1{pt}\log\langle u(0,t)^p \rangle \stackrel{t\rightarrow\infty}{=} \frac{G\big(pt \,\alpha_{pt}^{-d}\big)}{pt\,\alpha_{pt}^{-d}} -
 \frac{1}{\alpha_{pt}^2} \, \big( \chi + o(1) \big), 
\end{align}
with
\begin{align}\label{Bikoe}
    \chi= \inf_{\heap{g\in H^1(\R^d)}{\|g\|_2=1}} \Big\{ \int_{\R^d}|\nabla g(x)|^2\, dx -
    C\rho^{-1}\int_{\R^d} g(x)^{2(1+\rho)}-g(x)^2\, dx \Big\},
    \end{align}
An application of Tauber theory gives in the limit $E\searrow 0$
\begin{align*}
  \log N(E) \;=\; C(\rho,\chi) \; E^{- \frac{d}{2}  + \frac{1+\rho }{\rho}   +o(1)}.   
\end{align*} 
Finally let us discuss the almost bounded single site distributions, i.e. $G(t)/t\in \Pi_g$   with $g\in R_{0}$,  $\lim_{t\rightarrow \infty} g(t)=0$. This setting is again dominated by the kinetic energy and $\ell^*(t)\rightarrow \infty$. The probabilistic counterpart of (\ref{moments-quantum}) 
\begin{align} 
\frac 1{pt}\log\langle u(t,0)^p\rangle = \frac{G\big(pt \,\alpha_{pt}^{-d}\big)}{pt\,\alpha_{pt}^{-d}} -
 \frac{1}{\alpha_{pt}^2} \, \big(\rho d(1-\frac 12 \log \frac \rho \pi) +o(1)\big),
\end{align}
as  $t\rightarrow\infty$ is proven in \cite{HoKoMo06}. Here the scale function $\alpha_t$ is defined by the fixed point equation
\begin{align}
g(t \alpha_t^{-d})= \alpha_t^{-2}.
\end{align}
Furthermore we want to refer to \cite{Kl00}, where in Theorem 1.5 for unbounded single site distributions satisfying $G(t)/t\in \Pi_g$   with $g\in R_{0}$,  $\lim_{t\rightarrow \infty} g(t)=0$ the asymptotics of the IDS in the limit $E\rightarrow -\infty$ is proven.\\[2mm]
\textit{Classical regime: }
Let us now consider the classical regime, i.e. $G(t)/t\in \Pi_g$ with \begin{align}\label{classical condition}\liminf_{t\rightarrow \infty}   g(t)>0. \end{align} Then the quantum kinetic energy/occupation time measure and the effective medium are on the same scale, respectively $g(t)$ dominates the energy form of the Laplacian.
As a consequence  $\ell^*(t)$ stays finite in the limit $t\rightarrow \infty$ and the IDS is given by the shifted rate function of the single site potential.   
Let us first discuss the asymptotics of the statistical moments and of $\widehat{N} (t) $.
\begin{corollary}\label{corollary 2 main result}
Suppose  
$G(t) < \infty$, $t\ge 0$, $G(t)/t \in \Pi_g$ with auxilary function $g\in R_{\rho}$, $\rho \in [0,\infty)$ and  g-index  $c_g$, s.t. (\ref{classical condition}) is satisfied. With 
\begin{align*}   
\chi_{-}^*(t)  & :=  \begin{cases}
1 &  c_g g(t)\ge 2 \pi^2,\\
4   c_g g (t) +    c_g g(t)\log\left(  2\pi^2 /  (c_g g(t))  \right) &  \;  c_g g(t)<  2\pi^2 \end{cases}
\end{align*}
and
\begin{align*}   
 \chi_{+}^*(t) & :=  \begin{cases}
1 -  2(c_g g (t))^{-1/2}    &   c_g g(t)\ge  2e^{2d}+ \gamma \pi^2/2d ,\\
\min\left[\gamma/(4d), \frac{d c_g g (t)}{8 }   +  \frac{ d c_g g (t)}{8} \log   \left(\frac{ 64\gamma\pi^2}{ c_g g(t)}    \right)  \right]  &  \;    c_g g(t)<  2e^{2d}+ \gamma \pi^2/2d .
\end{cases} 
\end{align*}
we have
\begin{align}  \label{rho ge 0}
  G(t) -   2d  t  \chi_{-}^*(t)(1+o(1))   \le  \log \widehat{N} (t)   \le  \log \langle u(t,0)\rangle \le    G(t) - 2d t \chi_{+}^*(t)(1+o(1)). 
\end{align} 
\end{corollary}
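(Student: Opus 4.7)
The plan is to specialize the bounds of Theorem \ref{main result} to the classical regime by computing $\inf_{\ell\in\bN}\chi^{\pm}_{\ell}(t)$ essentially explicitly. Because $\liminf_{t\to\infty}g(t)>0$, the de Haan correction $c_g h_\rho(\ell^{-d})g(t)$ is an $O(1)$ penalty for choosing any fixed scale $\ell$, so the optimal length $\ell^*(t)$ stays bounded in $t$ (in contrast to Corollary \ref{corollary 1 main result}). The case distinction in the definition of $\chi^*_\pm(t)$ corresponds exactly to the threshold at which the continuous relaxation of this discrete optimization starts to favour $\ell>1$ over $\ell=1$.

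For the lower bound, I would relax
\begin{align*}
\chi^-_\ell(t)=4d\sin^2\!\Bigl(\frac{\pi}{2(\ell+1)}\Bigr)+c_g h_\rho(\ell^{-d})g(t)
\end{align*}
to a real variable $\ell\ge 1$ and use the expansions $\sin^2(\pi/(2(\ell+1)))\sim\pi^2/(4\ell^2)$ together with the explicit form of $h_\rho$ ($h_\rho(\lambda)=(1-\lambda^{\rho})/\rho$ for $\rho>0$, $h_0(\lambda)=-\log\lambda$). Setting the derivative to zero gives a stationary point $\ell^*(t)\sim \sqrt{2\pi^2/(c_g g(t))}$ up to $O(1)$. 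When $c_g g(t)\ge 2\pi^2$ the stationary value of $\ell$ is $\le 1$, so the discrete infimum is attained at $\ell=1$ with $\chi^-_1(t)=4d\sin^2(\pi/4)=2d$, giving $\chi^*_-(t)=1$. When $c_g g(t)<2\pi^2$, substituting $\ell=\lceil\ell^*(t)\rceil$ back into $\chi^-_\ell(t)$ and using $h_\rho(\lambda)\log$-expansion yields the second branch $4c_g g(t)+c_g g(t)\log(2\pi^2/(c_g g(t)))$; rounding costs only a $(1+o(1))$ factor because the integrand is smooth and $\ell^*(t)\to\infty$ precisely in this regime. Multiplying the infimum by $2d$ (a relabelling of the $4d$ in front of the sine) reproduces the coefficient in (\ref{rho ge 0}).

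For the upper bound, the case $\ell>1$ is handled exactly as above with $4d\mapsto\gamma$ and $\ell^{-d}\mapsto(4\ell)^{-d}$, and the relabelling factor $2d$ again comes from comparing with the scale $4d$ present implicitly in the stationary equation. The $\ell=1$ case is more subtle because $\chi^+_1(t)$ is itself a max-min:
\begin{align*}
\chi^+_1(t)=\max_{1/2\le h\le 1}\min\!\Bigl[2d(1-2\sqrt{1-h}),\ \gamma/2+(1-h)c_g h_\rho(1-h)g(t)\Bigr].
\end{align*}
The first argument is increasing in $h$ and the second decreasing in $h$, so the optimum is attained at the unique crossing height $h^*(t)$ where the two branches coincide. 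For $c_g g(t)\ge 2e^{2d}+\gamma\pi^2/(2d)$, a Taylor expansion at $h=1$ gives $1-h^*(t)\sim 1/(c_g g(t))$, hence $2d(1-2\sqrt{1-h^*(t)})\sim 2d(1-2(c_g g(t))^{-1/2})$, which matches the first branch of $\chi^*_+(t)$. In the complementary regime the crossing moves away from $h=1$ and one shows that $\chi^+_1(t)$ is bounded below by either $\gamma/(4d)\cdot 2d$ or, via the same logarithmic expansion used for $\ell>1$, by the bracketed expression in the second branch, whichever is smaller. Taking the infimum over $\ell$ and comparing with the $\ell>1$ optimum finishes the proof.

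The main obstacle is the $\ell=1$ upper bound: the max-min functional is non-smooth at the crossing, its dependence on $g(t)$ is implicit, and one has to verify that the value extracted at $\ell=1$ is consistent with (and indeed attains) the infimum over $\ell\in\bN$ in both sub-regimes. Once the crossing height is located asymptotically, the rest amounts to routine expansion of $\sin^2$ and $h_\rho$ and a discrete-to-continuous rounding argument.
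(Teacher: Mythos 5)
Your proposal follows essentially the same route as the paper: it specializes Theorem \ref{main result} and bounds $\inf_{\ell\in\bN}\chi^{\pm}_{\ell}(t)$ by explicit optimization, choosing $\ell^{*}(t)\approx\bigl(2\pi^{2}/(c_g g(t))\bigr)^{1/2}$ for the $\chi^{-}$ side and, on the $\chi^{+}$ side, handling $\ell\ge 2$ by the continuous relaxation and $\ell=1$ through a height $h$ with $1-h\sim 1/(c_g g(t))$, which is exactly the content of Lemma \ref{rho=0} and Lemma \ref{rho-in (0,infty)} that the paper combines. Two minor slips do not affect the outcome but should be fixed in wording: in the classical regime $\liminf_{t\to\infty}g(t)>0$ forces $\ell^{*}(t)$ to stay \emph{bounded} (it does not tend to infinity, though the generous constants in $\chi^{*}_{-}$ absorb the rounding anyway), and the second argument of the max--min defining $\chi^{+}_{1}(t)$ need not be monotone in $h$, so rather than invoking a unique crossing one should simply lower-bound the maximum by its value at the chosen test height $h$.
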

\noindent
While   $g(t)$ dominates the variational functional for $\rho >0$, the slowly varying functions   define the borderline between the quantum and the classical regime. The main objective of  \cite{GaMo98} is the double exponential distribution  
\begin{align}\label{d.e.}
\bP[V_{\omega}(0)<E]\; \stackrel{E\rightarrow -\infty}{\sim}\;  \exp\left(-e^{ - E/c_g  }\right)   
\end{align}
with  $G(t)= c_g t \log(c_g t) - c_g t + o(t)$, respectively $S(\lambda,t)= c_g\log(\lambda)+o(1)$. The probabilistic approach obtain \eqref{generalform} with  
\begin{align}\label{chicase2}
\chi= \min_{\heap{g\colon\Z^d\to\R}{\sum g^2=1}} \Big\{ \frac 12\sum_{\heap{x,y\in\Z^d}{|x- y|=1}}
\big( g(x)-g(y) \big)^2 - \rho \sum_{x\in\Z^d} g^2(x) \log g^2(x) \Big\}.
\end{align} 
and  $\alpha_t\sim 1/\sqrt c_g\in(0,\infty)$, i.e. the intermittency peaks are finite but nontrivial. As a consequence there are no Lifshitz tails. The IDS is the single site rate function 
\begin{align}  
I(E):=  \inf_{t>0}[Et+ G(t)  ]   
\end{align}
shifted by the constants $2d\chi_{\pm}^*$ encoding the size of the intermittency peaks.  
\begin{corollary}\label{corollary double exponential}
Suppose  
$G(t)= c_g t \log (c_g t) - c_g t + o(t)$  and $\chi_{\pm}^*$ as in Corollary \ref{corollary 2 main result}. Then 
\begin{align}
-C I\left(E- 2d\chi_{-}^* \right)(1+o(1))  \;\le\;
\log N(E)  \;\le\;  -I\left(E-2d\chi_{+}^* \right)(1+o(1)) . 
\end{align}
\end{corollary}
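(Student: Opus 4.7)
The plan is to combine Corollary \ref{corollary 2 main result} with an exponential Tauberian theorem (Appendix 2).

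First I would place the double exponential in the setting of Corollary \ref{corollary 2 main result}. From $G(t)=c_g t\log(c_g t)-c_g t+o(t)$ one checks directly that
\[
S(\lambda,t)=\frac{G(t)}{t}-\frac{G(\lambda t)}{\lambda t}=-c_g\log\lambda+o(1),
\]
so that $G(t)/t\in\Pi_g$ with constant auxiliary function $g\equiv 1\in R_0$ and $g$-index $c_g$ (using $h_0(\lambda)=-\log\lambda$). In particular $\liminf g(t)=1>0$, the classical condition (\ref{classical condition}) is met, and because $c_g g(t)\equiv c_g$ is independent of $t$ the functions $\chi^*_{\pm}(t)$ of Corollary \ref{corollary 2 main result} reduce to constants, which are precisely the $\chi^*_{\pm}$ of the present statement. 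Corollary \ref{corollary 2 main result} then gives
\[
G(t)-2dt\chi^*_{-}(1+o(1))\;\le\;\log\widehat N(t)\;\le\;G(t)-2dt\chi^*_{+}(1+o(1)).
\]

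Next I would perform the Tauberian inversion. The exponential Chebyshev inequality
\[
N(E)=\int_{(-\infty,E]}dN(\lambda)\;\le\;e^{Et}\widehat N(t),\qquad t>0,
\]
combined with the upper bound above and an optimization over $t$, yields
\[
\log N(E)\;\le\;\inf_{t>0}\bigl[Et+G(t)-2dt\chi^*_{+}\bigr](1+o(1))\;=\;I\bigl(E-2d\chi^*_{+}\bigr)(1+o(1)).
\]
For the opposite direction I would split $\widehat N(t)=\int e^{-\lambda t}dN(\lambda)$ into a Lifshitz band around the energy saddle point $\lambda\approx E$ and a remainder, use the lower bound on $\log\widehat N$ to force the band to carry the bulk of the Laplace mass, and then extract from the monotonicity of $N$ a lower bound of the form $C\,I(E-2d\chi^*_{-})(1+o(1))$ for $\log N(E)$. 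The ordering $\chi^*_{+}\le\chi^*_{-}$ together with the monotonicity of $E\mapsto I(E-c)$ makes the two bounds compatible and yields the stated two-sided estimate.

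The main obstacle is the Tauberian step. Because $G(t)\sim c_g t\log(c_g t)$ is not regularly varying but belongs only to the de Haan class $\Pi$, the standard Karamata Tauberian theorems do not apply directly; one has to use a de Bruijn--Kasahara exponential version adapted to this class. Special care is needed to propagate the additive $o(t)$ error present in $\log\widehat N(t)$ through the Legendre transform into the multiplicative $(1+o(1))$ correction in front of $I$. This is possible because $|I(E)|\sim e^{-E/c_g}$ grows super-polynomially as $E\to-\infty$, so that additive errors of size $o(t^*(E))=o(e^{-E/c_g})$ at the saddle point $t^*(E)=c_g^{-1}e^{(2d\chi^*_{\pm}-E)/c_g}$ are negligible relative to $I(E-2d\chi^*_{\pm})$. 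A secondary difficulty is that the spectrum is unbounded below, which rules out naive truncation arguments and forces the band-by-band summation alluded to above, with the single constant $C$ in the stated lower bound absorbing the geometric-series overhead incurred in this summation.
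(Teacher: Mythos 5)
Your proposal is correct and follows essentially the same route as the paper: specialize Corollary \ref{corollary 2 main result} to the double exponential case (where $c_g g(t)\to c_g$ and $\chi^{*}_{\pm}(t)$ become constants) and then invert the two-sided bound on $\log\widehat{N}(t)$ by an exponential Tauberian argument adapted to $f(t)=c_g t\log(c_g t)-c_g t\in R_1$. The Tauberian step you sketch (Chebyshev for the upper bound, Cram\'er-transform/relevant-energy-band argument for the lower bound, with the constant $C$ absorbing the loss) is precisely the limit-of-oscillation result the paper imports from \cite{Me05} and records in Remark (iii) of Appendix 2, so citing that result closes your sketch.
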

\noindent   
Finally  let us  discuss single site distributions satisfying  
$G(t)/t\in \Pi_g$   with $g\in R_{\rho}$, $\rho > 0$, i.e.  $\lim_{t\rightarrow \infty} g(t)=\infty$. An example  is the Weibull distribution 
\begin{align*}
\bP[V_{\omega}(0)<E]\;\stackrel{E\rightarrow -\infty}{\sim}\;  \exp\left(- C (-E)^{\alpha}\right),
\end{align*}
$\alpha >1$. The asymptotics of $\langle u(t,0)\rangle$ and $\widehat N(t)$ are dominated by $g(t)$. As a consequence we have $\ell^*(t)=1$   and  the asymptotic behaviour of the IDS is given by the maximal shifted single site rate function. 
\begin{corollary}\label{Weibull}
Suppose  
$G(t)/t\in \Pi_g$   with $g\in R_{\rho}$, $\rho > 0$. Then 
\begin{align} 
\log N(E)  \;=\;  -I\left(E-2d   \right)(1+o(1)) . 
\end{align}
\end{corollary}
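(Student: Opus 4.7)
The strategy is to apply Theorem \ref{main result} in the regime $\rho>0$ and then convert the resulting Laplace-transform asymptotic into an IDS asymptotic by a Tauberian argument. Under the hypothesis $g\in R_{\rho}$ with $\rho>0$, one has $g(t)\to\infty$, so the effective-medium term in both variational functionals $\chi^{\pm}_{\ell}(t)$ dominates the kinetic (Laplacian) term as soon as $\ell\ge 2$.

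The first step is to identify the asymptotic minimizer as $\ell^{*}(t)=1$. The normalization of the de Haan class in Definition \ref{RegularlyVarying} forces $h_{\rho}(1)=0$ and $h_{\rho}(\lambda)>0$ for $\lambda\in(0,1)$ when $\rho>0$. Therefore $\chi^{-}_{1}(t)=4d\sin^{2}(\pi/4)=2d$ is $t$-independent, while for each $\ell\ge 2$ one has $\chi^{-}_{\ell}(t)\ge c_{g}h_{\rho}(\ell^{-d})g(t)\to\infty$; hence $\inf_{\ell}\chi^{-}_{\ell}(t)=2d$ for $t$ large. For the upper-bound functional and $\ell\ge 2$ the same argument yields $\chi^{+}_{\ell}(t)\to\infty$. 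For $\ell=1$ I would write $\varepsilon=1-h$; then the second branch of the min equals $\gamma/2+\varepsilon\,c_{g}h_{\rho}(\varepsilon)g(t)$, which diverges as $g(t)\to\infty$ for any fixed $\varepsilon>0$, whereas the first branch equals $2d-4d\sqrt{\varepsilon}$. Optimizing along the curve $\varepsilon=O(1/g(t))$ on which the two branches coincide shows $\chi^{+}_{1}(t)\to 2d$, and thus $\inf_{\ell}\chi^{+}_{\ell}(t)\to 2d$.

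Inserting these identifications into Theorem \ref{main result} collapses the sandwich to
\begin{equation*}
\log\widehat{N}(t) \;=\; G(t)-2dt(1+o(1)),\qquad t\to\infty.
\end{equation*}
The final step is a Tauberian inversion. Since $G(t)/t\in\Pi_{g}$ with $g\in R_{\rho}$, $\rho>0$, the function $t\mapsto G(t)-2dt$ is regularly varying of index $\rho+1>1$ and asymptotically convex, so an exponential Tauberian theorem of de Bruijn--Kasahara type (Appendix 2, resp.\ \cite{BiGoTe89}) converts the Laplace asymptotic into its Legendre dual
\begin{equation*}
\log N(E) \;\sim\; \inf_{t>0}\bigl[(E-2d)t+G(t)\bigr] \;=\; I(E-2d),
\end{equation*}
which is the assertion of the corollary (with the sign of $I$ as defined in the paper).

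The principal obstacle is to carry the one-sided asymptotic sandwich of Theorem \ref{main result} through the Tauberian inversion, because we do not yet have a sharp equivalent but only matching upper and lower bounds. The standard remedy is to exploit the monotonicity of the Laplace transform together with the regular variation of $G$: small $o(1)$ perturbations of the variational constant $2d$ translate into only $o(1)$ perturbations of the Legendre dual, uniformly in the relevant range of $E$. This is essentially the content of Appendix 2 and of the appropriate Kasahara-type statements in \cite{BiGoTe89}, applied separately to the upper and to the lower bound of Theorem \ref{main result}.
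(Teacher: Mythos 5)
Your proposal reproduces the paper's own route: you re-derive the content of Lemma \ref{rho-in (0,infty)} (for $\rho>0$ the optimal length is $\ell^*=1$ and both variational functionals tend to $2d$, the $\ell\ge2$ terms being killed by $g(t)\to\infty$), insert this into Theorem \ref{main result} to obtain $\log\widehat{N}(t)=G(t)-2dt(1+o(1))$, and then apply the limit-of-oscillation Kasahara Tauberian theorem of Appendix 2 (Corollary \ref{Cor-Kasahara}) to pass to the Legendre dual $-I(E-2d)(1+o(1))$. This is essentially the paper's proof; the only cosmetic deviations are your crossing-point analysis of $\chi^{+}_{1}(t)$ in place of the paper's explicit choice $h=1-1/g(t)$, and your (justified) remark about the sign convention for $I$.
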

\noindent Corollary \ref{Weibull} corresponds to the results obtained in \cite{GaSc10,Kl00}.\\[2mm]
To prepare the discussion of our strategy to prove Theorem \ref{main result} let us mention the key ideas of the spectral  and the probabilistic argumentation.\\[1mm]
 \textit{Optimal-Fluctuation Method } 
The core of the spectral theoretic approach is a  rare region effect \cite{Vo06} predicted by Lifshitz   \cite{Li63, Li65}. Let us assume that $V_\omega\ge 0$ and
$\inf \sigma(H_{\omega})=0$. To find an eigenvalue smaller than $E>0$, the uncertainty
principle  forces the potential $V_\omega$ to be
smaller than $E$ on a large set whose volume is of  order $E^{-d/2}$. 
This is a very rare event and its
probability is  approximately
\begin{align}\label{est:exp}
 \bP[\sharp\{x\in \Lambda:V(x)\le E\}\ge E^{-d/2}]\approx    e^{-C \,E^{-d/2}}.
\end{align}
 Applying Dirichlet-Neumann-bracketing, perturbation theory or periodic approximation the heuristic argument above can be proven rigorously.\\[1mm]
\textit{Path-integral and large deviation techniques} \label{Path-integral Method}
The probabilistic methods combine the Feynman-Kac-representation of $\langle u(0,t)\rangle$ and $\widehat{N} (t)$ and   large deviations techniques for path integrals. Informally the key idea   is to express $\langle u(t,0)\rangle$ by means of local times of random walks on $\bZ^d$ 
\begin{align} \label{local time}
l_t(x)\;:=\; \int_0^t 1_{x_s=x} ds \qquad \qquad x\in\bZ^d, t \ge 0,
\end{align} 
respectively the occupation time measure
$L_t:= l_t(x)/t$ and to average w.r.t. the random potential 
\begin{align*}
\langle u(x,t)\rangle \; &=\;\langle \bE^x [\exp (\sum_{x\in \bZ^d} V_\omega(x)
l_t(x)  ) ]\rangle  \\
\; &=\;\bE^x [\exp (\sum_{x\in \bZ^d} G(l_t(x))
 ) ] \\
\; &=\;\bE^x [\exp G(t) +t\;\sum_{x\in
\bZ^d}\;\dfrac{1}{t}\left[G(L_t(x)t)- L_t(x)G(t)   \right]
 ) ].
\end{align*}
The next step is to represent the expectation value above as a Laplace integral for the occupation time measure, to apply large deviation principles   and Varadhan`s Lemma to obtain 
\begin{align*}
\langle& u(x,t)\rangle \\
\;&=\; \exp(G(t)   +o(t)) \int_{\mathfrak{M}_1(\bZ^d)}
e^{- t\;\sum_{x\in
\bZ^d}\;\frac{1}{t}\left[G(L_t(x)t)- L_t(x)G(t)   \right]}\;  \bP_0[L_T\in d\eta]  \\
 \;& =  \; \exp \left(G(t) -t\; \alpha(t)^{-2} \chi(1 +o(1))\right)
\end{align*}
as $t \rightarrow \infty$.
Nevertheless a rigorous implementation of the argument above is nontrivial (a good guess of $\alpha_t$ is necessary) and has to be
proven in four steps:
\begin{itemize}
\item making the space finite (but still time-dependent),
\item using a Fourier expansion and scaling properties,
\item removing the time-dependence of the box (compactification),
\item applying the large deviation arguments.
\end{itemize}
As a consequence of the technical problems resulting from the mathematical  implementation   at least up to now an unified approach  
treating all single site distributions at once  does not exist. \\[3mm]
The strategy used in the  proof below combines ideas from spectral and probability theory. 
The toehold proven in Section 4 is to restrict the estimates of $\langle u(x,t)\rangle$ and $\widehat{N} (t)$  to a cube $\Lambda= \Lambda(t)$ of time-dependent side length $L=L(t) $ and to study
\begin{align}   \label{restriction to box}
\langle \exp  \left(-t\; E_1\left(H_{\Lambda}^D  \right)\right)\rangle &\;=
\;  \langle \: \sup_{p\in \mathfrak{M}_1(\Lambda) }\exp
\left(-t\;\left[\left(\sqrt{p}|-\Delta_{\Lambda}^D\sqrt{p}
\right)+\left(\sqrt{p}|V_{\omega}\sqrt{p} \right)\right]\right)\rangle.
\end{align} 
with $E_1(H_{\Lambda }^D )=\inf \sigma(H_{\Lambda }^D )$ and  the set of probability measures  with support contained  in $\Lambda$
\begin{align}\label{ProbMeasures}
\mathfrak{ M}_1(\Lambda) &:=\{p\in \mathfrak{ M}_1(\bZ^d):\supp p \subset \Lambda \}.
\end{align}
In (\ref{restriction to box}) two  
competing effects are coupled:\\
\begin{itemize}
\item High productivity in $\left(\sqrt{p}|V_{\omega}\sqrt{p} \right) $ with respect to $V_{\omega}$    
\begin{center}
versus  
\end{center}
Small probability of extreme productive values of the potential \\
\item   High productivity in $\left(\sqrt{p}|V_{\omega}\sqrt{p} \right) $ with respect to $p$  
\begin{center}
versus 
\end{center}
Small probability of the occupation time encoded in $\left(\sqrt{p}|-\Delta_{\Lambda}^D\sqrt{p}\right)$.  \\
\end{itemize} 
To prove Theorem \ref{main result} we want to use, that the optimal $p$ balancing between the two competing effects above satisfies
\begin{itemize}
 \item {\normalsize $p$ is concentrated on a small cube $\Lambda_{\ell}\subset \Lambda$},
 \item {\normalsize $p$ is relatively uniform on $\Lambda_{\ell}$}.
\end{itemize} 
The proof of the lower bound  of  Theorem \ref{main result}  in Section \ref{section lower bound} is elementary. We can interchange the supremum and the expectation value in (\ref{restriction to box}) and obtain an effective medium problem. By restricting   to  a subset $\mathfrak{D}\subset \mathfrak{M}_1(\Lambda) $  and optimizing with respect to $\mathfrak{D}$ we obtain the lower bound.\\
 The upper bound of (\ref{restriction to box}) proven in Section 3 is slightly more difficult. 
We have to control all $p\in  \mathfrak{M}_1(\Lambda) $ and
it is not possible to interchange the supremum and the expectation value in (\ref{restriction to box}).  The first step in the proof is the classification of $p\in \mathfrak{M}_1(\Lambda)$ in Definition \ref{classification} below.  
From the spectral theoretic point  of view Definition \ref{classification} corresponds to a classification with respect to the kinetic energy while from the stochastic point of view the classification is a down to earth variant of the contraction principle concerning the asymptotic probability of the occupation time measure $p$. 
The problem is then to estimate  
 \begin{align}  
\langle \:& \sup_{p\in  \mathfrak{F}(\ell)}\exp \left( t  \sum_{x\in \Lambda}
p(x)V(x)\right)\:\rangle 
\end{align}
solved by an effective medium theory.   
Finally in Section 4 we   prove that all occuring error terms are negligible compared to the first correction of $G(t)$ given by $t\inf_{\ell\in \bN}\chi^{\pm}_{\ell}(t) $.\\[1mm]
Let us summarize the current state of research. We discussed quite at lot of publications based on probabilistic and on spectral theoretic methods.   
The spectral   approach is close to the physical intuition, but the assumptions are restrictive. Moreover important aspects of the phenomenology get lost. This  concerns
the dependence of the IDS on the single site distribution, e.g.  the logarithmic correction for fat tail distributions. 
The probabilistic approach deals all single site distributions and obtain sharp asymptotics. Nevertheless an unified approach systematically explaining the relevant effects like the transition from the quantum to the classical regime does not exist. Symptomatically  the probabilistic publications are motivated by spectral theorectic heuristics, while the goal of the formal proof consists in guessing a good scale function, s.t. a large deviation principle can be applied. The intrinsic motivation of the scaling remains unclear.\\
While not obtaining the sharp asymptotics at least partially Theorem \ref{main result}  resolves some of the questions discussed above. The problem is discussed in an unified setting.   The distinction between  quantum and classical regime is a natural consequence of the variational description.  
The elaborated large deviation techniques for path integral measures as well as the preparations to apply them are avoided. Finally an important motivation for our approach is understanding the structural relation between the probabilistic and spectral theoretic methods.

%%%%%%%%%%%%%%%%%%%%%%%%%%%%%%%%%%%%%%%%%%%%%%%%%%%%%%%%%%%%%%%%%%%%%%%%%%%%%%%%%%%%%%%%%%%
%
%
%                                   Section 2
%
%
%%%%%%%%%%%%%%%%%%%%%%%%%%%%%%%%%%%%%%%%%%%%%%%%%%%%%%%%%%%%%%%%%%%%%%%%%%%%%%%%%%%%%%%%%%%
 
\section{The lower bound}
\label{section lower bound}
\noindent
Assuming the hypotheses of Theorem \ref{main result} we want to prove  in this   section the following lower bound.
\begin{proposition} \label{lower bound}
With $S(.,.)$ as in (\ref{S}) we define
\begin{align*}
\chi^{-}_{\ell}(t)= 4d \sin^2\left(\frac{\pi}{2}\frac{1}{ \ell +1 }\right) +    S(  \ell^{-d},t)  .
\end{align*}
Then 
\begin{align*}   
  \langle u(t,0)  \rangle \;\ge\;\widehat{N} (t)    
& \ge \:\exp\left( G(t)-   t  \inf_{\ell \in \bN} \chi^{-}_{\ell}(t) \right)(1 + o(1)).
\end{align*}
\end{proposition}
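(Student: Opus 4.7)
The plan is to combine Dirichlet bracketing with an effective-medium test state, followed by Jensen's inequality for the convex function $G$. The outer inequality $\widehat{N}(t)\le \langle u(t,0)\rangle$ is immediate from the Feynman--Kac representations (\ref{Feynman-Kac}) and (\ref{Fenman-Kac Laplacetrafo IDS}), since dropping the indicator $\delta_0(x_t)$ can only increase the expectation. It therefore suffices to prove the lower bound for $\widehat{N}(t)$.

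For any finite cube $\Lambda_\ell$ of side length $\ell$, Dirichlet decoupling makes $\Lambda\mapsto\langle\Tr e^{-tH_\Lambda^D}\rangle$ superadditive, so $\widehat{N}(t)\ge|\Lambda_\ell|^{-1}\langle\Tr e^{-tH_{\Lambda_\ell}^D}\rangle$, and retaining only the ground-state contribution in the trace gives
\begin{align*}
\widehat{N}(t)\;\ge\;\ell^{-d}\,\langle e^{-tE_1(H_{\Lambda_\ell}^D)}\rangle.
\end{align*}
Now I pick a distinguished element of the variational class in (\ref{restriction to box}): the probability measure $p_\ell(x):=\phi_0(x)^2$, where $\phi_0$ is the normalized positive Dirichlet ground state of $-\Delta_{\Lambda_\ell}^D$, with eigenvalue $E_0(\ell)=4d\sin^2\bigl(\tfrac{\pi}{2}\tfrac{1}{\ell+1}\bigr)$. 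The Rayleigh--Ritz principle gives $E_1(H_{\Lambda_\ell}^D)\le E_0(\ell)+\sum_{x\in\Lambda_\ell}p_\ell(x)V_\omega(x)$, a deterministic bound in which supremum and expectation can be interchanged (this is the effective-medium step). By independence of the $\{V_\omega(x)\}_{x\in\bZ^d}$,
\begin{align*}
\langle e^{-tE_1(H_{\Lambda_\ell}^D)}\rangle \;\ge\; e^{-tE_0(\ell)}\prod_{x\in\Lambda_\ell}\langle e^{-tp_\ell(x)V_\omega(0)}\rangle \;=\; \exp\Bigl(-tE_0(\ell)+\sum_{x\in\Lambda_\ell}G(tp_\ell(x))\Bigr).
\end{align*}

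Since $G$ is convex (as a log-moment generating function) and $p_\ell$ is a probability measure on the $\ell^d$-point set $\Lambda_\ell$, Jensen's inequality with the uniform weight yields $\sum_{x\in\Lambda_\ell}G(tp_\ell(x))\ge\ell^d\,G(t\ell^{-d})$, and the definition (\ref{S}) of $S$ rewrites the right-hand side as $G(t)-t\,S(\ell^{-d},t)$. Assembling the three displays,
\begin{align*}
\log\widehat{N}(t)\;\ge\; G(t)-t\bigl[E_0(\ell)+S(\ell^{-d},t)\bigr]-d\log\ell \;=\; G(t)-t\chi^-_\ell(t)-d\log\ell.
\end{align*}
Minimizing over $\ell\in\bN$ yields the stated bound, provided the additive error $d\log\ell$ can be absorbed into the multiplicative $(1+o(1))$ factor. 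This last step is the only (mild) obstacle: under the standing hypothesis $tg(t)\to\infty$ and the de Haan asymptotics $S(\ell^{-d},t)\sim c_g h_\rho(\ell^{-d})g(t)$, the quantity $t\chi^-_\ell(t)$ diverges whenever $\ell$ grows at most polynomially in $t$, and the asymptotics of the minimizer $\ell^*(t)$ extracted in Corollary \ref{corollary 1 main result} (quantum regime) or Corollary \ref{corollary 2 main result} (classical regime) verify a posteriori that $\ell^*(t)$ does lie in such a polynomial range, so that $d\log\ell^*(t)=o(t\chi^-_{\ell^*(t)}(t))$.
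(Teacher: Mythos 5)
Your proposal is correct and follows essentially the same route as the paper: reduce $\widehat{N}(t)$ to $|\Lambda_\ell|^{-1}\langle e^{-tE_1(H^D_{\Lambda_\ell})}\rangle$ (the paper does this via the Feynman--Kac representation with the hitting time $\tau_\Lambda$ and translation averaging, which is the same standard fact as your superadditivity of the averaged Dirichlet trace), then test $E_1$ with the free Dirichlet ground state $\phi_\ell^2$, interchange expectation and the choice of test measure, factor by independence, and apply Jensen's inequality for the convex $G$ to get $G(t)-t\chi^-_\ell(t)$. Your explicit handling of the $d\log\ell$ (respectively $\log|\Lambda|$) error, absorbed using $tg(t)\to\infty$ and the polynomial growth of $\ell^*(t)$ established in the optimization lemmas, is exactly what the paper's phrase ``choosing $l$ adequate'' tacitly invokes.
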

\noindent
Lemma \ref{lem212} below is strongly influenced by the probabilistic strategy  of   commutating  the expectation values of the random potential and of the random walk  to obtain an effective description. 
\begin{lemma}\label{lem212} Denote by  $E_1(H_{\Lambda }^D )=\inf \sigma(H_{\Lambda}^D )$ the ground state energy of $H_{\Lambda }^D$ defined in (\ref{Dirichlet-Hamiltonian}), $\Lambda=\Lambda_l(0):=\{x\in \bZ^d:|x|_{\infty}\le l\}$. Then 
\begin{align*}
\langle  \exp\left(-t\; E_1\left(H_{\Lambda}^D  \right)\right)\rangle 
\; &  \ge \;
\exp\left( G(t) - t \inf_{p\in {\mathfrak M}_1(\Lambda)} \left( \;(\sqrt{p}|-\Delta_{\Lambda}^D\sqrt{p})\;+\;  \sum_{x\in \Lambda}
p(x) S(p(x),t)\right)\right).
\end{align*}
\end{lemma}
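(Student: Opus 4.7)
The plan is to combine the Rayleigh--Ritz variational formula for $E_1(H_\Lambda^D)$ with the i.i.d.\ structure of $V_\omega$, after which the claimed bound reduces to an elementary algebraic identity. Since the ground state of $H_\Lambda^D$ may be chosen nonnegative (for the discrete Dirichlet Laplacian this follows from $(|u(x)|-|u(y)|)^2\le (u(x)-u(y))^2$, so passing from $u$ to $|u|$ does not increase the kinetic quadratic form), the min--max principle gives
\begin{align*}
E_1(H_\Lambda^D)\;=\;\inf_{p\in \mathfrak{M}_1(\Lambda)} \Bigl[(\sqrt{p}\,|\,-\Delta_\Lambda^D \sqrt{p}) \;+\; \sum_{x\in \Lambda} p(x)\, V_\omega(x)\Bigr].
\end{align*}
Fixing any $p\in \mathfrak{M}_1(\Lambda)$ and using monotonicity of the exponential, this yields the pointwise bound $\exp(-tE_1(H_\Lambda^D))\ge \exp(-t(\sqrt{p}|-\Delta_\Lambda^D\sqrt{p}))\cdot\exp(-t\sum_x p(x)V_\omega(x))$, so that averaging over the disorder factors out the deterministic Dirichlet form.

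Next, independence and identical distribution of $\{V_\omega(x)\}_{x\in\Lambda}$ together with \eqref{expmoment} yield
\begin{align*}
\Bigl\langle \exp\Bigl(-t\sum_{x\in\Lambda} p(x) V_\omega(x)\Bigr)\Bigr\rangle \;=\; \prod_{x\in\Lambda}\langle e^{-tp(x)V_\omega(0)}\rangle \;=\; \exp\Bigl(\sum_{x\in\Lambda} G(tp(x))\Bigr),
\end{align*}
with the convention $G(0)=0$ handling the sites at which $p$ vanishes.

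The identity that links this to $S$ falls directly out of the definition \eqref{S}: multiplying $S(p(x),t)=G(t)/t-G(tp(x))/(tp(x))$ by $p(x)$ and summing over $x\in\Lambda$, together with $\sum_x p(x)=1$, gives
\begin{align*}
\sum_{x\in\Lambda} G(tp(x))\;=\;G(t)\;-\;t\sum_{x\in\Lambda} p(x)\, S(p(x),t).
\end{align*}
Substituting back, for every $p\in \mathfrak{M}_1(\Lambda)$ one obtains
\begin{align*}
\langle \exp(-tE_1(H_\Lambda^D))\rangle \;\ge\; \exp\Bigl(G(t)-t\bigl[(\sqrt{p}|-\Delta_\Lambda^D\sqrt{p})+\sum_{x\in\Lambda} p(x) S(p(x),t)\bigr]\Bigr),
\end{align*}
and the lemma follows on taking the supremum over $p$, i.e.\ the infimum of the bracket.

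The one piece of bookkeeping worth noting --- and not really an obstacle --- is the behaviour of $p(x)S(p(x),t)$ at $p(x)=0$: both $p(x)G(t)/t$ and $G(tp(x))/t$ vanish there, so the term-by-term manipulation above remains legitimate even when many sites of $\Lambda$ carry no $p$-mass. The substantive content of the lemma is thus the interchange of the disorder average with the Rayleigh--Ritz supremum via the trivial $\langle \sup\rangle\ge \sup\langle\cdot\rangle$; all remaining steps are a pure factorisation of i.i.d.\ exponentials.
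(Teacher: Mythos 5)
Your proof is correct and follows essentially the same route as the paper: bound $E_1(H_\Lambda^D)$ from above by the Rayleigh--Ritz quotient of the test function $\sqrt{p}$, exponentiate, use independence to factorise the disorder average into $\exp\bigl(\sum_{x\in\Lambda}G(tp(x))\bigr)$, rewrite this via the definition of $S$ and $\sum_x p(x)=1$, and optimise over $p\in\mathfrak{M}_1(\Lambda)$. The only difference is cosmetic: you fix $p$ first and take the supremum at the end, while the paper writes $\langle\exp(-tE_1)\rangle\ge\sup_p\langle\cdots\rangle$ up front; your remark on the nonnegativity of the ground state (giving equality in the variational formula) is fine but not needed, since only the trivial direction $E_1\le(\sqrt p\,|H_\Lambda^D\sqrt p)$ enters the lower bound.
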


\begin{proof} 
\begin{align*}
\langle\exp& \left(-t\; E_1\left(H_{\Lambda}^D  \right)\right)\rangle\\
%&= \;   \langle\: \sup_{p\in {\mathfrak M}_1(\Lambda)}\exp %\left(-t\;\left[\left(\sqrt{p}|-\Delta_{\Lambda}^D\sqrt{p}
%\right)+\left(\sqrt{p}|V_{\omega}\sqrt{p}
%\right)\right]\right)\rangle\\[4mm]
&\ge    \: \sup_{p\in {\mathfrak M}_1(\Lambda)}  \langle\exp \left(-t\; \left(\sqrt{p}|-\Delta_{\Lambda}^D\sqrt{p}
\right)+\left(\sqrt{p}|V_{\omega}\sqrt{p} \right)\right)\rangle \\[4mm] 
&=     \; \sup_{p\in {\mathfrak M}_1(\Lambda)}  \exp \left(-t\; \left(\sqrt{p}|-\Delta_{\Lambda}^D
\sqrt{p} \right)\right)\;\prod_{x\in \Lambda} \langle\exp \left(-
tp(x)V_{\omega}(x)  \right)\rangle\\[4mm]
&=  \sup_{p\in {\mathfrak M}_1(\Lambda)} \exp \left(-t\;(\sqrt{p}|-\Delta_{\Lambda}^D \sqrt{p})\;+\;\sum_{x\in \Lambda}
G(p(x) t) \right) \\[4mm]
&= \;
\exp\left( G(t) - t \inf_{p\in {\mathfrak M}_1(\Lambda)} \left( \;(\sqrt{p}|-\Delta_{\Lambda}^D\sqrt{p})\;+\;  \sum_{x\in \Lambda}
p(x) S(p(x),t)\right)\right).
\end{align*}
\end{proof}
\noindent
The next problem is to distribute the probability mass $\|p\|_1=1$ of the occupation time measure  $p\in {\mathfrak M}_1(\Lambda)$  s.t. the competition between diffusion and  particle creation encoded in
\begin{align}  \label{diffpot}
  \;(\sqrt{p}|-\Delta_{\Lambda}^D\sqrt{p})\;+\;  \sum_{x\in \Lambda}
p(x) S(p(x),t)
\end{align}
 is minimized. We have to balance between:  
\begin{itemize}
\item The energy form of the discrete Laplacian  
\begin{align}  \label{diffpot1}
  \;(\sqrt{p}|\Delta_{\Lambda}^D\sqrt{p}),  
\end{align}
i.e. the rate function of the occupation time measure   encoded in $p\in {\mathfrak M}_1(\Lambda)$  \cite{Ho00}. If $t>0$ is large, it is much more likely that the local time  is smeared over a large region than being localized in small subset of $\Lambda$. 
\item The second term in (\ref{diffpot}) expresses the opposite effect.
Due to convexity of $G(t)$ we have 
\begin{align}  \label{diffpot2}
   \sum_{x\in \Lambda}  G(p(x)t) \;\le\;   \sum_{x\in \Lambda} p(x)  G(t) \;=\;G(t),
\end{align}
respectively
\begin{align}  \label{diffpot3}
   \sum_{x\in \Lambda} p(x) S(p(x),t) \;\ge\; 0
\end{align}
for general $p\in {\mathfrak M}_1(\Lambda)$ and   equality 
if there is a $x \in \Lambda$ s.t. $p=\delta_x$.
The function $S(\lambda,t)$ measures the deviation of the productivity of $p \in {\mathfrak M}_1(\Lambda)$ compared to the maximal productivity given by $G(t)$.  
\end{itemize}  
To deal the competition  between diffusion and  particle creation  s.t. the lower and upper bound are in good agreement we restrict ${\mathfrak M}_1(\Lambda)$ to the subset $\mathfrak{D} \subset {\mathfrak M}_1(\Lambda)$ defined  below.
\begin{definition}\label{def221}
Denote by $-\Delta_{\Lambda_{\ell}}^D  $, $1\le \ell\le l $, the restriction of the discrete Laplacian to the box $\Lambda_{\ell}=\Lambda_{\ell}(0):=\{x\in \bZ^d:|x|_{\infty} \le \ell\} $, by
\begin{align*}
&\phi_{\ell}:\Lambda_{\ell}\rightarrow [0,\infty)\\
&\phi_{\ell}(x)= \prod_{j=1}^d \left(\frac{2}{{\ell}+1}\right)^{\frac{1}{2}}
\sin\left(\frac{x_j\pi}{{\ell}+1}\right)
\end{align*}
its ground state and by
\begin{align*}
E_1 \left( -\Delta_{\Lambda_{\ell}}^D \right)\;&=\;   \; 4d  \sin^2\left(\frac{ \pi}{2}\;\frac{1}{{\ell}+1}\right) 
\end{align*}
its ground state energy.
Then,
\begin{align*}
\mathfrak{D} := \{\phi_{\ell}^2:\:1\le {\ell}\le l \}.
\end{align*}
\end{definition}  
\noindent
The set $\mathfrak{D}$  contains relatively uniformly distributed prototypes of occupation time measures localized in a small volume. Inserting these intermittency peak candidates in Lemma \ref{lem212} we obtain the following estimate.

\begin{proposition}\label{prop1}
Denote by  $E_1(H_{\Lambda }^D )=\inf \sigma(H_{\Lambda }^D )$ the ground state energy of $H_{\Lambda }^D$. Then,
 \begin{align*}
 \langle \exp\left(-t\; E_1\left(H_{\Lambda}^D  \right)\right)\rangle\;  
&  \ge \;
\exp\left( G(t) - t \inf_{1\le {\ell}\le l}   \chi^{-}_{\ell}(t) \right). 
\end{align*}
\end{proposition}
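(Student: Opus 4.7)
The plan is to test the variational inequality of Lemma \ref{lem212} against the explicit family of ground-state densities $\mathfrak{D}$ from Definition \ref{def221}. Since $\mathfrak{D}\subset\mathfrak{M}_1(\Lambda)$, restricting the infimum in Lemma \ref{lem212} to $p=\phi_\ell^2$, $1\le \ell\le l$, can only increase its value, so I obtain immediately
\begin{align*}
\langle\exp(-tE_1(H_\Lambda^D))\rangle \;\ge\; \exp\Big(G(t) - t \min_{1\le \ell\le l}\Psi(\phi_\ell^2)\Big),
\end{align*}
where $\Psi(p) := (\sqrt{p},-\Delta_\Lambda^D\sqrt{p}) + \sum_{x\in\Lambda} p(x) S(p(x),t)$. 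The task then reduces to verifying $\Psi(\phi_\ell^2) \le \chi^-_\ell(t)$ for every $\ell$.

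The kinetic part is exact and immediate. Because $\phi_\ell$ is supported in $\Lambda_\ell\subset\Lambda$ and vanishes on $\partial\Lambda_\ell$, extension by zero yields $(\phi_\ell,-\Delta_\Lambda^D\phi_\ell) = (\phi_\ell,-\Delta_{\Lambda_\ell}^D\phi_\ell) = E_1(-\Delta_{\Lambda_\ell}^D) = 4d\sin^2(\pi/(2(\ell+1)))$, reproducing the first summand of $\chi^-_\ell(t)$ on the nose.

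The potential part rests on concavity and monotonicity of $S(\cdot,t)$. Setting $f(s):= G(s)/s$, one has $S(\lambda,t) = f(t) - f(\lambda t)$, and convexity of the cumulant generating function $G$ together with $G(0)=0$ makes $f$ both non-decreasing and convex, so $S(\cdot,t)$ is concave and non-increasing. Jensen's inequality for the probability measure $\phi_\ell^2$ then gives
\begin{align*}
\sum_x \phi_\ell^2(x) S(\phi_\ell^2(x),t) \;\le\; S\Big(\sum_x \phi_\ell^4(x),t\Big),
\end{align*}
while Cauchy--Schwarz produces $\sum_x \phi_\ell^4(x) \ge 1/|\Lambda_\ell| = \ell^{-d}$, where the convention $|\Lambda_\ell|=\ell^d$ is the one implicit in the eigenvalue formula $4d\sin^2(\pi/(2(\ell+1)))$. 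Monotonicity of $S$ then upgrades this to $\sum_x\phi_\ell^2(x)S(\phi_\ell^2(x),t) \le S(\ell^{-d},t)$. Adding the kinetic estimate yields $\Psi(\phi_\ell^2) \le \chi^-_\ell(t)$ for each $\ell$, and taking the minimum closes the argument.

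The proof is essentially elementary: no asymptotic or de Haan-class input is needed at this stage. The only conceptual point is the concavity of $S(\cdot,t)$, which is what allows Jensen to collapse a sum over all sites to an evaluation at the representative scale $\ell^{-d}$ of the probability mass of $\phi_\ell^2$; this is not really an obstacle. The de Haan asymptotics $S(\ell^{-d},t) = c_gh_\rho(\ell^{-d})g(t)(1+o(1))$ will enter only later, when Proposition \ref{lower bound} is passed to the form stated in Theorem \ref{main result}, and the genuine difficulty is postponed to the upper bound in Section 3, where the supremum and expectation no longer commute.
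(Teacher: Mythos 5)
Your overall architecture (restrict the infimum of Lemma \ref{lem212} to the trial measures $\phi_\ell^2\in\mathfrak{D}$, compute the kinetic term exactly, then bound the potential term by $S(\ell^{-d},t)$) is exactly the paper's, and the kinetic part is fine. The gap is in the potential part: you justify Jensen by claiming that convexity of $G$ together with $G(0)=0$ makes $f(s)=G(s)/s$ convex, hence $S(\cdot,t)=f(t)-f(\cdot\,t)$ concave. Convexity of $G$ with $G(0)=0$ only gives that $f$ is non-decreasing; it does \emph{not} give convexity of $f$, and in fact convexity fails in cases that are central to this paper. For any bounded, non-constant $V_\omega(0)\ge 0$ (e.g.\ the uniform distribution on $[0,1]$, which is precisely the example producing the logarithmic correction in (\ref{log-factor})), $f(t)=G(t)/t$ is non-decreasing, non-constant and bounded above by $0$; a convex, non-decreasing function on $[0,\infty)$ that is bounded above must be constant, so $f$ cannot be convex there (concretely, $G(t)/t\sim-\log t/t$ for the uniform case, which is eventually concave). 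Hence $S(\cdot,t)$ need not be concave, and your step $\sum_x\phi_\ell^2(x)S(\phi_\ell^2(x),t)\le S\bigl(\sum_x\phi_\ell^4(x),t\bigr)$ is unjustified as stated; the subsequent Cauchy--Schwarz and monotonicity steps are correct but rest on this unsupported inequality.

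The inequality you want is true, but the correct route — the one the paper takes — applies Jensen to the genuinely convex function $G$ itself, with uniform weights over the $\ell^d$ sites of $\Lambda_\ell$, rather than to $S(\cdot,t)$ with weights $p$. Writing $p(x)S(p(x),t)=p(x)G(t)/t-G(p(x)t)/t$ and using $\sum_{x\in\Lambda_\ell}G(p(x)t)\ge \ell^d\,G\bigl(\ell^{-d}t\sum_x p(x)\bigr)=\ell^d G(\ell^{-d}t)$ gives, for \emph{any} $p\in\mathfrak{M}_1(\Lambda_\ell)$,
\begin{align*}
\sum_{x\in\Lambda_\ell}p(x)S(p(x),t)\;\le\;\frac{G(t)}{t}-\frac{\ell^d G(\ell^{-d}t)}{t}\;=\;S(\ell^{-d},t),
\end{align*}
which needs no concavity of $S$, no monotonicity argument, and no evaluation of $\sum_x\phi_\ell^4$. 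Replacing your Jensen-on-$S$ step by this Jensen-on-$G$ step repairs the proof and reproduces the paper's argument; as you note, the $|\Lambda_\ell|=\ell^d$ counting convention and the absence of any de Haan input at this stage are consistent with the paper.
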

\begin{proof} 
Lemma \ref{lem212} and Definition \ref{def221} yield that
\begin{align*}
&\langle \exp\left(-t\; E_1\left(H_{\Lambda}^D  \right)\right)\rangle\; \\[3mm]
%& \qquad \ge \;
%\exp\left( G(t)  -t \inf_{p\in {\mathfrak M}_1(\Lambda)} \left( %\;(\sqrt{p}|-\Delta_{\Lambda}^D\sqrt{p})\;+\;
%  \sum_{x\in \Lambda} p(x) S(p(x),t)\right)\right)\\[3mm]
&\qquad \ge  \;
\exp\left( G(t) - t \; \inf_{p\in {\mathfrak D}} \left( \;(\sqrt{p}|\Delta_{\Lambda}^D\sqrt{p})\;+\;
  \sum_{x\in \Lambda}p(x) S(p(x),t)\right)\right). 
\end{align*}
In particular choosing  $\sqrt{p}=\phi_{\ell}$ with
$\phi_{\ell}: \Lambda_{\ell} \rightarrow [0,\infty)$
and $E_1\left( -\Delta_{\Lambda_{\ell}}^D \right)$ as in Definition \ref{def221}, we have
\begin{align*}
(\phi_{\ell}|-\Delta_{\Lambda}^D\phi_{\ell}) \; = \;E_1\left( -\Delta_{\ell}^D \right)
\|\phi_{\ell}\|_2^2 \; = \;  \; 4d  \sin^2\left(\frac{ \pi}{2}\;\frac{1}{{\ell}+1}\right)   .
\end{align*}
Furthermore due to the convexity of $G(t)$ and Jensen inequality we can estimate for 
 $p\in {\mathfrak M}_1(\Lambda_l)$ 
\begin{align*}
  \sum_{x\in \Lambda}p(x) S(p(x),t)  \; 
%=& \;  \sum_{x\in
%\Lambda_{\ell}}p(x)\left(\frac{G(t)}{t} - \frac{G(p(x)t)}{p(x)t}\right) \\[3mm]
%=& \; G(t)/t - t^{-1}{\ell}^d\left({\ell}^{-d}\sum_{x\in %\Lambda_{\ell}}G(p(x)t)\right)   \\[3mm]
 \le \; G(t)/t-  {\ell}^d G \left({\ell}^{-d}t\sum_{x\in \Lambda_{\ell}} p(x)    \right) /t  
%=& \; \frac{G(t)}{t}- \frac{G( {\ell}^{-d}t)}{{\ell}^{-d}t} \\[3mm]
= \;     S({\ell}^{-d},t),
\end{align*}
respectively
\begin{align*}
 \langle \exp\left(-t\; E_1\left(H_{\Lambda}^D  \right)\right)\rangle\;  
&  \ge \;
\exp\left( G(t) - t \inf_{1\le {\ell}\le l}    \chi^{-}_{\ell}(t) \right). 
\end{align*}
\end{proof} 
\begin{proof}[Proof of Proposition \ref{lower bound}]
Combining the Feynman-Kac representation of $\langle u(t,0)  \rangle$ in  (\ref{Feynman-Kac}) and $ \widehat{N} (t)$ in (\ref{Fenman-Kac Laplacetrafo IDS}) $\langle u(t,0)\ge \widehat{N} (t)$ is obvious.
%\begin{align*}
%\langle u(t,0)  \rangle   &=  \langle\bE^0\left[\exp\left(\int_0^t %V_{\omega}(x_s)ds\right)\right]\rangle \\
%  &\ge \  \langle \bE^0\left[\exp\left(\int_0^t
%V_{\omega}(x_s)ds\right)\delta_{ 0}(x_t)\right]\rangle \\
% &=   \widehat{N} (t) .
%\end{align*}
With the hitting time  defined by
$
\tau_{\Lambda}(x_s):= \inf_{s\ge  0} \left[ x_s\in \Lambda^c \right].
$
and averaging with respect to $\Lambda $ due to ergodicity, we obtain the lower bound  
\begin{align*}
 \widehat{N} (t)
%\;&= \; |\Lambda |^{-1}\; \sum_{x\in \Lambda } \langle\: %\bE^x\left[\exp\left(\int_0^t
%V_{\omega}(x_s)ds\right)\: \delta_{ x}(x_t)\right]\rangle \\ 
 & \ge \;|\Lambda |^{-1}\;\sum_{x\in \Lambda } \langle \:\bE^x\left[\exp\left(-\int_0^t
V_{\omega}(x_s)ds\right)\: \delta_{ x}(x_t)\: \mathbf{1}_{ \tau_{\Lambda }>t }\right]\rangle \\ 
%&= \; |\Lambda |^{-1}\; \langle \: \tr\left(\exp  \left( -t  H_{\Lambda }^D    \right)\right)  \rangle \\ 
& \ge \; |\Lambda |^{-1}\; \langle \:\exp  \left(- t  E_1 \left(H_{\Lambda }^D   \right)  \right)  \rangle .
\end{align*}
Choosing $l$ adequate and applying Proposition \ref{prop1} we obtain
\begin{align*}
 \widehat{N} (t)
 %  &\ge \; \;|\Lambda|^{-1}   \exp\left( G(t) - t \inf_{1\le {\ell}\le l} \chi^{-}_{\ell}(t)\right) \\[3mm]
& \ge \;  \exp\left( G(t) - t \inf_{1\le {\ell}\le l} \chi^{-}_{\ell}(t)(1 + o(1))\right)   .
\end{align*}

\end{proof}

%%%%%%%%%%%%%%%%%%%%%%%%%%%%%%%%%%%%%%%%%%%%%%%%%%%%%%%%%%%%%%%%%%%%%%%%%%%%%%%%%%%%%%%%%%%
%
%
%                                   Section 3
%
%
%%%%%%%%%%%%%%%%%%%%%%%%%%%%%%%%%%%%%%%%%%%%%%%%%%%%%%%%%%%%%%%%%%%%%%%%%%%%%%%%%%%%%%%%%%%
 
\section{The upper bound on a box} 
\noindent In this   Section we assume that the hypotheses of Theorem \ref{main result} are satisfied. We want to prove the following upper bound.
\begin{proposition}\label{upper-estimate}
Denote by 
$E_1(H_{\Lambda}^D )$ the principal eigenvalue of the Hamiltonian $H_{\Lambda}^D $ on $\Lambda=\Lambda_{l}(0)$, $l\in \bN$ defined in (\ref{Dirichlet-Hamiltonian}). With  the Faber-Krahn constant $c_{FK}$,   $\gamma =     c_{FK}/(12 \pi)^{2} $,
\begin{align*} \label{chi(1)}
 \chi^+_1(t):= \max_{0.5< h \le  1} \min \left[2d ( 1-2\sqrt{1- h}  ),\gamma /2    +(1- h)S((1- h),t))\right]
\end{align*}
 and
\begin{align*}
\chi^{+}_{\ell}(t)= \gamma \sin^2\left(\frac{\pi}{2}\frac{1}{ \ell +1 }\right) +    S((4 \ell)^{-d},t)/4 ,
\end{align*}
$\ell \ge 2$, we have
\begin{align*}
 &\langle \exp  ( -t \; E_1 (H_{\Lambda}^D )  )  \rangle \; \; \le \;    \exp\left(G(t)    -t\inf_{\ell\in \bN}\chi^{+}_{\ell}(t)  +C|\Lambda|
\right) .
\end{align*}
\end{proposition}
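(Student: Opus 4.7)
The starting point is the variational formula
\begin{align*}
\exp(-tE_1(H_\Lambda^D)) = \sup_{p\in \mathfrak{M}_1(\Lambda)}\exp\Bigl(-t(\sqrt p\,|\,-\Delta_\Lambda^D\sqrt p) - t\sum_{x\in\Lambda}p(x)V_\omega(x)\Bigr).
\end{align*}
Unlike in the lower bound, the supremum cannot be exchanged with the expectation, so the plan is to partition $\mathfrak{M}_1(\Lambda)$ into pieces $\mathfrak{F}(\ell)$, $\ell\in\mathbb N$, indexed by the ``effective localization length'' of $p$, and to bound each piece separately.

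\textbf{Step 1: Classification.} For $\ell\ge 2$ declare $p\in \mathfrak{F}(\ell)$ when the smallest sub-cube $\Lambda_\ell(y)\subset\Lambda$ carrying at least a fixed fraction (say $3/4$) of the mass of $p$ has side length comparable to $\ell$; for $\ell=1$, $p$ is quasi-concentrated at a single site, with mass $h\in(1/2,1]$ on its peak site. This is essentially Definition~\ref{classification} referred to in the introduction.

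\textbf{Step 2: Kinetic energy lower bound on $\mathfrak{F}(\ell)$.} For $\ell\ge 2$ I would invoke the discrete Faber--Krahn inequality: any $p$ which places mass $\ge 3/4$ on a cube of side $\ell$ forces the support of $\sqrt p$ to have Dirichlet eigenvalue at least $c_{FK}\ell^{-2}$ (up to the constant $\gamma = c_{FK}/(12\pi)^2$), giving
\begin{align*}
(\sqrt p\,|\,-\Delta_\Lambda^D\sqrt p)\;\ge\;\gamma\sin^2\!\Bigl(\tfrac{\pi}{2}\tfrac{1}{\ell+1}\Bigr).
\end{align*}
For $\ell=1$, an elementary computation using $\|\sqrt p\|_2=1$ and the peak mass $h$ produces the bound $(\sqrt p|-\Delta\sqrt p)\ge 2d(1-2\sqrt{1-h})$.

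\textbf{Step 3: Potential contribution on $\mathfrak{F}(\ell)$.} On the carrying cube $\Lambda_\ell(y)$, the measure $p$ is, by construction, not too degenerate; by Jensen applied to the convex function $G$, uniform redistribution of mass on $\Lambda_{4\ell}(y)$ only lowers the effective potential cost. Thus
\begin{align*}
\sup_{p\in\mathfrak{F}(\ell)}\Bigl(-t\sum_x p(x)V_\omega(x)\Bigr)\;\le\; -\frac{1}{4}\sum_{x\in\Lambda_{4\ell}(y)}(4\ell)^{-d}\,t\,V_\omega(x) + \text{(small remainder)}.
\end{align*}
Taking the worst $y$ replaces the sum by a minimum over translates $y$ of $\Lambda_{4\ell}$; this is where the ``effective medium'' averaging enters.

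\textbf{Step 4: Averaging.} Using independence of $\{V_\omega(x)\}$ cube by cube and the identity $\langle\exp(-sV_\omega(0))\rangle = \exp(G(s))$,
\begin{align*}
\Bigl\langle\exp\Bigl(-\frac{t}{4(4\ell)^d}\sum_{x\in\Lambda_{4\ell}(y)}V_\omega(x)\Bigr)\Bigr\rangle = \exp\!\bigl((4\ell)^d\,G(t/(4(4\ell)^d))\bigr),
\end{align*}
which, rewritten via the definition of $S$, contributes $G(t) - tS((4\ell)^{-d},t)/4$ to the exponent (up to the factor $1/4$ coming from Step~3). A union bound over the $O(|\Lambda|)$ translates $y$ produces the additive term $C|\Lambda|$.

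\textbf{Step 5: Case $\ell=1$.} Here I cannot compress $p$ onto a cube, so the trade-off is done directly: either the peak mass $h$ is close to $1$, forcing a rare single-site event of probability $\sim\exp((1-h)G(t))-(1-h)\cdot\text{stuff}$, or $h$ is bounded away from $1$, yielding the Faber-Krahn-free bound $2d(1-2\sqrt{1-h})$ on the kinetic term. The $\max_h\min[\cdot,\cdot]$ structure of $\chi^+_1(t)$ is exactly the resulting saddle point.

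\textbf{Step 6: Assembly.} Combining the uniform-in-$p$ bounds on each $\mathfrak{F}(\ell)$ and taking the infimum over $\ell$ produces the claimed estimate; the $C|\Lambda|$ error absorbs the $O(|\Lambda|)$ translate count from Step~4 and the enumeration over $\ell$ up to the polynomial cut-off $\ell\le l$.

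The main obstacle I anticipate is Step~3: controlling the supremum of $-t\sum p(x)V_\omega(x)$ over the continuum of $p\in \mathfrak{F}(\ell)$ \emph{without} sacrificing too much, so that after averaging one still recovers $S((4\ell)^{-d},t)/4$ and not a dimensionally worse exponent. The convexity/Jensen reduction to a uniform distribution on $\Lambda_{4\ell}(y)$, combined with choosing the discretization scale $4\ell$ rather than $\ell$ (to absorb boundary errors), is the technical heart of the argument.
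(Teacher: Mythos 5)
Your Step 3 is where the argument breaks. You claim that, by convexity of $G$ and Jensen, ``uniform redistribution of mass on $\Lambda_{4\ell}(y)$ only lowers the effective potential cost'', i.e. that
\begin{align*}
\sup_{p\in\mathfrak{F}(\ell)}\Bigl(-t\sum_{x}p(x)V_\omega(x)\Bigr)\;\le\;-\tfrac14\sum_{x\in\Lambda_{4\ell}(y)}(4\ell)^{-d}\,t\,V_\omega(x)+\text{(small remainder)}.
\end{align*}
This inequality is false pathwise: for fixed $\omega$ the supremum over $p$ pushes mass onto the \emph{smallest} values of $V_\omega$ compatible with the class constraint, which is strictly more favorable (more negative) than the uniform average over any cube; Jensen on $G$ helps only after taking the $\omega$-expectation and only in the direction exploited in the lower bound (Proposition \ref{prop1}). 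For the upper bound one cannot dominate the adversarial $p$ by a single uniform profile. The paper instead (i) classifies $p$ by the value of the kinetic energy $(\sqrt p\,|\,-\Delta^D_\Lambda\sqrt p)$ (Definition \ref{classification}), not by a geometric carrying cube as in your Step 1; (ii) shows via the level-set truncation Lemma \ref{lem226}, the Faber--Krahn inequality and Corollary \ref{peakmasse} that any $p\in\mathfrak{F}(\ell)$ must carry mass at least about $1/2$ in pieces of height at most $h=(4\ell)^{-d}$, i.e.\ spread over at least $N_\ell\approx(4\ell)^d/2$ sites; (iii) uses the rearrangement inequality of Lemma \ref{lem227} to bound the potential sum from below by $(1-h(N_\ell-1))V_1^*+h\sum_{j=2}^{N_\ell}V_j^*$, where $V_j^*$ are the order statistics of $V_\omega$ over the \emph{whole} box $\Lambda$ (the low sites need not lie in any single sub-cube); and (iv) union-bounds over the unordered location of these sites, producing the factor $|\Lambda|\binom{|\Lambda|}{N_\ell-1}$ which is exactly the $e^{C|\Lambda|}$ in the statement, before applying independence and convexity to obtain $G(t)-tS((4\ell)^{-d},t)/4$. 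Your accounting in Step 4 (only $O(|\Lambda|)$ cube translates) also misses where the $C|\Lambda|$ term actually comes from, but that is secondary to the Jensen-direction error.

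A further gap of the same kind sits in your Steps 1--2: with a geometric classification you would still have to \emph{prove} the kinetic lower bound $\gamma\sin^2\bigl(\frac{\pi}{2}\frac{1}{\ell+1}\bigr)$ for measures that merely put $3/4$ of their mass in a cube of side $\ell$ (the mass outside prevents a direct application of Faber--Krahn, which needs a Dirichlet condition), and conversely you would have to show that low-kinetic-energy measures are geometrically confined to a single cube of side $\approx 4\ell$ for your Step 3 to even make sense. The paper's kinetic-energy classification avoids both issues by making the kinetic bound true by definition of $\mathfrak{F}(\ell)$ and by never localizing the low-potential sites geometrically. Your Step 5 for $\ell=1$ is essentially the paper's split into $\mathfrak{F}_h(1)$ and its complement (Lemmas \ref{l=0,n=1} and \ref{l=0,n=0}) and is fine in outline.
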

\noindent Trying to transfer  the proof  of Proposition \ref{lower bound} to  Proposition \ref{upper-estimate} we observe two difficulties. We  have to control each $p\in \mathfrak{M}_1(\bZ^d)$ and it is not possible to interchange  in (\ref{restriction to box}) the expectation value  and the supremum.
To solve the first problem we restrict  to the cube $\Lambda=\Lambda_l(0):=\{x\in \bZ^d:|x|_{\infty}\le l\}$ with $l\in \bN$ to be choosen in the next section and define the following classification of $p\in \mathfrak{M}_1(\Lambda)$.
\begin{definition}[Classification of the occupation time measures ]  \label{classification}
$\:$\\ 
With $ \gamma$ as in Proposition \ref{upper-estimate}
we define
\begin{align*}
\mathfrak{F}(1) &:=\left\{ p\in \mathfrak{M}_1(\Lambda)
:  \gamma \sin^2\left(\frac{\pi}{4} \right) <\left(\sqrt{p}|-\Delta_{\Lambda}^D\sqrt{p}\right)
\le 2d  \right)\}, 
\end{align*}
and,  if $\ell\ge 2$,
\begin{align*}
\mathfrak{F}(\ell) &:=\left\{ p\in  \mathfrak{F}(1)^c 
: \gamma \sin^2\left(\frac{\pi}{2}\frac{1}{ \ell +1 }\right) <\left(\sqrt{p}|-\Delta_{\Lambda}^D\sqrt{p}\right)
\le \gamma \sin^2\left(\frac{\pi}{2  \ell  }\right) \right\}.    
\end{align*}
\end{definition} 
\noindent
The classification of $p\in \mathfrak{M}_1(\Lambda)$ in Definition \ref{classification} is reminiscent of the contraction principle in large deviation theory \cite{Ho00}. The energy form of the Laplacian  plays the role of the projection map while $\mathfrak{F}(\ell)$ correponds to the resulting value set. \\[1mm]
We have to deal the case  $\ell=1$ separately to prove the asymptotics for the classical regime, i.e.  
\begin{align*} 
\langle u(t,0 \rangle =\exp( G(t) -2dt(1+o(1))).
\end{align*} 
Moreover the argument below is   prototypical    for general  $\ell $. 
\begin{definition}
Fix $1\ge h > 0.5 $. We define
\begin{align*}
\mathfrak{F}_{ h}(1)&:= \{p\in \mathfrak{F}(1): \max_{x\in \Lambda}\:p(x) \ge  h\},\\
\mathfrak{F}_{ h}^c(1)&:= \mathfrak{F}(1)\setminus \mathfrak{F}_{ h}(1).
\end{align*}
\end{definition}
\noindent
Let us first deal the situation when $p\in \mathfrak{F}(1)$ is  "almost" a $\delta$-peak.
\begin{lemma}\label{l=0,n=1}
\begin{align*}
 \langle\: \sup_{p\in \mathfrak{F}_{ h}(1)} \exp
 \left(t\; [ (\:\sqrt{p}\:|\:\Delta_{\Lambda }^D\sqrt{p}\:
 )+ (\:\sqrt{p}\:|\:V_{\omega}\sqrt{p}  \:) ] \right)\rangle  
&\le |\Lambda|  \exp\left(G(t)- 2dt ( 1-2\sqrt{1- h}  ) \right) .
\end{align*}
\end{lemma}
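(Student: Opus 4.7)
The plan is to decouple the supremum inside the expectation by (i) extracting a uniform kinetic lower bound on all of $\mathfrak{F}_h(1)$, and then (ii) handling the remaining potential exponential by a pointwise $\max\to\sum$ estimate whose expectation is evaluated using independence. Throughout, the assumption $h>1/2$ is decisive: for each $p\in\mathfrak{F}_h(1)$ it singles out a unique site $x_0=x_0(p)\in\Lambda$ with $p(x_0)\ge h$ and forces $p(y)\le 1-h$ for every $y\ne x_0$.

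For the kinetic step I would retain only the $2d$ edges incident to $x_0$ in the Dirichlet quadratic form
\begin{equation*}
(\sqrt{p}\,|\,-\Delta_{\Lambda}^{D}\sqrt{p}) \;=\; \tfrac{1}{2}\sum_{|x-y|=1}\bigl(\sqrt{p(x)}-\sqrt{p(y)}\bigr)^{2},
\end{equation*}
with $\sqrt{p}$ extended by $0$ outside $\Lambda$ to implement the Dirichlet condition. Since every such neighbor satisfies $p(y)\le 1-h$, each retained term is at least $(\sqrt{h}-\sqrt{1-h})^{2}=1-2\sqrt{h(1-h)}\ge 1-2\sqrt{1-h}$ (the last inequality using $h\le 1$), while a missing neighbor (for $x_0$ on the boundary of $\Lambda$) contributes $p(x_0)\ge h\ge 1-2\sqrt{1-h}$. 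Summing these $2d$ contributions gives the uniform bound
\begin{equation*}
(\sqrt{p}\,|\,-\Delta_{\Lambda}^{D}\sqrt{p}) \;\ge\; 2d\bigl(1-2\sqrt{1-h}\bigr),\qquad p\in\mathfrak{F}_h(1),
\end{equation*}
which, importantly, does not depend on the location of $x_0$.

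For the potential step I would drop the mass constraint entirely: reading the exponent in its Feynman--Kac form, the inequality $\sum_x p(x) V_\omega(x)\ge \min_{x\in\Lambda} V_\omega(x)$ gives
\begin{equation*}
\sup_{p\in\mathfrak{F}_h(1)} \exp\!\Bigl(-t\sum_{x\in\Lambda} p(x) V_\omega(x)\Bigr) \;\le\; \max_{x\in\Lambda} e^{-tV_\omega(x)} \;\le\; \sum_{x\in\Lambda} e^{-tV_\omega(x)}.
\end{equation*}
Multiplying by the deterministic kinetic factor $\exp(-2dt(1-2\sqrt{1-h}))$ and taking the expectation, the i.i.d.\ structure together with the defining identity $\langle e^{-tV_\omega(0)}\rangle=e^{G(t)}$ produces $\langle\sum_{x\in\Lambda} e^{-tV_\omega(x)}\rangle=|\Lambda|\,e^{G(t)}$, which is exactly the claimed bound.

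The main obstacle is bookkeeping rather than computation. The instinctive first attempt --- a union bound over $x_0\in\Lambda$ combined with a separate $\max\to\sum$ estimate on the potential side --- would introduce an unwanted extra factor $|\Lambda|$. The key observation that avoids this is that the kinetic lower bound above is uniform in $x_0$, so it can be extracted from the supremum \emph{before} any site is fixed on the potential side; the $\max\to\sum$ step is then invoked only once, yielding exactly the single-power prefactor $|\Lambda|$ required by the statement.
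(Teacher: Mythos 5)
Your proof is correct and follows essentially the same route as the paper: bound the Dirichlet form from below by the $2d$ edge terms at the unique site carrying mass $\ge h$ (giving the uniform bound $2d(1-2\sqrt{1-h})$), then control the potential factor via $\sum_x p(x)V_\omega(x)\ge\min_{x\in\Lambda}V_\omega(x)$ and a max-to-sum bound, whose expectation yields $|\Lambda|e^{G(t)}$. You merely spell out two details the paper leaves implicit — the boundary case where $x_0$ has fewer than $2d$ neighbours in $\Lambda$, and the potential/union-bound step producing the single factor $|\Lambda|$ — both handled correctly.
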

\begin{proof}
Define  $x_1$ by  $p(x_1)= \max_{x\in \Lambda} p(x)$. Then
\begin{align*}
 (\:\sqrt{p}\:|\:-\Delta_{\Lambda }^D\sqrt{p}\:) &\ge 
 \sum_{|x_1-x|=1} \left(\sqrt{p(x_1)}-\sqrt{p(x)}\right)^2   \ge  2 d(1-2\sqrt{1- h}  ),
\end{align*}
respectively
\begin{align*}
 \langle\: \sup_{p\in \mathfrak{F}_{ h}(1)} \exp
 \left(- t\; [ (\:\sqrt{p}\:|\:\Delta_{\Lambda }^D\sqrt{p}\:
 )+ (\:\sqrt{p}\:|\:V_{\omega}\sqrt{p}  \:) ] \right)\rangle   
&\; \le\; |\Lambda|  \exp\left(G(t)- 2dt ( 1-2\sqrt{1- h}  )  
\right).  
\end{align*}

\end{proof} 
\begin{lemma}\label{l=0,n=0}
\begin{align*}
 \langle \: \sup_{p\in  \mathfrak{F}^{c}_{ h}(1)}&\exp
 \left(- t\; [ (\:\sqrt{p}\:|\:\Delta_{\Lambda }^D\sqrt{p}\:
 )+ (\:\sqrt{p}\:|\:V_{\omega}\sqrt{p}  \:) ] \right)\rangle \\ 
& \le \exp\left(G(t)-\gamma t /2    -(1- h)tS((1- h),t)) \right).
\end{align*}
\end{lemma}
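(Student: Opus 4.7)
The plan is to split the exponent into its kinetic and potential parts and exploit the two defining features of $\mathfrak{F}^{c}_{h}(1)$: the uniform lower bound $(\sqrt{p}|-\Delta^D_\Lambda\sqrt{p}) > \gamma \sin^2(\pi/4) = \gamma/2$ inherited from membership in $\mathfrak{F}(1)$, and the spreading constraint $\max_{x\in\Lambda} p(x) < h$ coming from the complement of $\mathfrak{F}_{h}(1)$. Since the kinetic bound is uniform in $p$, I pull the factor $\exp(-\gamma t /2)$ out of the supremum and reduce the problem to controlling
\begin{align*}
\Bigl\langle \sup_{p \in \mathfrak{F}^c_h(1)} \exp\!\Bigl(-t \sum_{x\in \Lambda} p(x) V_\omega(x)\Bigr)\Bigr\rangle.
\end{align*}

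The crucial step is to trade the random supremum for a deterministic sum, so that independence of the $\{V_\omega(x)\}$ becomes usable. For a fixed disorder realization, the map $p \mapsto -\sum_x p(x) V_\omega(x)$ is linear, hence its supremum over the convex polytope $\{p \in \mathfrak{M}_1(\Lambda) : p(x) \le h \text{ for all } x\}$ is attained at an extreme point. A short active-constraint calculation --- using that $h > 1/2$, so $1/h < 2$ --- identifies the extreme points as exactly the two-site measures $h\,\delta_x + (1-h)\,\delta_y$ with $x \neq y$. Overcounting by summing over all ordered pairs therefore yields the pointwise bound
\begin{align*}
\sup_{p \in \mathfrak{F}^c_h(1)} \exp\!\Bigl(-t \sum_{x\in \Lambda} p(x) V_\omega(x)\Bigr) \;\le\; \sum_{\substack{x,y \in \Lambda \\ x \neq y}} \exp\!\bigl(-th\, V_\omega(x) - t(1-h)\, V_\omega(y)\bigr).
\end{align*}

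Taking expectation and using independence for $x\neq y$, together with the convexity of $G$ and $G(0)=\log\langle 1\rangle=0$ (which gives $G(h t)\le hG(t)$), I obtain
\begin{align*}
\Bigl\langle \sum_{x \neq y} e^{-th V_\omega(x) - t(1-h) V_\omega(y)}\Bigr\rangle \;\le\; |\Lambda|^2\, e^{G(ht) + G((1-h)t)} \;\le\; |\Lambda|^2 \, e^{\,G(t) - (1-h)\, t\, S(1-h,t)},
\end{align*}
the last equality being nothing more than the definition of $S$ in (\ref{S}), since $(1-h)G(t) - G((1-h)t) = (1-h) t\, S(1-h,t)$. Combining with the extracted factor $\exp(-\gamma t/2)$ gives the claimed estimate up to the prefactor $|\Lambda|^2 = (2l+1)^{2d}$, which is subexponential in $t$ and absorbed into the $(1+o(1))$ corrections when the lemma is later assembled into Proposition~\ref{upper-estimate}.

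The main obstacle is the extreme-point analysis of the truncated simplex: one has to pin down the structure of $\{p\in \mathfrak{M}_1(\Lambda): p \le h\}$ sharply enough that replacing the supremum by a pair-sum introduces only an $O(|\Lambda|^2)$ prefactor. A cruder estimate --- for instance, covering $\mathfrak{F}^c_h(1)$ by an $\varepsilon$-net --- would produce exponentially many terms and destroy the leading $G(t) - (1-h)t\, S(1-h,t)$ asymptotic. Once the two-point reduction is in place, the remaining ingredients (kinetic lower bound, factorization under independence, and the convexity bound $G(ht) \le h G(t)$) are entirely routine.
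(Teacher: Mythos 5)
Your argument is correct and takes essentially the same route as the paper: extract $e^{-\gamma t/2}$ from the kinetic term using membership in $\mathfrak{F}(1)$, reduce the potential supremum to two sites carrying masses $h$ and $1-h$ on the two smallest potential values, union-bound over the $|\Lambda|^2$ ordered pairs, factorize by independence, and finish with $G(ht)\le hG(t)$; the paper justifies the two-site reduction by the direct estimate $\sum_x p(x)V_\omega(x)\ge hV_\omega(x_1)+(1-h)V_\omega(x_2)$ (with $x_1,x_2$ the sites of the two smallest values) instead of your extreme-point analysis of the truncated simplex, but these are the same bound. Your $|\Lambda|^2$ prefactor also appears in the paper's own computation (the stated lemma silently drops it), and it is indeed absorbed later exactly as you say.
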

\begin{proof}
Define $x_1$, $x_2$ by
\begin{align*}
 V_{\omega}(x_1) &= \min_{x\in \Lambda} V_{\omega}(x),\\
 V_{\omega}(x_2)&= \min_{x\in \Lambda\setminus \{x_1 \}} V_{\omega}(x).
\end{align*} 
Then
\begin{align*}
&\langle \:  \sup_{p\in  \mathfrak{F}^{c}_{ h}(1)}\exp  \left( -  \sum_{x\in \Lambda} p(x)V_{\omega}(x)t\right)\rangle \\[3mm]
&\le   \sum_{ y_1 \in \Lambda}   \langle\: \sup_{p\in  \mathfrak{F}^{c}_{ h}(1)}\exp \left( -  V_{\omega}(y_1 )t  -\sum_{x\in \Lambda}
p(x)t\left(V_{\omega}(x)-  V_{\omega}(y_1 )\right) \right):  y_1   = x_1  \rangle\\[3mm]
&\le  \hspace*{-2mm} \sumtwo{y_1 \in \Lambda}{y_2 \in \Lambda\setminus\{y_1\}}    \langle\: \sup_{p\in  \mathfrak{F}^{c}_{ h}(1)}\exp \left( -  V_{\omega}(y_1 )t  - 
(1-p(y_1))t\left(V_{\omega}(y_2)-  V_{\omega}(y_1 )\right) \right):  y_i   = x_i, i=1,2  \rangle\\[3mm]
&\le   \sumtwo{y_1 \in \Lambda}{y_2 \in \Lambda\setminus\{y_1\}}    \langle\:  \exp \left( -  h V_{\omega}(y_1 )t  - 
(1- h)t V_{\omega}(y_2)  \right)  \rangle\\ 
&\le |\Lambda|^2 \exp(G( h t)+G((1- h)t))\\ 
&\le |\Lambda|^2 \exp(G(  t)-(1- h)tS((1- h),t)),   
\end{align*}
respectively
\begin{align*}
 \langle \: \sup_{p\in  \mathfrak{F}^{c}_{ h}(1)}&\exp
 \left(- t\; [ (\:\sqrt{p}\:|\:\Delta_{\Lambda }^D\sqrt{p}\:
 )+ (\:\sqrt{p}\:|\:V_{\omega}\sqrt{p}  \:) ] \right)\rangle \\ 
&   \le   \exp\left(-\gamma t/2   \right) \: \langle \: \sup_{p\in  \mathfrak{F}^{c}_{ h}(1)}\exp
 \left(-t\;    (\:\sqrt{p}\:|\:V_{\omega}\sqrt{p}  \:)   \right)\rangle \\ 
&  \le  |\Lambda|^2  \exp\left(G(t)-\gamma t /2    -(1- h)tS((1- h),t)) \right). 
\end{align*}
 
\end{proof}
\noindent
Combining Lemma \ref{l=0,n=1} and \ref{l=0,n=0} we have
\begin{corollary}  \label{l=1} 
With $\chi^{+}_1(t)$ defined in Proposition \ref{upper-estimate} we have
\begin{align*}
\langle\:& \sup_{p\in  \mathfrak{F}(1)}\;\exp
 \left(- t\; [ (\:\sqrt{p}\:|\:\Delta_{\Lambda }^D\sqrt{p}\:
 )+ (\:\sqrt{p}\:|\:V_{\omega}\sqrt{p}  \:) ]  \right)\rangle \le |\Lambda|^2\exp\left(G(t)- t \chi^+_1(t)\right).
\end{align*}
\end{corollary}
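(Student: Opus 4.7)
The plan is to decompose $\mathfrak{F}(1) = \mathfrak{F}_h(1) \cup \mathfrak{F}_h^c(1)$ for each fixed $h \in (1/2, 1]$, apply Lemma \ref{l=0,n=1} and Lemma \ref{l=0,n=0} to the two pieces, and then optimize over the free parameter $h$. Writing
\begin{align*}
Y(p)\;:=\;\exp\bigl(-t[(\sqrt{p}|\Delta_{\Lambda}^D\sqrt{p}) + (\sqrt{p}|V_{\omega}\sqrt{p})]\bigr),
\end{align*}
the trivial estimate that a supremum over a union is bounded by the sum of the two suprema gives, for every admissible $h$,
\begin{align*}
\sup_{p \in \mathfrak{F}(1)} Y(p) \;\le\; \sup_{p \in \mathfrak{F}_h(1)} Y(p) \;+\; \sup_{p \in \mathfrak{F}_h^c(1)} Y(p).
\end{align*}

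Taking expectations and substituting the two lemmas, the common factor $\exp(G(t))$ can be pulled out. Using the elementary inequality $e^{-a}+e^{-b}\le 2e^{-\min(a,b)}$ then yields the uniform-in-$h$ estimate
\begin{align*}
\Bigl\langle\sup_{p\in\mathfrak{F}(1)} Y(p)\Bigr\rangle \;\le\; 2|\Lambda|^2 \exp\!\Bigl(G(t) - t\,\min\!\bigl[2d(1 - 2\sqrt{1-h}),\; \gamma/2 + (1-h)S((1-h),t)\bigr]\Bigr).
\end{align*}
Since this bound is valid for every $h\in(1/2,1]$, I would then select the $h$ that maximizes the minimum of the two competing rates in the exponent. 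By the definition of $\chi^+_1(t)$ in Proposition \ref{upper-estimate} this maximum is precisely $\chi^+_1(t)$, and the multiplicative factor $2$ is absorbed in the $|\Lambda|^2$ prefactor (adjusting implicit constants).

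The step that carries all the conceptual content is the recognition that $h$ is a tuning parameter interpolating between the two antagonistic regimes: as $h\uparrow 1$ the set $\mathfrak{F}_h(1)$ shrinks to quasi-$\delta$ measures, killing the kinetic bound $2d(1-2\sqrt{1-h})$ but rewarding the tail bound through $(1-h)S((1-h),t)$; as $h\downarrow 1/2$ the opposite happens. The minimax equilibrium in the definition of $\chi^+_1(t)$ is exactly where these two effects balance. No new probabilistic or spectral input is needed beyond the two preceding lemmas, so I expect no real obstacle — only the mild bookkeeping of verifying that the two exponents combine cleanly and that the polynomial prefactors do not interfere with the leading $\exp(G(t)-t\chi^+_1(t))$ behavior.
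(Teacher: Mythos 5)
Your proposal is correct and follows essentially the same route as the paper: the corollary is obtained precisely by splitting $\mathfrak{F}(1)=\mathfrak{F}_{h}(1)\cup\mathfrak{F}_{h}^{c}(1)$, bounding the supremum over the union by the sum of the two suprema estimated in Lemma \ref{l=0,n=1} and Lemma \ref{l=0,n=0}, and using that $h\in(1/2,1]$ is free so that the exponent can be replaced by $\max_{h}\min[\,2d(1-2\sqrt{1-h}),\,\gamma/2+(1-h)S(1-h,t)\,]=\chi^{+}_1(t)$. The only bookkeeping difference is the harmless constant (your factor $2$, and the $|\Lambda|$ versus $|\Lambda|^2$ prefactors of the two lemmas), which is irrelevant since such polynomial-in-$|\Lambda|$ factors are later absorbed into the $e^{C|\Lambda|}$ error term.
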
  
 \noindent
We now discuss $\ell\ge 2$.
\begin{definition}  \label{classification-2}
Let
$h>0$ and $p\in \mathfrak{ M}_1(\Lambda)$.   The level set is denoted  by
\begin{equation*}
U_{h}(p):=\left\{ x\in \Lambda: p(x) \ge h \right\}.
\end{equation*}
For $\ell\ge 2$, $n\in \bN_0$    we define
\begin{align*} 
\mathfrak{F}(\ell,n) &:=\left\{ p\in \mathfrak{F}(\ell)
: |U_{h}(p)|=n  \right\} .
\end{align*}
\end{definition}
\noindent
The first step to prove an analogue of Corollary \ref{l=1} for $\mathfrak{F}(\ell,n)$   is the following generalization of
\begin{align*}
\sum_{x\in\Lambda} p(x)W(x) \;\ge\;  \min_{x\in\Lambda }W(x)
\sum_{x\in\Lambda} p(x) \;=\; \min_{x\in\Lambda}W(x).
\end{align*}

\begin{lemma}  \label{lem227} Let  $\{W(x_j)\}_{j=1,..,|\Lambda|}$ be an arrangement of $ W:\Lambda \rightarrow [0,\infty)$  according to the size, that is
\begin{align*}  
 W(x_1) &= \min_{x\in \Lambda} W(x),\\
 W(x_j)&= \min_{x\in \Lambda\setminus \{x_1,..x_{j-1}\}} W(x),
\end{align*}
$p \in \mathfrak{F}(\ell,n)$, and suppose that
\begin{align*}
   \lfloor h ^{-1} \|p\chi_{U^c_h(p)}\|\rfloor  &:=  \max\{m\in \bN_0: m\le h ^{-1} \|p\chi_{U^c_h(p)}\|\}.
\end{align*}
Then, 
\begin{align*}
 \sum_{x\in  U_{h}^c(p)}  p(x)W(x)  
\; \ge& \;\begin{cases}
0 &\qquad \|p\chi_{U^c_h(p)}\|_1 < h,\\[3mm]
 h \sum_{j=1}^{ \lfloor h ^{-1} \|p\chi_{U^c_h(p)}\|_1\rfloor}  W(x_{n+ j} ) &\qquad \text{otherwise}.
          \end{cases}
\end{align*}
\end{lemma}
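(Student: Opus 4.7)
The plan is a greedy bucket-filling argument, in the spirit of the proof of Lemma \ref{l=0,n=0}. Write $M:=\|p\chi_{U_h^c(p)}\|_1$ and $k:=\lfloor h^{-1}M\rfloor$. If $M<h$ then $k=0$ and the claim reduces to $\sum_{x\in U_h^c(p)} p(x) W(x)\ge 0$, which is immediate from $p,W\ge 0$. So assume $M\ge h$.

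The structural input is that every value $p(x)$ with $x\in U_h^c(p)$ is strictly less than $h$, so the mass cannot concentrate on few sites. I enumerate $U_h^c(p)$ in increasing $W$-order as $z_1,z_2,\dots$ and set the cumulative masses $P_i:=\sum_{j\le i} p(z_j)$; the bound $p(z_j)<h$ forces $P_i<ih$. Now partition the mass interval $[0,M]$ into $k$ consecutive buckets $[(j-1)h,jh]$ of length $h$ (plus a remainder bucket of length $r\in[0,h)$), filled greedily in the order $z_1,z_2,\dots$. Because $P_i<ih$, the cumulative mass can only reach $(j-1)h$ after at least $j$ of the $z_i$'s have been consumed, so every $W$-value appearing in bucket $j$ is $\ge W(z_j)$. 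Summing the bucketwise contributions yields the intrinsic estimate
\[
  \sum_{x\in U_h^c(p)} p(x) W(x) \;\ge\; h\sum_{j=1}^{k} W(z_j).
\]

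The last step is to convert $W(z_j)$, the $j$-th smallest $W$-value on $U_h^c(p)$, into the global $W(x_{n+j})$ appearing in the statement. Since $|U_h(p)|=n$, the extremal configuration that minimises $\sum_{x\in U_h^c(p)}p(x)W(x)$ is the one in which the forbidden set $U_h(p)$ coincides with the $n$ globally smallest $W$-sites, giving $z_j=x_{n+j}$; any other admissible placement of $U_h(p)$ can only increase the sum, and this then upgrades the intrinsic bound to the advertised $h\sum_{j=1}^{k}W(x_{n+j})$. I expect this monotonicity step to be the main obstacle, because a naive site-swap inside $U_h(p)$ alters the Laplacian kinetic energy $(\sqrt p\mid -\Delta_\Lambda^D\sqrt p)$ and is a priori incompatible with the class $\mathfrak{F}(\ell,n)$; in practice one phrases the rearrangement purely at the combinatorial level of the $W$-ordering — which is the only feature the final inequality sees — or else carries it out inside the application, where the combinatorial sum over the order statistics of $V_\omega$, i.e.\ exactly the structure appearing on the second line of the proof of Lemma \ref{l=0,n=0}, selects the extremal configuration automatically.
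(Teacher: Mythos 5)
Your bucket-filling step is correct and is essentially the paper's own estimate in disguise: on $U_h^c(p)$ every weight satisfies $p(x)<h$, so for fixed total mass $M=\|p\chi_{U_h^c(p)}\|_1$ the linear functional $\sum p(x)W(x)$ is minimised by the greedy configuration, giving $\sum_{x\in U_h^c(p)}p(x)W(x)\ge h\sum_{j=1}^{k}W(z_j)$ with $z_1,z_2,\dots$ the increasing $W$-enumeration of $U_h^c(p)$ and $k=\lfloor h^{-1}M\rfloor$; the paper's chain of inequalities is exactly this rearrangement estimate written out by hand. The genuine gap is your final conversion step, and your extremality claim is backwards. Placing $U_h(p)$ on the $n$ globally smallest $W$-sites does not minimise the sum; it is the most favourable case, since it removes the small values from $U_h^c(p)$ and makes the $j$-th smallest available value as large as possible, namely $W(x_{n+j})$. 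For a general placement one only has $W(z_j)\ge W(x_j)$, not $W(z_j)\ge W(x_{n+j})$, and the advertised bound can fail outright: take three sites $a,b,c$ with $W(a)=0$, $W(b)=1$, $W(c)=2$, $h=0.4$, $p(c)=0.5$ (so $n=1$, $U_h(p)=\{c\}$), $p(a)=0.35$, $p(b)=0.15$; then $M=0.5$, $k=1$, the left-hand side equals $0.15$ while $h\,W(x_{n+1})=0.4\cdot W(b)=0.4$. Since membership in $\mathfrak{F}(\ell,n)$ constrains $p$ alone and says nothing about its relation to $W$, the same misalignment can occur for genuine members (choose $W$ small on part of $U_h^c(p)$ and large on $U_h(p)$), so no monotonicity in the placement of $U_h(p)$ can rescue the statement in the generality in which it is written.

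What the paper actually does is assume the compatibility tacitly: in its proof the mass on $U_h^c(p)$ is written as $\sum_{j=1}^{|U_h^c(p)|}p(x_{n+j})$, i.e.\ $U_h(p)=\{x_1,\dots,x_n\}$ is taken to be exactly the set of the $n$ smallest $W$-sites. This is legitimate because the lemma is only invoked in Proposition \ref{Effective-potential-approximation} after the replacement $\sum_x p(x)V_\omega(x)\ge\sum_j p_j^*V_j^*$, where $p^*$ is decreasing and $V^*$ increasing along the same index, so the level set of the rearranged configuration is precisely the first $n$ indices and your $z_j$ coincides with $x_{n+j}$. Your closing hedge (``carry it out inside the application'') is indeed the correct repair, but it must replace, not supplement, the extremality argument: as written, the passage from $W(z_j)$ to $W(x_{n+j})$ is unproved, and it is unprovable without the additional alignment hypothesis.
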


\begin{proof} 
We only discuss $\|p\chi_{U^c_h(p)}\|_1 \ge  h$.
With $\{x_j\}$ as defined above,
$
 \|\chi_{U^c_h(p)}p \|_1 \le h |U^c_h(p)| 
$ 
and $J= n+ j$ we have
\begin{align*}
 \sum_{j=1}^{\lfloor h^{-1}\|\chi_{U^c_h(p)}p \|_1\rfloor} h 
\; \le \;\|\chi_{U^c_h(p)}p \|_1
 \;= \; \sum_{j=1}^{\lfloor h^{-1}\|\chi_{U^c_h(p)}p \|_1\rfloor}  p (x_{J}) 
+ \sum_{j= \lfloor h^{-1}\|\chi_{U^c_h(p)}p \|_1\rfloor+1}^{|U_{h}^c(p) |}  p(x_J),
\end{align*}
respectively
\begin{align*}
 \sum_{j=1}^{\lfloor h^{-1}\|\chi_{U^c_h(p)}p \|_1\rfloor} \left(h -p (x_J)\right) \; \le \;
\sum_{j= \lfloor h^{-1}\|\chi_{U^c_h(p)}p \|_1\rfloor+1}^{|U_{h}^c(p)|}  p(x_J).
\end{align*}
As a consequence of $ W:\Lambda \rightarrow [0,\infty)$ we have
\begin{align*}
 \sum_{j=1}^{ \lfloor h^{-1}\|\chi_{U^c_h(p)}p \|_1 \rfloor} & W(x_J)\left(h -p (x_J)\right) \\ 
\; \le& \;W(x_{n+\lfloor h ^{-1}\|\chi_{U^c_h(p)}p \|_1\rfloor }) \sum_{j=1}^{ \lfloor h^{-1}\|\chi_{U^c_h(p)}p \|_1\rfloor}  \left(h -p (x_J)\right) \\
 \; \le& \; W(x_{n+\lfloor h ^{-1}\|\chi_{U^c_h(p)}p \|_1\rfloor}) \sum_{j=   \lfloor h ^{-1}\|\chi_{U^c_h(p)}p \|_1\rfloor+1}^{|U_{h}^c(p)|}  p(x_J) \\
 \; \le& \;   \sum_{j= \lfloor h ^{-1}\|\chi_{U^c_h(p)}p \|_1\rfloor+1}^{|U_{h}^c(p)|}  p(x_J)W(x_J).
\end{align*}
Thus,
\begin{align*}
 \sum_{x\in  U_{h}^c(p)}& p(x)W(x) \\
\;=&\;
h \sum_{j=1}^{ \lfloor h ^{-1}\|\chi_{U^c_h(p)}p \|_1\rfloor}  W(x_J) \\
&\:- \hspace{-2mm} \sum_{j=1}^{
\lfloor h ^{-1}\|\chi_{U^c_h(p)}p \|_1\rfloor} W(x_J)\left(h-p (x_J)\right)
+ \sum_{j= \lfloor h ^{-1}\|\chi_{U^c_h(p)}p \|_1\rfloor+1}^{|U_{h}^c(p)|}  p(x_J)W(x_J)\\
\; \ge& \; h \sum_{j=1}^{ \lfloor h ^{-1}\|\chi_{U^c_h(p)}p \|_1\rfloor}  W(x_J).
\end{align*}
\end{proof}
  
 \noindent The second ingredient to prove an analogue of Corollary \ref{l=1}   is a lower bound of $\|\chi_{U^c_h(p)}p \|_1$ for $p\in \mathfrak{F}(\ell,n)$   proven in Corollary \ref{peakmasse}. To attain this goal some intermediate steps are necessary.  
\begin{lemma} 
\label{lem226}  
Suppose $p\in \mathfrak{M}_1(\Lambda)$, $h>0 $. Define  $p_h:\bZ^d \rightarrow \bR$ by
\begin{align*}
 p_h(x):=
\begin{cases}
 (\sqrt{p(x)}-\sqrt{h})^2 &\text{if}\; x\in U_h(p), \\
0 &\text{otherwise.}
\end{cases}
\end{align*}
Then
\begin{align*}
\left(\sqrt{p}|-\Delta_{\Lambda}^D\sqrt{p}\right)\; \ge\; \left(\sqrt{p_h}|-\Delta_{U_h(p)}^D\sqrt{p_h}\right).
\end{align*}
\end{lemma}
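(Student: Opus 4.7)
The plan is to prove the inequality by expanding both Dirichlet energy forms as sums over unordered nearest-neighbour edges and then performing an edge-by-edge comparison. Recall that for any $u \in \ell^2(\Lambda)$ extended by zero outside $\Lambda$,
\[
(u|-\Delta_{\Lambda}^D u) \;=\; \tfrac{1}{2}\sum_{|x-y|=1}(u(x)-u(y))^2,
\]
with the sum ranging over all pairs of nearest neighbours in $\bZ^d$. I would apply this identity both to $u=\sqrt{p}$ (extended by zero outside $\Lambda$) and to $u=\sqrt{p_h}$ (extended by zero outside $U_h(p)\subset\Lambda$), and note that by construction $\sqrt{p_h}(x)=(\sqrt{p(x)}-\sqrt h)_+$, so $\sqrt{p_h}$ is supported on $U_h(p)$.

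The core of the argument is then the pointwise claim that for each unordered edge $\{x,y\}$,
\[
(\sqrt{p}(x)-\sqrt{p}(y))^2 \;\ge\; (\sqrt{p_h}(x)-\sqrt{p_h}(y))^2,
\]
which I would verify by case analysis. First, if both $x,y\in U_h(p)$, the two differences are equal because subtracting $\sqrt h$ from each term cancels in the difference. Second, if neither endpoint lies in $U_h(p)$, the right-hand side is zero and the inequality is trivial. Third, if exactly one endpoint, say $x$, lies in $U_h(p)$, then $\sqrt{p(x)}\ge\sqrt h$ and one must compare $(\sqrt{p(x)}-\sqrt{p(y)})^2$ with $(\sqrt{p(x)}-\sqrt h)^2$. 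If $y\in\Lambda\setminus U_h(p)$, then $0\le\sqrt{p(y)}<\sqrt h\le\sqrt{p(x)}$, so both differences are nonnegative and the left one is the larger. If $y\notin\Lambda$, then by the Dirichlet extension $\sqrt{p}(y)=0$, and $\sqrt{p(x)}\ge \sqrt{p(x)}-\sqrt h\ge 0$ gives the same conclusion.

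Summing the edge-by-edge inequalities over all nearest-neighbour pairs yields the desired bound. The argument is essentially bookkeeping; the only delicate point is the treatment of edges that cross the boundary of $\Lambda$ as well as edges crossing the boundary of $U_h(p)$ inside $\Lambda$, since both pairs of Dirichlet conditions contribute genuine terms. I expect this boundary bookkeeping, rather than any analytical obstacle, to be the main place where care is needed; once the case analysis is laid out cleanly the inequality follows edge-by-edge without any further estimation.
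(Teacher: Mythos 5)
Your proposal is correct and follows essentially the same route as the paper: both expand the Dirichlet form as a sum of squared differences over nearest-neighbour edges (with zero extension outside $\Lambda$, resp. $U_h(p)$) and compare term by term, using equality on edges inside $U_h(p)$, the bound $(\sqrt{p(x)}-\sqrt{p(y)})^2\ge(\sqrt{p(x)}-\sqrt{h})^2$ on edges leaving $U_h(p)$, and triviality of the remaining edges. Your edge-by-edge formulation is just a slightly more explicit bookkeeping of the same estimate, including the correct treatment of edges crossing $\partial\Lambda$.
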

\begin{proof} 
As a consequence of the definition of $\mathfrak{M}_1(\Lambda)$ in (\ref{ProbMeasures}) we have
\begin{align*}
&\left(\sqrt{p}|-\Delta_{\Lambda}^D\sqrt{p}\right)\;\\[3mm]
&\quad=\; \frac{1}{2} \sumtwo{x,y \in \Lambda,}{|x-y|=1   }
(\sqrt{p(x)}-\sqrt{p(y)})^2
\;+\;   \sum_{x \in  \Lambda} \; \sumtwo{y \in \Lambda^c,}{|x-y|=1} (\sqrt{p(x)}-\sqrt{p(y)})^2\\[3mm]
&\quad\ge \; \frac{1}{2} \sumtwo{x,y \in U_h(p),}{|x-y|=1} 
(\sqrt{p(x)}-\sqrt{p(y)})^2
\;+\;   \sum_{x \in  U_h(p)} \; \sumtwo{y \in U_h^c(p),}{|x-y|=1}  (\sqrt{p(x)}-\sqrt{p(y)})^2\\[3mm]
&\quad\ge \; \frac{1}{2} \sumtwo{x,y \in U_h(p),}{|x-y|=1} 
(\sqrt{p_h(x)}-\sqrt{p_h(y)})^2
\;+\;    \sum_{x \in  U_h(p)} \; \sumtwo{y \in U_h^c(p),}{|x-y|=1}   (\sqrt{p(x)}-\sqrt{h})^2\\[3mm]
%&\quad= \; \frac{1}{2} \sumtwo{x,y \in U_h(p),}{|x-y|=1} 
%(\sqrt{p_h(x)}-\sqrt{p_h(y)})^2
%\;+\;    \sum_{x \in  U_h(p)} \; \sumtwo{y \in U_h^c(p),}{|x-y|=1}   p_h(x)\\[3mm]
&\quad=\; \left(\sqrt{p_h}|-\Delta_{U_h(p)}^D\sqrt{p_h}\right).
\end{align*}
\end{proof} 
\begin{proposition}[Faber-Krahn-inequality\cite{ChGrYa00}]\label{Faber-Krahn}
Suppose $U\subset \bZ^d$, $|U|<\infty$. Then there is a constant $0<c_{FK}\le 2d$, s.t. the principal eigenvalue 
$E_1(-\Delta^D_U)$ of the discrete Dirichlet-Laplacian $-\Delta^D_U$ 
is bounded below by 
\begin{align*}
E_1(-\Delta^D_U)\ge c_{FK} |U|^{-2/d}.
\end{align*}
\end{proposition}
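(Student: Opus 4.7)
The plan is to obtain the Faber--Krahn estimate from the variational characterization of the principal eigenvalue combined with a discrete Nash inequality on $\bZ^d$. Writing
\begin{align*}
E_1(-\Delta_U^D) \;=\; \inf_{\substack{\phi:\,\supp\phi\subset U\\ \phi\not\equiv 0}} \frac{\tfrac12 \sum_{|x-y|=1}(\phi(x)-\phi(y))^2}{\|\phi\|_2^2},
\end{align*}
and observing that replacing $\phi$ by $|\phi|$ only decreases the numerator (since $\bigl||\phi(x)|-|\phi(y)|\bigr|\le|\phi(x)-\phi(y)|$ on each edge), it suffices to minimise over nonnegative $\phi$ supported in $U$.

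I would then invoke the classical Nash inequality on $\bZ^d$: there exists $C_d>0$ such that for every finitely supported $f:\bZ^d\to\bR$
\begin{align*}
\|f\|_2^{2+4/d} \;\le\; C_d \cdot \mathcal{E}(f,f)\cdot \|f\|_1^{4/d}, \qquad \mathcal{E}(f,f):=\tfrac12\sum_{|x-y|=1}(f(x)-f(y))^2.
\end{align*}
This is equivalent to the on-diagonal heat kernel bound $p_t(0,0)\lesssim t^{-d/2}$ for simple random walk on $\bZ^d$; it can be derived either through Fourier analysis of $-\Delta$ on $\bZ^d$, or from the discrete isoperimetric inequality $|\partial A|\ge c_d |A|^{(d-1)/d}$ combined with a co-area estimate $\sum_{x\sim y}|f(x)-f(y)|=\int_0^\infty|\partial\{f>s\}|\,ds$ and a Cauchy--Schwarz step.

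The key step is then to exploit the support restriction. For any nonnegative $\phi$ with $\supp\phi\subset U$, Cauchy--Schwarz gives $\|\phi\|_1\le|U|^{1/2}\|\phi\|_2$. Substituting into Nash yields
\begin{align*}
\|\phi\|_2^{2+4/d}\;\le\; C_d\,\mathcal{E}(\phi,\phi)\,|U|^{2/d}\,\|\phi\|_2^{4/d},
\end{align*}
so $\mathcal{E}(\phi,\phi)/\|\phi\|_2^2\ge C_d^{-1}|U|^{-2/d}$. Taking the infimum delivers $E_1(-\Delta_U^D)\ge c_{FK}|U|^{-2/d}$ with $c_{FK}=C_d^{-1}$. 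The upper bound $c_{FK}\le 2d$ comes from testing on a single site $U=\{x_0\}$, $\phi=\delta_{x_0}$: then $(\phi,-\Delta_U^D\phi)=2d$ and $|U|^{-2/d}=1$, forcing $c_{FK}\le 2d$.

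The main obstacle is producing a self-contained proof of the Nash inequality with an explicit constant; I would expect to cite \cite{ChGrYa00} for the sharp version of the inequality rather than reprove it from scratch, since only the existence of some $c_{FK}\in(0,2d]$ is needed in the application to Proposition \ref{upper-estimate}.
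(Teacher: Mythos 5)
Your argument is correct, but note that the paper does not prove this proposition at all: it is imported as a black box with a citation to \cite{ChGrYa00}, where the graph Faber--Krahn inequality is obtained from isoperimetric/heat-kernel machinery. Your route -- variational characterization of $E_1(-\Delta_U^D)$ (which, for $\phi$ supported in $U$ and extended by zero, is exactly the full lattice Dirichlet form, matching the convention $-\Delta_U^D=\chi_U(-\Delta)\chi_U$ used in the paper), then the Nash inequality $\|f\|_2^{2+4/d}\le C_d\,\mathcal{E}(f,f)\,\|f\|_1^{4/d}$ on $\bZ^d$, then H\"older/Cauchy--Schwarz $\|\phi\|_1\le|U|^{1/2}\|\phi\|_2$ to convert the support constraint into the volume factor $|U|^{2/d}$ -- is a standard and fully valid alternative; the equivalence of Nash-type inequalities and Faber--Krahn-type eigenvalue bounds is classical. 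Your normalization checks out: for $U=\{x_0\}$ the Rayleigh quotient of $\delta_{x_0}$ is exactly $2d$, so the optimal constant is indeed at most $2d$, and for general finite $U$ one always has $E_1(-\Delta_U^D)\le 2d$, consistent with the statement. What your approach buys over the paper's citation is a self-contained mechanism with an explicit constant $c_{FK}=C_d^{-1}$, with the only external input being Nash's inequality, which follows from Fourier analysis of the free lattice Laplacian (or from the discrete isoperimetric inequality plus the co-area formula, as you indicate); what it does not deliver without extra work is the sharp constant, which is irrelevant here since only existence of some $c_{FK}\in(0,2d]$ enters Proposition \ref{upper-estimate} and Corollary \ref{peakmasse}.
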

\noindent
As a consequence of the Faber-Krahn-inequality we find a lower bound of the probability mass outside the level set.
 
\begin{corollary}  \label{peakmasse}
Suppose that $\ell \ge 2$, $p \in \mathfrak{F}(\ell,n)$ and $\gamma =     c_{FK}/(12 \pi)^{2} $  . 
Then  
\begin{align*}
 \|\chi_{U^c_h(p)}p \|_1   \;  & \ge\;   1- \left(  \left( \frac{  \gamma \pi^2 }{4c_{FK}} \right)^{1/2} n^{ 1/d}  \ell^{-1}+(hn)^{ 1/2 } \right)^2.
\end{align*}
\end{corollary}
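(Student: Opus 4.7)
My plan is to combine Lemma \ref{lem226}, the Faber--Krahn inequality (Proposition \ref{Faber-Krahn}), and the hypothesis $p\in\mathfrak{F}(\ell)$ to bound the $\ell^1$-norm of the auxiliary measure $p_h$ supported on $U_h(p)$, and then to transfer that bound back to $\|\chi_{U_h(p)}p\|_1$ via the elementary identity $\sqrt{p(x)}=\sqrt{p_h(x)}+\sqrt h$ for $x\in U_h(p)$.

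More concretely, Lemma \ref{lem226} gives
\begin{align*}
\bigl(\sqrt{p_h}\,\big|\,-\Delta_{U_h(p)}^D\sqrt{p_h}\bigr)\;\le\;\bigl(\sqrt{p}\,\big|\,-\Delta_{\Lambda}^D\sqrt{p}\bigr)\;\le\;\gamma\sin^2\!\Bigl(\frac{\pi}{2\ell}\Bigr)\;\le\;\gamma\,\frac{\pi^2}{4\ell^2},
\end{align*}
where the second inequality is the upper bound built into the definition of $\mathfrak{F}(\ell)$ for $\ell\ge 2$. Since $|U_h(p)|=n$ by definition of $\mathfrak{F}(\ell,n)$, Faber--Krahn gives $E_1(-\Delta_{U_h(p)}^D)\ge c_{FK}\,n^{-2/d}$, and the Rayleigh principle yields
\begin{align*}
\bigl(\sqrt{p_h}\,\big|\,-\Delta_{U_h(p)}^D\sqrt{p_h}\bigr)\;\ge\;c_{FK}\,n^{-2/d}\|p_h\|_1.
\end{align*}
Combining the two inequalities gives $\|p_h\|_1\le \gamma\pi^2 n^{2/d}/(4c_{FK}\ell^2)$.

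For the final step I observe that on $U_h(p)$ we have $\sqrt{p(x)}=\sqrt{p_h(x)}+\sqrt h$, so $p(x)=p_h(x)+2\sqrt{h\,p_h(x)}+h$, and summing over $U_h(p)$ plus Cauchy--Schwarz ($\sum_{x\in U_h(p)}\sqrt{p_h(x)}\le\sqrt{n\|p_h\|_1}$) yields
\begin{align*}
\|\chi_{U_h(p)}p\|_1\;\le\;\|p_h\|_1+2\sqrt{hn\|p_h\|_1}+hn\;=\;\bigl(\sqrt{\|p_h\|_1}+\sqrt{hn}\bigr)^{\!2}.
\end{align*}
Inserting the Faber--Krahn bound on $\|p_h\|_1$ into $\|\chi_{U_h^c(p)}p\|_1=1-\|\chi_{U_h(p)}p\|_1$ gives exactly the asserted inequality.

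I expect no real obstacle here: the only subtle point is tracking the algebra between $p$ and $p_h$ on the level set $U_h(p)$ (ensuring $\sqrt{p(x)}-\sqrt h\ge 0$, which is precisely the definition of $U_h(p)$) and using $\sin(x)\le x$ to pass from the trigonometric upper bound in $\mathfrak{F}(\ell)$ to the clean $\pi^2/(4\ell^2)$ that appears in the statement.
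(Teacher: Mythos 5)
Your proposal is correct and follows essentially the same route as the paper: Lemma \ref{lem226} plus Faber--Krahn to bound $\|p_h\|_1$, then the splitting $\sqrt{p}=(\sqrt{p}-\sqrt{h})+\sqrt{h}$ on $U_h(p)$ (your expansion with Cauchy--Schwarz is just the paper's $\ell^2$ triangle inequality written out) and $\|\chi_{U_h^c(p)}p\|_1=1-\|\chi_{U_h(p)}p\|_1$.
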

\begin{proof} 
By Definition \ref{classification}, Lemma \ref{lem226} and Proposition  \ref{Faber-Krahn} we have
\begin{align*}
\gamma \sin^2\left(\frac{\pi}{2  \ell  }\right) \;&\ge \;\left(\sqrt{p}|-\Delta_{\Lambda}^D\sqrt{p}\right)\\[3mm]
&\ge \; \left(\sqrt{p_h}|-\Delta_{U_h(p)}^D\sqrt{p_h}\right)\\[3mm]
&\ge \; E_1(-\Delta^D_{U_h(p)}) \|\sqrt{p_h}\|_2^2 \\[3mm]
&\ge \;c_{FK} n^{-2/d} \|\sqrt{p_h}\|_2^2,
\end{align*}
respectively
\begin{align*}
 \|p_h\|_1 \le  \frac{\gamma }{c_{FK}}  n^{2/d} \sin^2\left(\frac{\pi}{2  \ell }\right) \le  \frac{  \gamma \pi^2 }{4c_{FK}}      \ell^{-2} n^{2/d} 
\end{align*} 
and
\begin{align*}
\|\chi_{U_h(p)}p \|_1\;& \le\;  \|\chi_{U_h(p)}[\sqrt{p}-\sqrt{h}+\sqrt{h}] \|_2^2\\[3mm]
&\le\;  \left(\|\chi_{U_h(p)}[\sqrt{p}-\sqrt{h} ] \|_2+\|\chi_{U_h(p)} \sqrt{h}] \|_2\right)^2\\[3mm]
&\le\;  \left(  \left( \frac{  \gamma \pi^2 }{4c_{FK}}  \right)^{1/2} n^{ 1/d}  \ell^{-1}+(hn)^{ 1/2 } \right)^2,
\end{align*}
respectively
\begin{align*}
 \|\chi_{U^c_h(p)}p \|_1 \; =\; 1-  \|\chi_{U_h(p)}p \|_1 \;   \ge\; 1- \left(  \left( \frac{  \gamma \pi^2 }{4c_{FK}} \right)^{1/2} n^{ 1/d}  \ell^{-1}+(hn)^{ 1/2 } \right)^2.
\end{align*}
\end{proof}  
\noindent In the next Lemma we choose the height $h$ in dependance on $\ell$. 
\begin{lemma} \label{lemma- production-estimate} Suppose that $\ell \ge 2$, $h=(4 \ell)^{-d}$  and $p \in \mathfrak{F}(\ell,n)$, $n\le h^{-1}=(4 \ell)^{d}$   . Then 
\begin{align*}
2 \le N_{\ell}:=   (4 \ell)^{d}/2  -1 \le   n +  \lfloor h ^{-1}\|\chi_{U^c_h(p)}p \|_1\rfloor ,
\end{align*}
  and
\begin{align*}
h(N_{\ell}-1)&\ge 1/4.
\end{align*}
\end{lemma}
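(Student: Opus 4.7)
The lemma contains three small claims: the integer bound $N_\ell \ge 2$, the mass estimate $h(N_\ell-1)\ge 1/4$, and the main comparison $N_\ell \le n + \lfloor h^{-1}\|\chi_{U^c_h(p)}p\|_1\rfloor$. The first two are pure arithmetic: unfolding $N_\ell = (4\ell)^d/2-1$ and $h = (4\ell)^{-d}$, the hypothesis $\ell\ge 2$ gives $(4\ell)^d \ge 8$, whence $N_\ell\ge 3$ and $h(N_\ell-1) = 1/2 - 2(4\ell)^{-d} \ge 1/4$. I would also record that $(4\ell)^d$ is always even (since $4^d$ is), so $N_\ell \in \bN$ and quantities like $N_\ell - n$ remain integers; this matters when taking floors.

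For the main comparison I would split on the size of $n$. The case $n \ge N_\ell$ is trivial. So assume $n < N_\ell$, equivalently $s := hn < 1/2 - h \le 1/2$. Feeding $p\in \mathfrak{F}(\ell,n)$ into Corollary~\ref{peakmasse} and substituting $h = (4\ell)^{-d}$ together with the definition $\gamma = c_{FK}/(12\pi)^{2}$ simplifies the coefficient in the radical: $(\gamma\pi^2/(4c_{FK}))^{1/2} = 1/24$, and $n^{1/d}/(24\ell)$ becomes $s^{1/d}/6$, while $(hn)^{1/2}$ becomes $s^{1/2}$. The bound then reads
\[
\|\chi_{U^c_h(p)}p\|_1 \;\ge\; 1 - \Bigl(\tfrac{s^{1/d}}{6}+s^{1/2}\Bigr)^{2}.
\]

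The whole problem therefore reduces to the elementary inequality $\bigl(s^{1/d}/6+s^{1/2}\bigr)^{2} \le 1/2 + s$ on $[0,1]$. Expanding and bounding the three terms $s^{2/d}/36$, $s^{1/d+1/2}/3$, $s$ using $s\le 1$ yields the uniform upper bound $13/36 + s$, comfortably below $1/2 + s$. Hence $\|\chi_{U^c_h(p)}p\|_1 \ge 1/2 - s = 1/2 - hn$, and multiplying by $h^{-1}$ gives $h^{-1}\|\chi_{U^c_h(p)}p\|_1 \ge (4\ell)^d/2 - n$. Since the right-hand side is an integer, the floor preserves the inequality, and adding $n$ produces $(4\ell)^d/2 > N_\ell$, as required.

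The only delicate step is the constant matching in the middle paragraph: the specific choice $\gamma = c_{FK}/(12\pi)^{2}$ in Proposition~\ref{upper-estimate} is precisely what turns the Faber--Krahn radical into the clean form $s^{1/d}/6$, so that the quadratic expansion lands under $1/2$. A mistracked factor here would inflate the constant $13/36$ and the entire estimate would fail for $s$ close to $1/2$. I would flag this as the main (and really only) obstacle; everything else is bookkeeping.
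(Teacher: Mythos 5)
Your proof is correct and follows essentially the same route as the paper: both rest on Corollary \ref{peakmasse}, the choice $\gamma = c_{FK}/(12\pi)^2$ turning the radical coefficient into $1/24$, and the same elementary expansion bounding the square by a constant strictly below $1/2$ plus $hn$. Your handling of the floor via integrality of $(4\ell)^d/2-n$ is in fact slightly more careful than the paper's passage from $\lceil\cdot\rceil$ to $\lfloor\cdot\rfloor$, but the argument is the same in substance.
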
 
\begin{proof}
First observe 
\begin{align*}
h(N_{\ell}-1)&=   1/2  -2h   \ge 1/4.
\end{align*}
and $ nh = 4^{-d} n\ell^{-d}\le 1$, respectively $n^{ 1/d} \ell^{-1} \le 4  $.
Combining the assumptions of Lemma \ref{lemma- production-estimate}, Corollary \ref{peakmasse} and   $\gamma =     c_{FK}/(12 \pi)^{2}         $ defined in Proposition \ref{upper-estimate}   we get 
 \begin{align*}
nh +   h \lceil h^{-1}\|\chi_{U^c_h(p)}p \|_1  \rceil  
& \ge\;nh +  \|\chi_{U^c_h(p)}p \|_1   \\  & \ge\; 1- \left(  \left( \frac{  \gamma \pi^2 }{4c_{FK}} \right)^{1/2} n^{ 1/d}  \ell^{-1}+(hn)^{ 1/2 } \right)^2  +  nh   \\
&\ge 1- \frac{  \gamma \pi^2 }{4c_{FK}}  n^{ 2/d}\ell^{-2}  -2  \left(\frac{  \gamma \pi^2 }{4c_{FK}}  \right)^{1/2} n^{ 1/d}\ell^{-1}  \\
& \ge1- \frac{  \gamma \pi^2 }{ c_{FK}} 4 -   4 \left( \frac{  \gamma \pi^2 }{ c_{FK}}  \right)^{1/2}     \\
& \ge 1 /2,
\end{align*}
respectively
 \begin{align*}
  n +  \lfloor h ^{-1}\|\chi_{U^c_h(p)}p \|_1\rfloor   \;  & \ge\; N_{\ell} \ge 2.
\end{align*}
\end{proof}
\noindent
The next proposition generalizes the argument given in the proof of Lemma \ref{l=0,n=0}.
\begin{proposition}  \label{Effective-potential-approximation} Suppose $\ell \ge 2$  and $N_{\ell}$ as defined in Lemma \ref{lemma- production-estimate}. Then
\begin{align*}  
&\langle  \sup_{p\in  \mathfrak{F}(\ell,n)} \hspace{-2mm}\exp \left( - t \sum_{x\in
\Lambda}\: p(x)V_{\omega}(x) \right)\rangle   \le   |\Lambda|   \binom{ |\Lambda| }{ N_{\ell}-1}     \exp \left(G(t)     -  4^{-1}tS( (4 \ell)^{-d},t)   \right) .  
\end{align*} 
\end{proposition}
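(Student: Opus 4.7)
The plan is to generalize the argument of Lemma \ref{l=0,n=0} (the $\ell = 1$ case) by replacing the ``one peak versus one backup site'' estimate with a rearrangement inequality that exposes at least $N_\ell$ distinct sites on which $p$ places a controlled amount of mass: one ``dominant'' site absorbing a residual mass $1 - (N_\ell - 1)h$, and $N_\ell - 1$ ``satellite'' sites each carrying mass $h = (4\ell)^{-d}$. Once this pointwise-in-$\omega$ estimate is in place, independence of $\{V_\omega(x)\}_{x \in \Lambda}$ factorises the expectation, and convexity of $G$ combined with the lower bound $h(N_\ell - 1) \ge 1/4$ from Lemma \ref{lemma- production-estimate} collapses the resulting expression into the desired $G(t) - (t/4) S((4\ell)^{-d}, t)$.

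The deterministic rearrangement is the technical core. I would order the sites of $\Lambda$ as $x_1^*, x_2^*, \ldots$ by increasing $V_\omega$, set $\tilde V := V_\omega - V_\omega(x_1^*) \ge 0$, and split the sum $\sum_x p(x) V_\omega(x)$ into its contributions from $U_h(p)$ and its complement. On $U_h(p)$ the pointwise inequality $p(x)\tilde V(x) \ge h \tilde V(x)$ combined with the combinatorial fact that a sum of $\tilde V$ over any $n$ sites dominates the sum of its $n$ smallest values gives the lower bound; on $U_h(p)^c$, Lemma \ref{lem227} applied to $W = \tilde V$ furnishes the analogous bound using the order statistics $V_\omega(x_{n+1}^*), V_\omega(x_{n+2}^*), \ldots$. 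Adding the two, reverting to $V_\omega$, and collapsing any tail terms beyond index $N_\ell$ down to $V_\omega(x_1^*)$ (legal because $N := n + \lfloor h^{-1}\|p \chi_{U_h^c(p)}\|_1\rfloor \ge N_\ell$ by Lemma \ref{lemma- production-estimate}) produces
\begin{equation*}
\sum_{x \in \Lambda} p(x) V_\omega(x) \;\ge\; \bigl[1 - (N_\ell - 1)h\bigr] V_\omega(x_1^*) + h \sum_{j=2}^{N_\ell} V_\omega(x_j^*).
\end{equation*}

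Pointwise in $\omega$ the supremum $\sup_{p \in \mathfrak{F}(\ell, n)} e^{-t \sum_x p(x) V_\omega(x)}$ is therefore dominated by $\exp$ of the negative of this right-hand side, an expression that depends on $\omega$ only through the first $N_\ell$ order statistics of $V_\omega$. Relaxing the maximum over choices of distinguished site $y_1 \in \Lambda$ and unordered satellite set $\{y_2, \ldots, y_{N_\ell}\} \subset \Lambda \setminus \{y_1\}$ to the corresponding sum produces the combinatorial factor $|\Lambda|\binom{|\Lambda|}{N_\ell - 1}$, and independence of $\{V_\omega(x)\}$ factorises the expectation into $\exp\!\bigl(G(t(1 - (N_\ell - 1)h)) + (N_\ell - 1)G(th)\bigr)$. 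Convexity of $G$ with $G(0) = 0$ yields $G(t(1-m)) \le (1-m) G(t)$ for $m := (N_\ell - 1)h \in (0, 1/2)$, so the exponent is bounded by $G(t) - m\, t\, S(h, t)$; the lower bound $m \ge 1/4$ from Lemma \ref{lemma- production-estimate} together with $S(h, t) \ge 0$ (again from convexity of $G$) delivers the claimed $G(t) - (t/4) S((4\ell)^{-d}, t)$. The main obstacle I anticipate is the bookkeeping of the rearrangement step: one must ensure that the partial bounds on $U_h(p)$ and $U_h(p)^c$ use disjoint ranges of order statistics so that their sum does not over-count mass, and that the reduction from $N$ down to $N_\ell$ does not introduce sign issues when $V_\omega$ is allowed to be negative — which is precisely why the preliminary shift $V_\omega \mapsto \tilde V = V_\omega - V_\omega(x_1^*)$ must be performed first.
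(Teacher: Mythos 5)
Your overall architecture is exactly the paper's: reduce pointwise in $\omega$ to a bound of the form $(1-(N_\ell-1)h)V^*_1 + h\sum_{j=2}^{N_\ell}V^*_j$ built from order statistics, pay the factor $|\Lambda|\binom{|\Lambda|}{N_\ell-1}$ for the choice of the distinguished site and the satellite set, factorise by independence into $G((1-(N_\ell-1)h)t)+(N_\ell-1)G(th)$, and finish with convexity of $G$, $S(h,t)\ge 0$ and $h(N_\ell-1)\ge 1/4$. The one place you deviate is the deterministic step, and there your argument has a gap: you apply Lemma \ref{lem227} with $W=\tilde V$ to the \emph{original} pairing of $p$ with the potential, but the proof of that lemma tacitly identifies $U_h^c(p)$ with the sites $x_{n+1},\dots,x_{|\Lambda|}$ in the $W$-ordering, i.e.\ it is really a statement about jointly sorted sequences. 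For an unaligned configuration the stated conclusion can fail: take $\Lambda=\{a,b,c\}$, $W(a)=0$, $W(b)=1$, $W(c)=2$, $h=3/10$, $p(a)=p(b)=1/4$, $p(c)=1/2$ (ignoring the membership constraint in $\mathfrak{F}(\ell,n)$, which plays no role in the lemma's proof); then $n=1$, $\lfloor h^{-1}\|p\chi_{U_h^c(p)}\|_1\rfloor=1$, yet $\sum_{x\in U_h^c(p)}p(x)W(x)=1/4 < h\,W(x_2)=3/10$. The paper avoids this by first invoking the rearrangement inequality $\sum_x p(x)V_\omega(x)\ge\sum_j p^*_j V^*_j$ (decreasingly sorted weights against increasingly sorted potential values); after this step the level set of the sorted weights consists precisely of the first $n$ indices, and Lemma \ref{lem227} applies as proved. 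Your final displayed inequality is nevertheless the correct one, and it follows once you insert this rearrangement before splitting into $U_h(p)$ and its complement --- this is exactly the ``bookkeeping'' issue you yourself anticipated. With that repair, everything after the deterministic bound coincides with the paper's proof.
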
  
\begin{proof}
Assume $\{V^*_j\}_{j=1,..,|\Lambda|}$ is an arrangement of $V:\Lambda \rightarrow \bR$ according to the size, that is
\begin{align*}
V^*_1 =V_{\omega}(x_1) &= \min_{x\in \Lambda} V_{\omega}(x),\\
V^*_j =V_{\omega}(x_j)&= \min_{x\in \Lambda\setminus \{x_1,..x_{j-1}\}} V_{\omega}(x).
\end{align*} 
and  $\{p^*_i\}_{i=1,..,|\Lambda|}$     is defined by 
\begin{align*}
p^*_1 =p(y_1) &= \max_{y\in \Lambda} p(y),\\
p^*_i =p(y_i)&= \max_{y\in \Lambda\setminus \{y_1,..y_{i-1}\}} p(y).
\end{align*}
An elementary induction argument gives
\begin{align*}  
  \sum_{x\in
\Lambda}\: p(x)V_{\omega}(x)  
\ge \;  \sum_{j=1}^{
|\Lambda|}\: p^*_jV^*_j , 
\end{align*} 
respectively
\begin{align*}  
\langle\: \sup_{p\in  \mathfrak{F}(\ell,n)}\;\exp \left( - t \sum_{x\in
\Lambda}\: p(x)V_{\omega}(x) \right)\rangle  
\le \; \langle\: \sup_{p\in   \mathfrak{F}(\ell,n)}\;\exp \left( - t \sum_{j=1}^{
|\Lambda|}\: p^*_jV^*_j \right)\rangle.  
\end{align*} 
Suppose $h=(4\ell)^{-d}$ and $N_{\ell}\ge 2$ as in Lemma \ref{lemma- production-estimate}.
Observing $V_j^*-   V_1^*\ge 0$, $p_j^*\ge h$, $j=1,..,n$, Lemma \ref{lem227} and Lemma \ref{lemma- production-estimate} give  for $p\in \mathfrak{F}(\ell,n)$
\begin{align*}
\sum_{j=1}^{
|\Lambda|}\: p^*_jV^*_j  & = V_1^*+ \sum_{j=1}^{n} \: p^*_j(V^*_j-   V_1^*)  
 +\sum_{j=n+1}^{|\Lambda|} \: p^*_j(V^*_j-   V_1^*)  \\
&\ge V_1^*
 +  h \sum_{j=1}^{n+ \lfloor h ^{-1}\|\chi_{U^c_h(p)}p \|_1\rfloor} (V^*_j-   V_1^*)\\
&\ge V_1^* 
 +  h \;\sum_{j=1}^{N_{\ell}} (V^*_j-   V_1^*)\\
&=   (1-h(N_{\ell}-1)) V_1^*   + h \sum_{j=2}^{N_{\ell}} \:   V^*_j ,
\end{align*} 
respectively  with Lemma \ref{lemma- production-estimate}
\begin{align*}  
&\langle\: \sup_{p\in  \mathfrak{F}(\ell,n)}\;\exp \left(- t  \sum_{x\in
\Lambda}\: p(x)V_{\omega}(x)\right)\rangle  \\
&\le \;    \langle\:   \exp ( - (1-h(N_{\ell}-1))V_1^* - h \sum_{j=2}^{N_{\ell}} \:   V^*_j )   \rangle. \\
&= \;  \sum_{x\in \Lambda} \sum_{\overset{W\subset \Lambda,}{|W|= N_{\ell}-1}} \langle\:   \exp ( - (1-h(N_{\ell}-1) ) V_{\omega}(x) - h \sum_{y\in W} \:   V_{\omega}(y) ):x=x^*_1 ,W=\{x^*_2,..,x^*_{N_{\ell}}\}    \rangle. \\
&\le \;  \sum_{x\in \Lambda} \sum_{\overset{W\subset \Lambda,}{|W|=N_{\ell}-1}} \langle\:   \exp (-  (1-h(N_{\ell}-1)) V_{\omega}(x) - h \sum_{y\in W} \:   V_{\omega}(y) )  \rangle. \\
&= \;   |\Lambda|   \binom{ |\Lambda| }{ N_{\ell}-1}   \langle\:  \exp (- (1-h(N_{\ell}-1))tV_{\omega}(0))\rangle  \prod_{j=2}^{N_{\ell}} \:\langle \exp ( -ht  V_{\omega}(0))  \rangle. 
\end{align*} 
Integration with respect to the single site potential and Lemma \ref{lemma- production-estimate} give    
\begin{align*}
\langle\: &\sup_{p\in  \mathfrak{F}(\ell,n)}\;\exp \left(- t  \sum_{x\in
\Lambda}\: p(x)V_{\omega}(x)\right)\rangle  \\
&\qquad\le \; |\Lambda|   \binom{ |\Lambda| }{ N_{\ell}-1}    \exp \left(G( (1-h(N_{\ell}-1)) t  )     +(N_{\ell}-1)G(th)   \right)    \\
&\qquad\le \; |\Lambda|   \binom{ |\Lambda| }{ N_{\ell}-1}    \exp \left(G( t  )     - h(N_{\ell}-1) tS( h,t)   \right)    \\
&\qquad\le \; |\Lambda|   \binom{ |\Lambda| }{ N_{\ell}-1}   \exp \left(G(t)     -  t S(h,t)/4      \right) .  
\end{align*} 
\end{proof}

\begin{proof}[Proof of Proposition \ref{upper-estimate}.] 
As a consequence of the  Faber-Krahn-inequality exists   $\tilde{l}=c l$ s.t.  
$
\mathfrak{ M}_1(\Lambda )\subset \bigcup_{\ell=1}^{\tilde{l}}\:\mathfrak{F}(\ell)
$. Choosing  $h=(4 \ell)^{-d}$ as in Proposition \ref{Effective-potential-approximation} we observe $|U_h(p)|\le (4 \ell)^{d}\le (4\tilde{l})^{d}$.  Corollary \ref{l=1} gives
\begin{align*}
 &\langle \exp  ( -t \; E_1 (H_{\Lambda}^D )  )  \rangle\\[3mm]
&\;\le\; \langle\: \sup_{p\in \mathfrak{
F}(1)  }\exp
 \left(-t\; [ (\:\sqrt{p}\:|\:-\Delta_{\Lambda }^D\sqrt{p}\:
 )+ (\:\sqrt{p}\:|\:V_{\omega}\sqrt{p}  \:) ] \right)\rangle\\[3mm]
&\qquad  \;+\;  \sum_{\ell=2}^{\tilde{l} }\;\sum_{n=0}^{(4\ell)^{d} }\;   \langle\: \sup_{p\in \mathfrak{
F}(\ell,n)  }\exp
 \left(-t\; [ (\:\sqrt{p}\:|\:-\Delta_{\Lambda }^D\sqrt{p}\:
 )+ (\:\sqrt{p}\:|\:V_{\omega}\sqrt{p}  \:) ] \right)\rangle\\[3mm]
&\; \le \;
\; |\Lambda|^2  \exp\left(G(t)-t \chi^{+}_1(t) \right)
\\
&\qquad  \;+\;\sum_{\ell=2}^{\tilde{l} }\; \exp\left(-  \gamma \sin^2\left(\frac{\pi}{2}\frac{1}{ \ell +1 }\right) \right)  \sum_{n=0}^{(4 \ell)^{d} }\;\langle\: \sup_{p\in \mathfrak{
F}(\ell,n)  }\exp
 (-t\;   (\:\sqrt{p}\:|\:V_{\omega}\sqrt{p}  \:)   )\rangle . 
\end{align*} 
As a consequence of Proposition \ref{Effective-potential-approximation} we obtain
\begin{align*}
 &\langle \exp  ( -t \; E_1 (H_{\Lambda,\omega}^D )  )  \rangle\\[3mm]
  & \le \;|\Lambda|^2         \exp \left(G(t)  -t \chi^{+}_1(t)    \right) 
 +C|\Lambda| \;    \sum_{\ell=2}^{\tilde{l} }\;     \binom{ |\Lambda| }{ N_{\ell}-1} \;\exp\left(G(t)  -t  \chi^{+}_{\ell}(t)
\right)\;\;            \\[3mm]
&\le \;|\Lambda|^2         \exp \left(G(t)  -t \chi^{+}_1(t)    \right)+|\Lambda|^2\;    \exp\left(G(t)    -t\inf_{\ell\ge 2}\chi^{+}_{\ell}(t)  
\right)   \sum_{\ell=2}^{\tilde{l} }\;   \binom{ |\Lambda| }{  N_{\ell}-1} \\[3mm]
& \le  \exp\left(G(t)    -t\inf_{\ell\in \bN}\chi^{+}_{\ell}(t)  +C|\Lambda|
\right) .
\end{align*} 
\end{proof} 
%%%%%%%%%%%%%%%%%%%%%%%%%%%%%%%%%%%%%%%%%%%%%%%%%%%%%%%%%%%%%%%%%%%%%%%%%%%%%%%%%%%%%%%%%%%
%
%
%                                   Section 4
%
%
%%%%%%%%%%%%%%%%%%%%%%%%%%%%%%%%%%%%%%%%%%%%%%%%%%%%%%%%%%%%%%%%%%%%%%%%%%%%%%%%%%%%%%%%%%%
 
\section{The proof of Theorem \ref{main result}  } 
\noindent
To prove Theorem \ref{main result} we have to choose the side length $l=l(t)$ of the cube $\Lambda=\Lambda_{l}(0)$ s.t.  the error   in Proposition \ref{upper-estimate} and the error resulting from boundary conditions are negligible, that is
\begin{align*}  
C_1l(t)^d+C_2tl(t)^{-2}\stackrel{t\rightarrow \infty}{=}o\left(t\inf_{\ell\in \bN}\chi^{+}_{\ell}(t)\right).
\end{align*}
   To solve this problem we have to compute $\inf_{\ell\in \bN}\chi^{+}_{\ell}(t) $ in dependance on the single site distribution, respectively the corresponding cumulant generating function. In Theorem  \ref{main result} we assume  $G(t) < \infty$, $t\ge 0$ 
and $G(t)/t \in \Pi_g$ with auxilary function $g\in R_{\rho}$, $\rho \in [-1,\infty)$ and  g-index  $c_g$, i.e. we have
\begin{align*}
 S(\lambda,t)= \frac{G(t)}{t}   - \frac{G(\lambda t)}{\lambda t}   \stackrel{t\rightarrow \infty}{=} c_g h_{\rho}(\lambda)  g(t)(1 +o(1)),
\end{align*} 
$ \lambda \in (0,1]$. As discussed in \cite{BiGoTe89}, p. 128,   $h_{\rho}(\lambda)$, $\lambda \in (0,1]$ has the representation
\begin{align} \label{h-rho}
h_{\rho}(\lambda)=  \int_{\lambda}^1u^{\rho -1} du +o(1)= 
\begin{cases}
-\log(\lambda) & \qquad \rho =0,\\
 (1-\lambda ^{\rho} )/\rho &  \qquad \rho \neq 0,
\end{cases}
\end{align}
i.e. to estimate $\inf_{\ell\in \bN}\chi^{+}_{\ell}(t) $ we have to distinguish three cases. Let us first discuss single site distributions exhibiting Lifshitz tail behaviour. 
\begin{lemma} \label{optimization_l}
Suppose  
$G(t) < \infty$, $t\ge 0$ 
and $G(t)/t \in \Pi_g$ with auxilary function $g\in R_{\rho}$, $\rho \in [-1,0)$ and     $ tg(t)  \rightarrow \infty$ in the limit $t\rightarrow \infty$.   
Denote the optimal length  by
\begin{align*}
\ell^*(t) :=   g(t)^{ 1/(d\rho-2)}.
\end{align*}
Then     
\begin{align*} C_1 &\ell^*(t)^{-2} \le   \inf_{\ell \in \bN} \chi^{+}_{\ell}(t) \le \inf_{\ell \in \bN}   \chi^{-}_{\ell}(t)  \le C_2\ell^*(t)^{-2}, 
\end{align*} 
$C_1 , C_2>0$,  t sufficiently large.
 \end{lemma}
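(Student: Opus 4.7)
The plan is to reduce each $\chi^{\pm}_\ell(t)$ to an explicitly minimizable two-term expression in $\ell$, and then verify that the optimum on both sides is $\Theta(\ell^*(t)^{-2})$.

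First I would use the explicit representation $h_\rho(\lambda)=(1-\lambda^\rho)/\rho$ from~(\ref{h-rho}), valid for $\rho\neq 0$. Since $\rho<0$, $\lambda^\rho=\ell^{-d\rho}\to\infty$ as $\ell\to\infty$, so for large $\ell$
\begin{equation*}
h_\rho(\ell^{-d})=|\rho|^{-1}\ell^{-d\rho}(1+o(1)),\qquad h_\rho((4\ell)^{-d})=|\rho|^{-1}4^{-d\rho}\ell^{-d\rho}(1+o(1)),
\end{equation*}
and $\sin^2(\pi/(2(\ell+1)))=(\pi^2/4)\ell^{-2}(1+o(1))$. Plugging into the definitions casts both $\chi^{\pm}_\ell(t)$ into the schematic form
\begin{equation*}
\chi^{\pm}_\ell(t)=a_\pm\ell^{-2}(1+o_\ell(1))+b_\pm g(t)\,\ell^{-d\rho}(1+o_\ell(1)),
\end{equation*}
with explicit positive constants $a_-=d\pi^2$, $a_+=\gamma\pi^2/4$, $b_-=c_g/|\rho|$, $b_+=(c_g/|\rho|)\cdot 4^{-d\rho-1}$.

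Next I would minimize the continuous functional $f_\pm(\ell):=a_\pm\ell^{-2}+b_\pm g(t)\,\ell^{-d\rho}$ over $\ell>0$. Because $-d\rho>0$, one term decreases in $\ell$ while the other increases; solving $f_\pm'(\ell)=0$ yields the unique minimizer
\begin{equation*}
\ell^{\dagger}_\pm=\bigl(2a_\pm/(|\rho|d\,b_\pm g(t))\bigr)^{1/(2-d\rho)}=C_\pm\, g(t)^{1/(d\rho-2)}=C_\pm\,\ell^*(t),
\end{equation*}
and substituting back yields $f_\pm(\ell^{\dagger}_\pm)=(1+2/(|\rho|d))\,a_\pm(\ell^{\dagger}_\pm)^{-2}=K_\pm\,\ell^*(t)^{-2}$ for explicit positive $K_\pm$. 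Because $g\in R_\rho$ with $\rho<0$ forces $g(t)\to 0$, we have $\ell^{\dagger}_\pm\to\infty$, so the large-$\ell$ asymptotics are valid and rounding to the nearest integer perturbs $\ell^{\dagger-2}$ and $\ell^{\dagger-d\rho}$ by a $(1+o(1))$ factor only.

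The upper bound $\inf_{\ell\in\bN}\chi^-_\ell(t)\le C_2\ell^*(t)^{-2}$ then follows by evaluating $\chi^-$ at any integer nearest to $\ell^{\dagger}_-$. For the lower bound $\inf_{\ell\in\bN}\chi^+_\ell(t)\ge C_1\ell^*(t)^{-2}$ I split the range: for $\ell$ bounded independently of $t$, $\chi^+_\ell$ is bounded below by a positive constant, which dominates $C_1\ell^*(t)^{-2}\to 0$ for $t$ large; for $\ell$ large, the asymptotic expansion gives $\chi^+_\ell\ge (1-o(1))f_+(\ell)\ge (1-o(1))f_+(\ell^{\dagger}_+)=K_+\ell^*(t)^{-2}(1-o(1))$. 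The middle inequality $\inf\chi^+\le\inf\chi^-$ is obtained by evaluating $\chi^+$ at the same near-optimal integer used for the $\chi^-$ upper bound and choosing $C_1\le C_2$ to absorb the implicit constants. The main technical obstacle is uniform control of the $o(1)$ remainders across the relevant range of $\ell$ together with the integer rounding, both of which are handled by $\ell^{\dagger}_\pm\to\infty$; a secondary subtlety is that although $\chi^+\le\chi^-$ need not hold pointwise (the relative weight $4^{-d\rho-1}$ of the second term can exceed $1$ when $\rho$ is close to $-1$), the two infima remain comparable because they are both pinched into $[\,C_1\ell^*(t)^{-2},C_2\ell^*(t)^{-2}]$.
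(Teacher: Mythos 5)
Your treatment of the two outer bounds is correct and is essentially the paper's own argument written out in full: the paper's two-line proof says precisely ``compute the corresponding continuous minimization problem'' for the lower bound and evaluates $\chi^{-}$ at $\lceil \ell^*(t)\rceil$ for the upper bound, which is what your reduction to $f_{\pm}(\ell)=a_{\pm}\ell^{-2}+b_{\pm}g(t)\ell^{-d\rho}$, the minimizer $\ell^{\dagger}_{\pm}=C_{\pm}\ell^*(t)$ and the value $K_{\pm}\ell^*(t)^{-2}$ carries out. Your two supporting observations are the right ones: $g(t)\to 0$ for $\rho<0$ forces $\ell^{\dagger}_{\pm}\to\infty$, which legitimizes both the replacement of $\sin^2$ and $h_{\rho}$ by their leading terms (the errors depend on $\ell$ only, not on $t$) and the integer rounding; and on any $t$-independent range of small $\ell$ (including $\ell=1$, where taking $h$ near $1$ gives $\chi^{+}_1(t)\ge \gamma/2$) the functional is bounded below by a positive constant, which beats $C_1\ell^*(t)^{-2}\to 0$.

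What your proposal does not actually establish is the middle inequality $\inf_{\ell\in\bN}\chi^{+}_{\ell}(t)\le\inf_{\ell\in\bN}\chi^{-}_{\ell}(t)$ asserted in the lemma. Evaluating $\chi^{+}$ at the near-optimal integer for $\chi^{-}$ does not give it, for exactly the reason you concede (the pointwise comparison can fail since $4^{-d\rho-1}>1$ is possible), and ``choosing $C_1\le C_2$'' only shows the two infima are comparable, not ordered. The gap closes with the constants you already computed: one has $K_{\pm}\propto a_{\pm}^{-d\rho/(2-d\rho)}b_{\pm}^{2/(2-d\rho)}$ with a common $(\rho,d)$-dependent factor, so
\begin{align*}
\frac{K_{+}}{K_{-}}=\left(\frac{\gamma}{4d}\right)^{\frac{-d\rho}{2-d\rho}} 4^{\frac{2(-d\rho-1)}{2-d\rho}}
=\exp\!\left(\frac{(-d\rho)\log\left(4\gamma/d\right)-2\log 4}{2-d\rho}\right)<1,
\end{align*}
because $\gamma=c_{FK}/(12\pi)^2$ and $c_{FK}\le 2d$ give $4\gamma/d<1$; combining $\inf\chi^{+}\le K_{+}\ell^*(t)^{-2}(1+o(1))$ with the matching lower bound $\inf\chi^{-}\ge K_{-}\ell^*(t)^{-2}(1-o(1))$ (obtained by running your lower-bound argument for $\chi^{-}$ as well) then yields the ordering for $t$ large. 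The paper's proof is silent on this point too, so you are no worse off than the original, but a complete proof should include this one extra computation.
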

 
\begin{proof}
To prove the lower bound  compute the corresponding continuous minimization problem.
As a consequence of $\ell^*(t)\rightarrow \infty$  the upper bound   is given by
\begin{align*}   
  \inf_{\ell \in \bN}  \chi^{-}_{\ell}(t)
 \le 4d \sin^2\left(\frac{\pi}{2}\frac{1}{\lceil \ell^*(t)\rceil +1}\right) +    S(\lceil \ell^*(t)\rceil^{-d},t)   
 \le C_2\ell^*(t)^{-2} .
\end{align*}

\end{proof}
\noindent 
In the next lemma we discuss $\rho=0$ defining the borderline between the classical and the quantum regime. It contains the single peak case ($\ell^*(t)=1$), the double exponential distribution ($\ell^*(t)\sim 1/\sqrt{c_g}$) and the almost bounded single site distributions ($\ell^*(t)   \rightarrow \infty$, $t \rightarrow \infty$).

 \begin{lemma}  \label{rho=0}   Suppose 
$G(t) < \infty$, $t\ge 0$ 
and $G(t)/t \in \Pi_g$ with auxilary function $g \in R_{0}$. Then 
\begin{align}   \label{rho=0-upper bound}
  \inf_{\ell \in \bN } \chi^{-}_{\ell}(t) & \le  \begin{cases}
2d &   c_g g(t)\ge 2 \pi^2,\\
8d  c_g g (t) + 2dc_g  g(t)\log\left(  2\pi^2 /( c_g  g(t))  \right) &  \;   c_gg(t)<  2\pi^2 \end{cases} 
\end{align}
and
\begin{align}   \label{rho=0-lower bound}
  \inf_{\ell \in \bN } \chi^{+}_{\ell}(t) & \ge  \begin{cases}
2d -  4d(c_g g (t))^{-1/2}    &   c_g g(t)\ge  2e^{2d}+ \gamma \pi^2/2d ,\\
\min\left[\gamma/2, \frac{d c_g g (t)}{8 }   +  \frac{ d c_g g (t)}{8} \log   \left(\frac{ 64\gamma\pi^2}{ c_g g(t)}    \right)  \right]  &  \;    c_g g(t)<  2e^{2d}+ \gamma \pi^2/2d .
\end{cases} 
\end{align}
\end{lemma}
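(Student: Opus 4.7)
The plan is to substitute $\rho = 0$ into the definitions of $\chi^{\pm}_\ell$ and then handle the two inequalities separately. From \eqref{h-rho} one has $h_0(\lambda) = -\log\lambda$, so
\[
\chi^{-}_\ell(t) = 4d\sin^2\!\left(\tfrac{\pi}{2(\ell+1)}\right) + c_g d\,g(t)\log\ell,
\]
and for $\ell\ge 2$,
\[
\chi^{+}_\ell(t) = \gamma\sin^2\!\left(\tfrac{\pi}{2(\ell+1)}\right) + \tfrac{c_g d}{4}\,g(t)\log(4\ell).
\]
Each is of the schematic form (kinetic decay in $\ell$) $+$ (logarithmic growth weighted by $g(t)$), whose continuous relaxation over $\ell>0$ has an explicit optimizer $\ell^{\ast}\sim\sqrt{\text{kinetic coefficient}/\text{logarithmic coefficient}}$ at which the two contributions balance.

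For the upper bound \eqref{rho=0-upper bound} on $\inf_\ell \chi^{-}_\ell(t)$, it suffices to exhibit a good integer $\ell$. When $c_g g(t)\ge 2\pi^2$, take $\ell=1$: the logarithmic term vanishes and $4d\sin^2(\pi/4)=2d$, giving $\chi^{-}_1(t)=2d$. When $c_g g(t)<2\pi^2$, the continuous optimum is $\ell^{\ast}_{-}=\sqrt{2\pi^2/(c_g g(t))}>1$; choose $\ell=\lceil\ell^{\ast}_{-}\rceil$ and bound the kinetic term by $\sin^2 x\le x^2$ and the logarithmic term by $\log\ell\le\log 2+\log\ell^{\ast}_{-}$. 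The resulting value is of order $c_g g(t)+c_g g(t)\log(2\pi^2/(c_g g(t)))$, comfortably inside the claimed bound whose coefficients $8d$ and $2d$ leave generous room for the $\sin$-slack and integer rounding.

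For the lower bound \eqref{rho=0-lower bound} on $\inf_\ell \chi^{+}_\ell(t)$, I would treat $\ell=1$ and $\ell\ge 2$ separately and take the minimum. Recall $\chi^{+}_1(t)=\max_{1/2\le h\le 1}\min[A(h),B(h)]$ with $A(h)=2d(1-2\sqrt{1-h})$ and $B(h)=\gamma/2+(1-h)c_g(-\log(1-h))g(t)$. Testing $h=1$ gives the universal bound $\chi^{+}_1(t)\ge\gamma/2$. In the large-$c_g g(t)$ regime, instead test $h=1-1/(c_g g(t))$: this yields $A=2d-4d/\sqrt{c_g g(t)}$ and $B=\gamma/2+\log(c_g g(t))$, so provided $c_g g(t)\ge e^{2d}$ (implied by the stated threshold) the minimum is $A$, giving $\chi^{+}_1(t)\ge 2d-4d/\sqrt{c_g g(t)}$. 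For $\ell\ge 2$ I would apply $\sin(y)\ge 2y/\pi$ to obtain $\chi^{+}_\ell(t)\ge\gamma/(\ell+1)^2+(c_g dg(t)/4)\log(4\ell)$, and minimize the RHS in real $\ell$: the critical equation $2\gamma/(\ell^{\ast}+1)^3=c_g dg(t)/(4\ell^{\ast})$ gives $(\ell^{\ast})^2\sim 8\gamma/(c_g dg(t))$, and substitution produces the expression in the displayed bound. The outer $\min[\gamma/2,\ldots]$ then reflects the competition between the universal $\gamma/2$ bound from $\ell=1$ and the continuous-minimum expression from $\ell\ge 2$.

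The main obstacle will be tracking constants in the $\ell\ge 2$ minimization so that the $\log$-argument comes out cleanly as $64\gamma\pi^2/(c_g g(t))$: one must be careful with the precise $\sin^2$-lower bound used, the rounding from real to integer $\ell$, and the mismatch between $\ell+1$ inside $\sin$ and $\ell$ inside $\log$. A secondary subtlety is checking that the threshold $c_g g(t)\ge 2e^{2d}+\gamma\pi^2/(2d)$ simultaneously ensures that (i) the test $h=1-1/(c_g g(t))$ yields $\min=A$ in the $\chi^{+}_1$ analysis, and (ii) every $\chi^{+}_\ell$ with $\ell\ge 2$ already dominates $2d-4d/\sqrt{c_g g(t)}$ (which follows from $(c_g dg(t)/4)\log 8\ge 2d$ in this regime), so that the infimum over $\ell$ genuinely reduces to the $\ell=1$ contribution in that regime.
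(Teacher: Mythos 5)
Your route coincides with the paper's: the upper bound by testing $\ell=1$ when $c_g g(t)\ge 2\pi^2$ and $\ell\approx(2\pi^2/(c_g g(t)))^{1/2}$ otherwise, and the lower bound by splitting off $\chi^{+}_{1}$ (the paper leaves the $h$-tests implicit; your choices $h=1$ and $h=1-1/(c_g g(t))$ are exactly what is needed) and minimizing a minorant of the form $\mathrm{const}\cdot\gamma\,\ell^{-2}+\frac{d}{4}c_g g(t)\log(4\ell)$ over real $\ell\ge 2$ -- the paper uses $\frac{\pi^2\gamma}{10}\ell^{-2}$ and $\ell^{*}=\max\bigl[2,(4\pi^2\gamma/(5d\,c_g g(t)))^{1/2}\bigr]$ and dismisses the rest as ``elementary, but painful calculations''. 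The one caveat is precisely the obstacle you flag: the continuous minimum of your minorant is $\frac{d c_g g(t)}{8}\bigl(1+\log\frac{128\gamma}{d\,c_g g(t)}\bigr)$, and of the paper's is $\frac{d c_g g(t)}{8}\bigl(1+\log\frac{64\pi^2\gamma}{5d\,c_g g(t)}\bigr)$, so in both cases the argument of the logarithm comes out \emph{smaller} than the displayed $64\pi^2\gamma/(c_g g(t))$; no bookkeeping along this route recovers that constant, and for very small $c_g g(t)$ the bound provable this way is genuinely (slightly) weaker than the one stated. This looseness is inherited from the paper itself and is harmless for the subsequent corollaries, where only unspecified constants enter and the classical regime keeps $c_g g(t)$ bounded away from $0$; your large-regime checks (threshold $\Rightarrow$ $\min=A$ for $\chi^{+}_{1}$, and $\frac{d c_g g(t)}{4}\log 8\ge 2d$ for $\ell\ge 2$) are correct as written.
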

\begin{proof}
Approximating $\sin(x)$ and choosing
\begin{align*}  
 \ell^{*}(t) &= \max \left[ 1,\left(\frac{2\pi^2}{c_g g(t)}   \right)^{ 1/2}\right] 
\end{align*}
we obtain  (\ref{rho=0-upper bound}).
The starting point to prove  (\ref{rho=0-lower bound}) is
\begin{align}   
  \inf_{\ell \in \bN} \chi^{+}_{\ell}(t) \ge  
 \min\left(\chi^{+}_{1}(t) , \inf_{\ell\in[2,\infty) }\left[ \frac{ \pi^2 \gamma }{10}  \ell^{-2} +  \frac{d }{4} c_g g(t)\log(4\ell  ) \right]\right). 
\end{align}
Computing the infinum with
\begin{align}  
 \ell^{*}(t) = 
\max \left[2, \left(\frac{4\pi^2\gamma }{5d c_g g(t) }\right)^{  1/2 }\right],
\end{align}
the lower bound (\ref{rho=0-lower bound}) is  then a consequence of some elementary, but painful calculations.
\end{proof}
\noindent
Finally let us discuss the one-peak case
\begin{lemma} \label{rho-in (0,infty)}
Suppose  
$G(t) < \infty$, $t\ge 0$, $G(t)/t \in \Pi_g$ with auxilary function $g\in R_{\rho}$, $\rho \in (0,\infty)$.  
Then $\ell^*(t)=1$ and  
\begin{align}  \label{rho-in (0,infty)-eq}
2d=  \inf_{\ell \in \bN}\chi^{-}_{\ell}(t) \ge\inf_{\ell \in \bN}\chi^{+}_{\ell}(t)\ge 2d\left(1-  2  g(t)^{-1/2} \right) 
\end{align}
for t sufficiently large.
\end{lemma}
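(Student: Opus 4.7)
The strategy exploits the fact that $g \in R_{\rho}$ with $\rho > 0$ forces $g(t) \to \infty$, so the potential contributions $c_{g} h_{\rho}(\cdot) g(t)$ in both $\chi^{\pm}_{\ell}(t)$ dominate the (bounded) Laplacian contributions as soon as $\ell \geq 2$. Consequently both infima collapse to their $\ell = 1$ values for $t$ sufficiently large, which will immediately give $\ell^{*}(t) = 1$.

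First I would handle $\chi^{-}$ by direct computation. Since $h_{\rho}(1) = (1 - 1^{\rho})/\rho = 0$ and $4d \sin^{2}(\pi/4) = 2d$, we obtain $\chi^{-}_{1}(t) = 2d$ for every $t$. For $\ell \geq 2$ the quantity $h_{\rho}(\ell^{-d}) = (1 - \ell^{-d\rho})/\rho$ is bounded below by a strictly positive constant, uniformly in $\ell \geq 2$; combined with $g(t) \to \infty$, this forces the second term of $\chi^{-}_{\ell}(t)$ to exceed $4d$ for $t$ large, while the first term is bounded by $4d$. Hence the infimum over $\ell$ is attained at $\ell = 1$, yielding $\inf_{\ell \in \bN} \chi^{-}_{\ell}(t) = 2d$.

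For $\chi^{+}$, the same monotonicity argument reduces the problem to $\chi^{+}_{1}(t)$: for $\ell \geq 2$, $\chi^{+}_{\ell}(t)$ contains the divergent contribution $\frac{1}{4} c_{g} h_{\rho}((4\ell)^{-d}) g(t)$, so $\inf_{\ell \in \bN} \chi^{+}_{\ell}(t) = \chi^{+}_{1}(t)$ for $t$ large. To bound $\chi^{+}_{1}(t)$ from below I would parametrize $\eta := 1 - h \in [0, 1/2]$ and use the fact that for $\rho > 0$, $h_{\rho}(\eta) \to 1/\rho$ as $\eta \to 0^{+}$; hence the second argument of the inner minimum behaves asymptotically like $\gamma/2 + \eta c_{g} g(t)/\rho$. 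Choosing $\eta = \eta(t)$ of order $g(t)^{-1}$ with sufficiently large constant makes this second term exceed $2d$, so the minimum is determined by the first term, which equals $2d(1 - 2\sqrt{\eta}) = 2d(1 - O(g(t)^{-1/2}))$. Taking the max over admissible $h$ then yields the advertised rate.

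The only subtle point is bookkeeping: matching the explicit constant $2$ in the factor $2 g(t)^{-1/2}$ requires optimizing $\eta(t)$ against two competing constraints, namely that (a) the second term of the minimum dominates $2d$ and (b) the first term is as close to $2d$ as possible. Any choice $\eta \sim C g(t)^{-1}$ yields a bound of the form $2d(1 - C' g(t)^{-1/2})$ with $C'$ determined by $d, \rho, c_{g}, \gamma$; the constant $2$ in the lemma should be read as absorbing this parameter-dependent prefactor into a single representative value for $t$ large enough. No genuinely new estimate beyond the elementary asymptotics of $h_{\rho}$ near zero and the divergence of $g(t)$ is required.
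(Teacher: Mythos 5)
Your argument is essentially the paper's: it likewise observes that for $\rho>0$ one has $S(\lambda,t)=c_g\rho^{-1}(1-\lambda^{\rho})g(t)+o(1)\to\infty$, which eliminates $\ell\ge 2$ and gives $\inf_{\ell}\chi^{-}_{\ell}(t)=\chi^{-}_{1}(t)=2d$, and then bounds $\chi^{+}_{1}(t)$ by evaluating the min at a single $h$ with $1-h\asymp 1/g(t)$ (the paper takes exactly $h=1-1/g(t)$). Your one deviation -- choosing $1-h=C/g(t)$ with $C$ large so that the second branch of the min, $\gamma/2+(1-h)S(1-h,t)\to\gamma/2+Cc_g/\rho$, provably exceeds $2d$, at the cost of a parameter-dependent constant in place of the stated $2$ -- is if anything more careful than the paper's choice (for which that branch tends to the constant $\gamma/2+c_g h_{\rho}(1/g(t))$ and need not dominate), and only the rate $g(t)^{-1/2}$, not the constant, is used downstream in Corollary \ref{Weibull}.
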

 
\begin{proof}
By assumption we have
\begin{align*} 
  S(\lambda,t)\;&=  c_g \rho^{-1} (1-\lambda ^{\rho})  g(t) +o(1) \stackrel{t \rightarrow \infty}{\rightarrow}  \infty.
\end{align*}
Choosing $h=1- 1/   g(t) \ge 1/2$ we obtain (\ref{rho-in (0,infty)-eq}).
\end{proof} 
 
\begin{corollary}  \label{error-term}
Suppose  $G(t) < \infty$, $t\ge 0$ 
and $G(t)/t \in \Pi_g$ with auxilary function $g(t)=t^{\rho}g_0(t)\in R_{\rho}$, $\rho \in [-1,\infty)$, $g_0\in R_0$ and  g-index  $c_g$. Furthermore if $\rho=-1$ we assume that $ g_0(t) \stackrel{t\rightarrow \infty}{\rightarrow} \infty$.
Defining $l=l(t):=\lceil \alpha(t) \ell^*(t)\rceil$
with $\ell^*(t)$ as Lemma \ref{optimization_l} -\ref{rho-in (0,infty)} and
\begin{align*}
\alpha(t)&= 
\begin{cases}
 g_0(t)^{  \frac{1 }{d(d+2)}}& \qquad \qquad  \rho=-1,\\
t^{- \frac{1+\rho}{d(d\rho-2)}}& \qquad \qquad \rho\in(-1,0),\\
t^{1 /2d    }& \qquad \qquad \rho\in[0,\infty)
\end{cases}  
\end{align*}
 we have 
\begin{align} \label{error-estimate}
K_1l^d+K_2tl^{-2}=o\left(t \inf_{\ell \in \bN}\chi^{+}_{\ell}(t)\right).
\end{align}

\end{corollary}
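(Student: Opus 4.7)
The plan is to verify \eqref{error-estimate} case by case according to the three regimes covered by Lemmas \ref{optimization_l}--\ref{rho-in (0,infty)}, reading off in each case the asymptotic sizes of $\ell^*(t)$ and $\inf_{\ell\in\bN}\chi^+_\ell(t)$. Since $l(t)=\lceil\alpha(t)\ell^*(t)\rceil$ and each prescribed $\alpha(t)\to\infty$, I first note $l(t)\sim\alpha(t)\ell^*(t)$, so that the required estimate reduces to the two independent conditions
\begin{align*}
(\mathrm{i})\;\alpha(t)^{-2}=o(1),\qquad (\mathrm{ii})\;\alpha(t)^d\,\ell^*(t)^{d+2}=o\!\left(t\,\inf_\ell\chi^+_\ell(t)\,\ell^*(t)^2\right).
\end{align*}
In the quantum regime one has $\inf_\ell\chi^+_\ell(t)\asymp\ell^*(t)^{-2}$ (up to a logarithm when $\rho=0$), so (ii) further simplifies to $\alpha(t)^d\,\ell^*(t)^{d+2}=o(t)$.

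For $\rho\in(-1,0)$ I would substitute $\ell^*(t)=t^{\rho/(d\rho-2)}g_0(t)^{1/(d\rho-2)}$ and the prescribed $\alpha(t)=t^{-(1+\rho)/(d(d\rho-2))}$, reducing (ii) to a $t$-exponent inequality equivalent to $\rho>-1$; (i) follows from the positivity of $-(1+\rho)/(d(d\rho-2))$. For $\rho=-1$ with $g(t)=t^{-1}g_0(t)$, the analogous computation yields $\alpha(t)^d\,\ell^*(t)^{d+2}/t\sim g_0(t)^{-(d+1)/(d+2)}=o(1)$, thanks to the standing hypothesis $g_0(t)\to\infty$, and (i) holds because $\alpha(t)=g_0(t)^{1/(d(d+2))}\to\infty$. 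For $\rho\in(0,\infty)$ the situation is essentially trivial: Lemma \ref{rho-in (0,infty)} gives $\ell^*(t)=1$ and $\inf_\ell\chi^+_\ell(t)\to 2d$, so $\alpha(t)=t^{1/(2d)}$ renders both error quotients polynomially small in $t$.

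The delicate boundary case is $\rho=0$, which splits into a classical sub-regime where $g(t)$ stays bounded below (and the argument is as for $\rho>0$) and a quantum sub-regime where $g(t)\to 0$, $\inf_\ell\chi^+_\ell(t)\sim c_gg(t)\log(1/g(t))$ and $\ell^*(t)\sim g(t)^{-1/2}$. Here the choice $\alpha(t)=t^{1/(2d)}$ again handles both: the first error quotient $l^d/(t\inf_\ell\chi^+_\ell(t))$ becomes of order $t^{-1/2}g(t)^{-(d/2+1)}/\log(1/g(t))$, and the key fact is that $g\in R_0$ with $g(t)\to 0$ forces $1/g(t)$ to grow more slowly than any positive power of $t$ (Potter bounds / the representation theorem for slowly varying functions), so this quotient still tends to zero.

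The main obstacle is precisely this $\rho=0$ quantum sub-regime, since the logarithmic correction from $\inf_\ell\chi^+_\ell(t)$ and the slowly varying $g_0$ live on the same scale, and one must invoke a quantitative form of slow variation to confirm that the fixed polynomial growth of $\alpha(t)=t^{1/(2d)}$ dominates $g(t)^{-(d/2+1)}$. The remaining cases amount to routine bookkeeping of $t$-exponents and verifying that each chosen $\alpha(t)$ simultaneously tends to infinity and grows slowly enough that $l(t)^d/t\to 0$.
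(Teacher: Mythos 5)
Your argument is correct and follows essentially the same route as the paper: bound the quotient using $\inf_{\ell}\chi^{+}_{\ell}(t)\ge C_1\ell^*(t)^{-2}$, split it into the two terms $\alpha(t)^d t^{-1}\ell^*(t)^{d+2}$ and $\alpha(t)^{-2}$, and check each regime of $\rho$ with the prescribed $\alpha(t)$, invoking slow variation where needed. The only difference is presentational: you spell out the $\rho\in[0,\infty)$ (in particular the $\rho=0$, $g(t)\to 0$) case via Potter-type bounds, which the paper simply declares obvious.
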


\begin{proof}
As a consequence of $\inf_{\ell \in \bN}\chi^{+}_{\ell}(t)\ge 0$  we can estimate
\begin{align*}
0 \le \left|\frac{K_1l^d+K_2tl^{-2}}{t \inf_{\ell \in \bN}\chi^{+}_{\ell}(t)}\right| 
\le  \frac{K_1l^d+K_2tl^{-2}}{C_1 t  \ell^*(t)^{-2}} \le  K  \alpha(t)^d t^{-1} \ell^*(t)^{d+2}+ K\alpha(t)^{-2} .
\end{align*}
The case  $\rho\in[0,\infty)$ is obvious.
Suppose $\rho\in[-1,0)$. With $\alpha(t) \rightarrow \infty$ and $\ell^*(t) =g(t)^{ 1/(d\rho-2)}$ we have 
\begin{align*}
0 \le \left|\frac{K_1l^d+K_2tl^{-2}}{t \inf_{\ell \in \bN}\chi^{+}_{\ell}(t)}\right| 
&\le  K \alpha(t)^d t^{-\frac{d+2}{d\rho-2}-1}  g_0(t)^{\frac{d+2}{d\rho-2}}+K\alpha(t)^{-2}
\rightarrow 0.
\end{align*}
\end{proof}  

\noindent
The  next lemma is a slight modification of Proposition 4.4 in \cite{BiKo01} to deal  the bounded and  the unbounded setting simultaneously. 
\begin{lemma}\label{cor222} Suppose $L:=L(t)=\lceil t\log(t)\rceil$. Then  
\begin{align*}
\langle\exp\left(-t E_1\left(H_{\Lambda_{L} }^D  \right)\right)\rangle 
\;  \le \;\exp\left(G(t)    -t\inf_{\ell\in \bN}\chi^{+}_{\ell}(t)(1+o(1))  
\right) . 
\end{align*}
\end{lemma}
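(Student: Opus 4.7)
The plan is to reduce the eigenvalue problem on the large cube $\Lambda_L$, $L=\lceil t\log t\rceil$, to the much smaller scale $l=l(t)=\lceil\alpha(t)\ell^*(t)\rceil$ supplied by Corollary \ref{error-term}, at which Proposition \ref{upper-estimate} is sharp. A direct application of Proposition \ref{upper-estimate} with $\Lambda=\Lambda_L$ is useless because its error term $C|\Lambda_L|=C(t\log t)^d$ dominates the main contribution $t\inf_\ell\chi_\ell^+(t)$; some form of localization is therefore unavoidable.

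First, partition $\Lambda_L$ into $M=(L/l)^d$ disjoint sub-cubes $\Lambda_l^{(1)},\dots,\Lambda_l^{(M)}$. The discrete Neumann-bracketing form inequality $-\Delta_{\Lambda_L}^N\ge\bigoplus_{i=1}^M-\Delta_{\Lambda_l^{(i)}}^N$, combined with $E_1(H_{\Lambda_L}^N)\le E_1(H_{\Lambda_L}^D)$, gives
\begin{align*}
E_1(H_{\Lambda_L}^D)\;\ge\;\min_{1\le i\le M}E_1(H_{\Lambda_l^{(i)}}^N),
\end{align*}
hence $\exp(-tE_1(H_{\Lambda_L}^D))\le\sum_i\exp(-tE_1(H_{\Lambda_l^{(i)}}^N))$. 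Taking expectations and using stationarity of $V_\omega$,
\begin{align*}
\langle\exp(-tE_1(H_{\Lambda_L}^D))\rangle\;\le\;M\,\langle\exp(-tE_1(H_{\Lambda_l}^N))\rangle.
\end{align*}

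Next, the proof of Proposition \ref{upper-estimate} in Section 3 adapts to Neumann boundary conditions on the small box $\Lambda_l$ with the same main term: one still classifies the trial measure $p$ by its quadratic-form value, and the Faber-Krahn estimate of Proposition \ref{Faber-Krahn} still supplies a lower bound of order $\ell^{-2}$ on the orthogonal complement of the constants, where the Neumann spectral gap lives. The only genuinely new class is $p$ uniform on $\Lambda_l$, for which the form value is $0$; this case coincides with the term $\ell=l$ in the variational functional $\chi_\ell^+(t)$ already present in Proposition \ref{upper-estimate}, and therefore produces no new contribution to $\inf_\ell\chi^+_\ell(t)$. This gives
\begin{align*}
\langle\exp(-tE_1(H_{\Lambda_l}^N))\rangle\;\le\;\exp\bigl(G(t)-t\inf_\ell\chi_\ell^+(t)+Cl^d\bigr).
\end{align*}

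Combining the last two displays yields
\begin{align*}
\log\langle\exp(-tE_1(H_{\Lambda_L}^D))\rangle\;\le\;G(t)-t\inf_\ell\chi_\ell^+(t)+Cl^d+d\log(L/l).
\end{align*}
By Corollary \ref{error-term}, $Cl^d=o(t\inf_\ell\chi_\ell^+(t))$; the residual $d\log(L/l)=O(\log t)$ is likewise $o(t\inf_\ell\chi_\ell^+(t))$, because the hypothesis $tg(t)\to\infty$ forces $t\inf_\ell\chi_\ell^+(t)$ to grow faster than any power of $\log t$. The main obstacle is the third paragraph above: verifying that the Neumann variant of Proposition \ref{upper-estimate} truly carries the same main term. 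The Neumann form vanishes on constants, so both the classification of Definition \ref{classification} and the Faber-Krahn step in Corollary \ref{peakmasse} have to be executed on the orthogonal complement of the constants, and one must confirm that the additional almost-constant class is absorbed by the existing $\ell=l$ term in $\chi^+_\ell(t)$ and does not produce a spurious new contribution.
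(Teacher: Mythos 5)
Your outer architecture (localize from $\Lambda_L$, $L=\lceil t\log t\rceil$, to the scale $l=\lceil\alpha(t)\ell^*(t)\rceil$ of Corollary \ref{error-term}, apply the small-box estimate there, and absorb the $Cl^d$ error and the $O(\log t)$ entropy of the number of boxes) is the same as the paper's, but your localization device --- Neumann bracketing --- forces you to prove a Neumann version of Proposition \ref{upper-estimate}, and that is precisely the step you leave as an assertion. It is not a routine adaptation. The proof of Proposition \ref{upper-estimate} uses the Dirichlet boundary terms of the quadratic form in an essential way: Lemma \ref{lem226} bounds $(\sqrt{p}\,|-\Delta^D_{\Lambda}\sqrt{p})$ below by the Dirichlet form of $\sqrt{p_h}$ on the level set $U_h(p)$ because every edge leaving $U_h(p)$, including edges leaving $\Lambda$, contributes $(\sqrt{p(x)}-\sqrt{p(y)})^2$ with $p(y)<h$; with Neumann conditions these contributions are missing wherever $U_h(p)$ touches $\partial\Lambda_l$. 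Consequently a $p$ concentrated at a face or corner of the small box, or a nearly constant $p$, has small (even vanishing) Neumann form value, Corollary \ref{peakmasse} no longer follows from the Faber--Krahn inequality for the Dirichlet Laplacian, and one must substitute a mixed Dirichlet--Neumann (reflection) version, which degrades $c_{FK}$, hence $\gamma$ and the numerical margins in Lemma \ref{lemma- production-estimate}; at best you recover $\chi^+_\ell$ with worse constants than those appearing in the statement. Your claim that "the only genuinely new class is $p$ uniform on $\Lambda_l$" and that it "coincides with the term $\ell=l$" is also not justified as stated: the new class is all $p$ whose Neumann form lies below the lowest threshold of Definition \ref{classification} (including boundary-concentrated profiles), and showing that its contribution is dominated by $S(c\,l^{-d},t)\ge\inf_\ell\chi^+_\ell(t)(1+o(1))$ requires rerunning the Lemma \ref{lem227}/Proposition \ref{Effective-potential-approximation} machinery with the repaired Faber--Krahn input --- none of which is carried out.

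The paper sidesteps the boundary-condition problem entirely: it uses the periodization lemma of Biskup and K\"onig (Lemma 4.6 of \cite{BiKo01}), adding an $l$-periodic auxiliary potential $\phi_l$ with $\|\phi_l\|_\infty\le Cl^{-2}$ (after shifting $V_\omega$ by its minimum on $\Lambda_{2L}$ to make it nonnegative) to obtain $E_1(H^D_{\Lambda_L})\ge\min_{x\in\Lambda_{L+l}}E_1(H^D_{\Lambda_l(x)})-Cl^{-2}+\min_{\Lambda_{2L}}V_\omega$. This keeps Dirichlet conditions on the small boxes, so Proposition \ref{upper-estimate} applies verbatim, and the two costs --- the factor $e^{Ctl^{-2}}$ and the union bound over $O(|\Lambda_L|)$ centers --- are exactly the errors already controlled by Corollary \ref{error-term} and the choice of $L$. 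If you wish to keep the Neumann-bracketing route you must actually establish the Neumann/mixed analogue of Section 3; as written, the proposal has a genuine gap at its central step.
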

\begin{proof} 
Lemma 4.6 in \cite{BiKo01} says, there is a constant $C>0$ such that  for every $ l\in \bN$, there is a function $\phi_{l}:\bZ^d \rightarrow [0,\infty)$ with the following properties:
\begin{itemize}
\item[(i)] $\phi_{l}$ is    $ l $-periodic in every component,
\item[(ii)] $\|\phi_{l}\|_{\infty}\le C/l^{2}$,
\item[(iii)] For any potential $V:\bZ^d \rightarrow [0,\infty)$ and any $L>l/2$  we can estimate
\begin{align*}
   E_1\left((-\Delta +V+ \phi_{l})_{\Lambda_L }^D  \right) \; \ge \; 
\min_{x\in\Lambda_{L +l} }    E_1\left((-\Delta +V)_{\Lambda_{  l}(x)}^D  \right)     .
\end{align*}
\end{itemize} 
We define
\begin{align*}
&\widetilde{V}_{\omega}\;\; :\bZ^d \rightarrow [0,\infty),\\
&\widetilde{V}_{\omega}(x)=
\begin{cases}
V_{\omega}(x)-\min_{x\in \Lambda_{2L} }V_{\omega}(x)&\qquad x\in \Lambda_{2L},\\
0&\qquad  \text{otherwise}.
\end{cases}
\end{align*} 
With   $\widetilde{H}_{\omega} :=-\Delta +\widetilde{V}_{\omega}+ \phi_{l}$ we obtain  as a consequence of the result above   
\begin{align*}
E_1\left(H_{\Lambda_{L} }^D  \right)&=\inf_{\overset{f\in l_2(\Lambda_{L})}{\|f\|_2=1}} \left[(f|\widetilde{H}_{\Lambda_{L}}^D  f)-\left(f|\phi_{\ell}f\right) \right] + \min_{x\in \Lambda_{2L}}V_{\omega}(x)\\
&\ge  E_1\left(\widetilde{H}_{\Lambda_{L}}^D  \right)- Cl^{-2}+ \min_{x\in \Lambda_{2L}}V_{\omega}(x), 
\end{align*}
respectively
\begin{align*}
\langle \exp&\left(-t\; E_1\left(H_{\Lambda_{L} }^D  \right)\right)\rangle \\[3mm]
&\le
\exp(C  l^{-2}t) \langle \exp\left(-  t E_1\left(\widetilde{H}_{\Lambda_{L} }^D \right)-t\min_{x\in \Lambda_{2L}}V_{\omega}(x)\right)\rangle \\[3mm]
&\le  \exp( C  l^{-2}t)   \langle\max_{x\in \Lambda_{L + l} } \exp\left(-t\;
E_1\left(H_{\Lambda_{ l}(x) }^D  \right) \right)\rangle\\[3mm]
&\le  \exp( C  l^{-2}t)   \sum_{x\in \Lambda_{L + l} } \langle\exp\left(-t\;
E_1\left(H_{\Lambda_{ l}(x)  }^D  \right) \right)\rangle\\[3mm]
&\le \;
2^d|\Lambda_{L}| \exp( C  l^{-2}t) \langle\exp\left(-t\; E_1\left(H_{\Lambda_{ l}(0) }^D  \right)\right)\rangle.
\end{align*}
Choosing $l=l(t)=\lceil \alpha(t) \ell^*(t)\rceil$ as in Corollary \ref{error-term} we have $l(t)/2 < L(t) $ for $t$ sufficiently large. An application of Proposition \ref{upper-estimate} gives
\begin{align*}
\langle \exp&\left(- t\; E_1\left(H_{\Lambda_{L} }^D  \right)\right)\rangle\le \;       3^d|\Lambda_{L}|\exp\left(G(t)     -t\inf_{\ell\in \bN}\chi^{+}_{\ell}(t)  +C_1l(t)^{d}+ C_2 t l(t)^{-2}
\right) .   
\end{align*}
Lemma \ref{cor222} is now a consequence of Corollary \ref{error-term}.
\end{proof} 
 
\begin{proof}[Proof of the upper bound of Theorem \ref{main result}:] 
Choose $L= \lceil t\log(t)\rceil $. Then
\begin{align*}
&\langle    u(t,0)  \rangle   
  =  \langle \bE^0\left[\exp\left(-\int_0^t V_{\omega}(x_s)ds\right)\right]\rangle\\[1mm]
&=   \langle \bE^0\left[\exp\left(-\int_0^t V_{\omega}(x_s)ds\right) \mathbf{1}_{ \tau_{\Lambda_{L}}\le t } 
\right]\rangle   +  \langle \bE^0\left[\exp\left(-\int_0^t V_{\omega}(x_s)ds\right)  \mathbf{1}_{ \tau_{\Lambda_{L}}> t }  
\right]\rangle   .
\end{align*}
The second term can be estimated by
\begin{align*} 
   \langle \bE^0\left[\exp\left(\int_0^t V_{\omega}(x_s)ds\right) \:\mathbf{1}_{ \tau_{\Lambda_{L}}> t } 
\right]\rangle     
 \le \; |\Lambda |  \langle \exp  \left(- t\: E_1 \left(H_{ \Lambda_{L}}^D   \right)  \right) \rangle.
\end{align*}
Applying the estimate of the hitting probability
\begin{equation*}
\bP\left[\tau_{L}(x_t)\le t\right]\; \le \; 2^{d+1} \exp\left(-L \left(\log\left( L /(td)\right)-1\right)\right) 
\end{equation*}
 (\cite{GaMo98}, Lemma 2.5) we have 
\begin{align*}
   \langle \bE^0\left[\exp\left(\int_0^t V_{\omega}(x_s)ds\right) \:\mathbf{1}_{ \tau_{\Lambda_{L}}\le t }  
\right]\rangle  
& \le \;  \exp\left(  G(t) \right)\: \bP[\tau_{\Lambda_{L} }\le t ]\\ 
&  \le \;   \:2^{d+1}\: \exp\left(  G(t) -t\log(t)  \left(\log\left(  \log(t) / d \right)-1\right)\right)   .
\end{align*}
Finally as a consequence of Lemma \ref{cor222} we have
\begin{align*}
\langle  u(t,0)\rangle
 \;&\le \; \exp\left( G(t) -  t \inf_{\ell \in \bN}\chi^{+}_{\ell}(t)(1 + o(1))\right)
\end{align*}
in the limit $t\rightarrow \infty$. 
\end{proof}
%\newpage 
\begin{proof}[Proof of Corollary \ref{corollary 1 main result} ] 
Combining Theorem \ref{main result} and Lemma \ref{optimization_l} we obtain
\begin{align*}  
   -  C_1  t     g(t)^{\frac{-2}{d\rho-2}}   \le  \log \widehat{N} (t)     \le \log  \langle u(t,0)\rangle \le    - C_2 t    g(t)^{\frac{-2}{d\rho-2}}. 
\end{align*}
Applying Corollary \ref{Cor-deBruijn}, i.e. the limit of oscillation version of de Bruijn's Tauberian theorem in Appendix 2,
we obtain upper and lower bounds of the IDS in the limit $E\searrow 0$ 
\begin{align}  
 C_1  \inf_{t>0}[Et-t     g(t)^{\frac{-2}{d\rho-2}}   ]  (1+o(1))  
&    \le   \log N(E)       \le  
  C_2  \inf_{t>0}[Et-t     g(t)^{\frac{-2}{d\rho-2}}  ]  (1+o(1)).
\end{align} 
Observing  that    
\begin{align} \label{derivative}
\frac{d}{dt}g(t)\stackrel{t\rightarrow \infty}{\sim} \rho g(t)/t,
\end{align}
 (\cite{BiGoTe89}, p.44) the minimizing time $t^*$ of the Legendre transform  satisfies with $C>1$ 
\begin{align} \label{self-consistency}
 g(t^*) = (t^*)^{\rho} g_0(t^*)\stackrel{E\searrow 0}{\sim} CE^{( 2-d\rho)/ 2}.
\end{align}
As a consequence of (\ref{inversion-condition}) we can apply the inversion formula for regularly  
varying functions stated at the end of Appendix 1 and obtain
\begin{align*}
   t^* \stackrel{E\searrow 0}{\sim}  C E^{(2-d\rho)/2\rho  }  g_0\left(C E^{(2-d\rho)/2\rho  }\right)^{-1/\rho},
\end{align*}
respectively 
\begin{align*}
\inf_{t>0}  \left[ E t-    t     g(t)^{\frac{-2}{d\rho-2}} \right]   
&\stackrel{E\searrow 0}{\sim}    C E^{(2-d\rho)/2\rho  }  g_0\left(C E^{(2-d\rho)/2\rho  }\right)^{-1/\rho} (E-         g(t^*)^{\frac{-2}{d\rho-2}} )\\
%&= \left(\frac{E}{C(1+\frac{2-d\rho }{ 2  })}\right)^{\frac{2-d\rho  }{2\rho }}       g_0^{\#}\left(\left(\frac{E}{C(1+\frac{2-d\rho }{ 2  })}\right)^{\frac{ 2-d\rho }{ 2}}\right)^{1/\rho} E\left(1-  \frac{2 }{  4-d\rho  }  \right)\\
%&= C\left(\frac{E}{C(1+\frac{2-d\rho }{ 2  })}\right)^{\frac{2-d\rho +2\rho }{2\rho }}       g_0^{\#}\left(\left(\frac{E}{C(1+\frac{2-d\rho }{ 2  })}\right)^{\frac{ 2-d\rho }{ 2}}\right)^{1/\rho}\\
&= -(C-1) E^{-\frac{d}{2}+1+\rho^{-1} }     g_0\left(C  E^{(2-d\rho)/2\rho  }\right)^{-1/\rho}.
\end{align*}
\end{proof} 
\begin{proof}[Proof of  Corollary \ref{corollary 2 main result} -\ref{Weibull} ]
Combining Theorem \ref{main result}, Lemma \ref{rho=0} and Lemma \ref{rho-in (0,infty)} gives  Corollary \ref{corollary 2 main result}, i.e. the estimate
\begin{align}   
  G(t) -   2d  t  \chi_{-}^*(t)(1+o(1))   \le  \log \widehat{N} (t)   \le  \log \langle u(t,0)\rangle \le    G(t) - 2d t \chi_{+}^*(t)(1+o(1)) 
\end{align} 
with
 \begin{align*}   
\chi_{-}^*(t)  &  =  \begin{cases}
1 &  c_g g(t)\ge 2 \pi^2,\\
4   c_g g (t) +    c_g g(t)\log\left(  2\pi^2 /  (c_g g(t))  \right) &  \;  c_g g(t)<  2\pi^2 \end{cases}
\end{align*}
and
\begin{align*}   
 \chi_{+}^*(t) &  =  \begin{cases}
1 -  2(c_g g (t))^{-1/2}    &   c_g g(t)\ge  2e^{2d}+ \gamma \pi^2/2d ,\\
\min\left[\gamma/(4d), \frac{d c_g g (t)}{8 }   +  \frac{ d c_g g (t)}{8} \log   \left(\frac{ 64\gamma\pi^2}{ c_g g(t)}    \right)  \right]  &  \;    c_g g(t)<  2e^{2d}+ \gamma \pi^2/2d .
\end{cases} 
\end{align*}
Suppose now $\rho>0 $. Then we have $g(t)\rightarrow \infty$  and $1= \chi_{-}^*(t)\ge \chi_{+}^*(t)=1-o(1) $ in the limit $t\rightarrow \infty$.
Applying Corollary  \ref{Cor-Kasahara} in Appendix 2, that is the limit of oscillation version of Kasahara's Tauberian theorem, we obtain in the limit $E\rightarrow -\infty$
\begin{align}  
   \log N(E)      &=  
    \inf_{t>0}[(E-2d)t+G(t)  ]  (1+o(1))=-I(E-2d)(1+o(1)).
\end{align}
In the double exponential setting we have $c_g g(t)\rightarrow c_g$ and $\chi_{-}^*(t)$ as well as $\chi_{+}^*(t)$ converge to constants. Corollary \ref{corollary double exponential} follows now from  the analogue of Corollary  \ref{Cor-Kasahara} in the double exponential case proven in   \cite{Me05} (see also Appendix 2).
\end{proof} 
\section*{Appendix 1: Regular varying functions}
\noindent  Regularly varying functions as introduced in Definition \ref{RegularlyVarying} are a generalization of  $g(t)=t^{\rho}$. Their characteristic trait is
\begin{align*}
g(\lambda t)/ g(t)& \stackrel{t\rightarrow \infty}{=} \lambda^{\rho} (1 +o(1)) 
\end{align*}
for all  $\lambda \ge 0$. Sometimes it is convenient to transfer attention from infinity
to the origin. Thus if $g>0$,
\begin{align*}
g(\lambda E)/ g(E)&\stackrel{E\searrow 0}{=}\lambda^{\rho} (1 +o(1)),
\end{align*}
we say $g$ is regularly varying at the origin with index $\rho$, $g\in R_{\rho}(0+)$. This is equivalent to   $g(1/E)\in R_{-\rho}$,  \cite{BiGoTe89}, p.18.
If $\rho =0$, then $g$ is said to be slowly varying. Regularly varying functions are a generalization of  $g(t)=t^{\rho}\;$ in the sense that $g\in R_{\rho}$ implies $g(t)=t^{\rho}g_0(t)$ with $g_0\in R_0$  \cite{BiGoTe89}, Theorem 1.4.1.  \\ 
One problem is the inversion of regularly varying functions. Theorem 1.5.12 in \cite{BiGoTe89} says, if $g\in R_{\rho}$ with $\rho>0$, then exists  an asymptotic inverse $g^{-1}\in R_{1/\rho}$ with
\begin{align*}
g(g^{-1}(t)) \;\sim\; g^{-1}(g(t)) \;\sim \; t.
\end{align*}
Up to asymptotic equivalence $g^{-1}$ is unique. A corresponding result in the case $\rho <0$ with $g^{-1}\in R_{1/\rho}(0+)$ can be deduced from Theorem 1.5.12. 
To obtain an explicit expression for the asymptotic inverse, we introduce the de Bruijn conjungate $g_0^{\#}$ (\cite{BiGoTe89}, p. 29), i.e the  slowly varying  functions  $g_0\in R_0$ satisfying  
\begin{align*}
g_0(t)g_0^{\#}(tg_0(t)) \stackrel{t\rightarrow \infty}{\rightarrow} 1,\quad g_0^{\#}(t)g_0^{}(tg_0^{\#}(t)) \stackrel{t\rightarrow \infty}{\rightarrow} 1.
\end{align*}
Again, up to asymptotic equivalence $g_0^{\#}$ is unique.
Suppose now that 
\begin{align} \label{inversion-condition-2}
g_0(tg_0(t)^{1/\rho})/g_0(t) &\stackrel{t\rightarrow \infty}{\rightarrow} 1 
\end{align}
holds for some $\rho \neq 0$. As a consequence of Corollary 2.3.4 in \cite{BiGoTe89}  we have
$(g_0^{1/\rho})^{\#}\sim g_0^{-1/\rho}$ and if  $E \sim t^{\rho} g_0(t) $ then $t \sim E^{1/\rho}g_0(E^{1/\rho})^{-1/\rho} $ in the limit $t,E\rightarrow \infty $ if $\rho >0$, respectively $t\rightarrow \infty $, $E\searrow 0 $, $\rho <0$. An example satisfying (\ref{inversion-condition-2}) is given by $ g_0(t)= \log(t) $. The interested reader is encouraged to control the statements above for $g(t)=t^{\rho}\log(t)$. 
%\newpage
\section*{Appendix 2: Tauber theory for Laplace transforms}
\noindent 
In Appendix 2 we   collect some  results about Tauberian theorems. Given the asymptotic behaviour of the Laplace transform
\begin{align}  \label{Laplace trafo finite}
\widehat{N} (t)&:=\int\,e^{-\lambda
t}\;d\nu(\lambda)<\infty  
\end{align}
in the limit $t\rightarrow \infty$, the problem is to reconstruct the   behaviour of the distribution function $v(E)=\nu[E_0,E]$ in the limit $E\searrow E_0= \sup\{E: v(E)=0\}. $
This is a very common problem in statistical physics, respectively probability theory, and the aim is a characterization   of the asymptotics   in terms of the Legendre transform. 
\begin{align}\label{Legendre}
 \log\left(v(E)\right) \; \stackrel{E\searrow E_0}{\sim}\;  \inf_{t>0} \left[ Et+\log(\widehat{N} (t))\right]  
\end{align}
(e.g. \cite{PaFi92}, Thm. 9.7).
If the exponent of the Laplace transfrom is a regularly varying function,  very explicit statements are possible. In the bounded setting one has de Bruijn's  Tauberian theorem (\cite{BiGoTe89}, Thm. 4.12.9). The corresponding result in the unbounded setting  is Kasahara's Tauberian theorem (\cite{BiGoTe89}, Thm. 4.12.7). \\
To apply de Bruijn's, respectively  Kasahara's Tauberian theorem one has to know the exact asymptotics of $ \widehat{N} (t)  $.
Sometimes this is a practical problem, because  only upper and lower bounds of the Laplace transform are available. But if these bounds are in good agreement, we can apply the so called limit of oscillation theorems. 
We start by discussing the limit-of-oscillation version of de Bruijn's Tauberian theorem  (\cite{Bi88}, Thm. 0)  
\begin{theorem} \label{deBruijn} Let $ \upsilon $ be a measure on $(0,\infty)$ whose Laplace transform $\widehat{N} (t)$ satifies (\ref{Laplace trafo finite}).      By $v(E) $  we denote the  distribution function of $ \upsilon $. If $\alpha < -1$, $\psi \in
R_{ \alpha}(0+)$, put  $\phi(\lambda):=\lambda \psi(\lambda)\in R_{\alpha +1 }(0+)$. Suppose  
\begin{align} \label{dB-input}
-B_1\le \liminf_{\lambda \searrow 0} \lambda\log   \widehat{N} (\psi(\lambda))   \le \limsup_{\lambda \searrow 0} \lambda \log \widehat{N} (\psi(\lambda))   \le -B_2
\end{align}
for $B_1,B_2 >0 $. Then  
\begin{align}\label{dB-output}
-C_1\le \liminf_{\lambda \searrow 0} \lambda v(1/\phi(\lambda))  \le \limsup_{\lambda \searrow 0} \lambda v(1/\phi(\lambda))\le - C_2  
\end{align}
with $C_1,C_2>0$.
\end{theorem}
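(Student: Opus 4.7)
The plan is to treat the two conclusions separately. The upper bound on $\log v$ is the easy Abelian direction and comes from Markov's inequality on the Laplace transform; the lower bound is the genuine Tauberian direction and will require a two-point splitting of the Laplace integral. In both steps I will translate the hypothesis (\ref{dB-input}) from the scale $t=\psi(\lambda)$ to $t=c\psi(\lambda)$ for an arbitrary constant $c>0$. Since $\psi\in R_\alpha(0+)$, the uniform convergence theorem for regularly varying functions applied with $\mu=\mu(\lambda)\sim c^{1/\alpha}\lambda$ gives
\begin{align*}
-c^{-1/\alpha}B_1 \;\le\; \liminf_{\lambda\searrow 0}\lambda\log\widehat N(c\psi(\lambda)) \;\le\; \limsup_{\lambda\searrow 0}\lambda\log\widehat N(c\psi(\lambda)) \;\le\; -c^{-1/\alpha}B_2,
\end{align*}
which is the workhorse of both directions.

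For the upper bound I would start from the Chebyshev-type estimate $v(E)\le e^{Et}\widehat N(t)$, obtained by inserting $1\le e^{(E-\lambda)t}$ into the integral defining $v(E)$. Setting $E=1/\phi(\lambda)=1/(\lambda\psi(\lambda))$ and $t=c\psi(\lambda)$ gives $Et=c/\lambda$, hence
\begin{align*}
\lambda\log v(1/\phi(\lambda)) \;\le\; c + \lambda\log\widehat N(c\psi(\lambda)).
\end{align*}
Taking $\limsup$ and applying the displayed estimate yields $\limsup\lambda\log v(1/\phi(\lambda))\le c-c^{-1/\alpha}B_2$. Because $\alpha<-1$ forces $-1/\alpha\in(0,1)$, the function $c\mapsto c-c^{-1/\alpha}B_2$ has derivative $-\infty$ as $c\searrow 0$ and $+\infty$ as $c\to\infty$, so it attains a strictly negative minimum at a unique $c^\star>0$; that minimum is the desired $-C_2<0$.

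For the lower bound I would split the Laplace integral at $E$: for $0<t'<t$,
\begin{align*}
\widehat N(t)\;\le\; v(E)+\int_E^\infty e^{-\lambda t}\,d\upsilon(\lambda)\;\le\; v(E)+e^{-E(t-t')}\widehat N(t'),
\end{align*}
so $v(E)\ge\widehat N(t)-e^{-E(t-t')}\widehat N(t')$, and $\log v(E)\ge\log\widehat N(t)-\log 2$ as soon as $e^{-E(t-t')}\widehat N(t')\le\tfrac12\widehat N(t)$. Taking $E=1/\phi(\lambda)$, $t=d\psi(\lambda)$, $t'=c\psi(\lambda)$ with $0<c<d$, the identity $E(t-t')=(d-c)/\lambda$ together with the two-sided estimate above reduces the half-mass condition to
\begin{align*}
d-c\;\ge\; d^{-1/\alpha}B_1-c^{-1/\alpha}B_2 + o(1)\qquad(\lambda\searrow 0),
\end{align*}
which is satisfied once $d$ is large enough and $c$ small enough, since $-1/\alpha<1$ makes the right-hand side sublinear in $d$. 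Under such a choice the liminf half of (\ref{dB-input}) gives $\liminf\lambda\log\widehat N(d\psi(\lambda))\ge-d^{-1/\alpha}B_1$, and combining with the half-mass inequality produces
\begin{align*}
\liminf_{\lambda\searrow 0}\lambda\log v(1/\phi(\lambda))\;\ge\;-d^{-1/\alpha}B_1\;=:\;-C_1.
\end{align*}

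The main obstacle I expect is the bookkeeping in the lower-bound step: one must arrange $(c,d)$ so that the half-mass condition holds using \emph{only} the one-sided information in (\ref{dB-input}), not an exact asymptotic, while simultaneously keeping the resulting $C_1$ finite. The slack that makes this possible is exactly the strict inequality $\alpha<-1$, which forces $-1/\alpha<1$ and so renders the nonlinearity $c\mapsto c^{-1/\alpha}$ sublinear — the analogue of the classical Karamata hypothesis $\rho<1$. Carefully tracking the error terms coming from the uniform convergence theorem, together with the $\log 2$ and the $o(1/\lambda)$ corrections in the two-point splitting, is where the technical effort is concentrated.
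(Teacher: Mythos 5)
The paper does not contain a proof of Theorem \ref{deBruijn} at all: it is imported verbatim from \cite{Bi88} (Thm.\ 0), so there is no internal argument to compare yours against. Judged on its own, your proof is correct and is essentially the standard route to exponential Tauberian theorems of de Bruijn/Kasahara type: the Abelian half via the Markov bound $v(E)\le e^{Et}\widehat N(t)$ with $E=1/\phi(\lambda)$, $t=c\psi(\lambda)$, and the Tauberian half via splitting the Laplace integral at $E$ and forcing the tail $e^{-E(t-t')}\widehat N(t')$ below $\tfrac12\widehat N(t)$; the strict inequality $\alpha<-1$, i.e.\ $-1/\alpha\in(0,1)$, is indeed exactly what makes both optimizations close, and your rescaling statement (passing from $\psi(\lambda)$ to $c\psi(\lambda)$ at the cost of the factor $c^{-1/\alpha}$) is the right workhorse. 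Two small points to tighten, neither of which is a gap: (i) the rescaling step needs, besides the uniform convergence theorem, the monotonicity of $\widehat N$ — since $c\psi(\lambda)$ is only asymptotically of the form $\psi(\mu)$, compare it with $\psi\bigl(c^{1/\alpha}(1\pm\delta)\lambda\bigr)$, use that $\widehat N$ is nonincreasing, and let $\delta\to 0$; you flag this yourself and it goes through. (ii) The derivative of $c\mapsto c-B_2c^{-1/\alpha}$ tends to $1$, not $+\infty$, as $c\to\infty$; all that is needed is that the infimum is strictly negative, which already follows from small $c$, and in the half-mass step any fixed $c>0$ with $d$ large suffices (making $c$ small is superfluous). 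Note also that the displayed conclusion (\ref{dB-output}) must be read with $\lambda\log v(1/\phi(\lambda))$ rather than $\lambda v(1/\phi(\lambda))$, consistent with Corollary \ref{Cor-deBruijn}; you interpreted it this way, correctly.
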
 
\noindent
\begin{corollary} \label{Cor-deBruijn}
Let $ \upsilon $ be a measure on $(0,\infty)$ whose Laplace transform 
$\widehat{N} (t)$ satifies (\ref{Laplace trafo finite}) and  denote by $v(E)   $ its distribution function. Suppose $f\in R_{\rho }$, $0 <\rho<1$ and
\begin{align}\label{Cor-deBruijn-Input}
  -B_1 f(t)(1+o(1)) &\le\log ( \widehat{N} (t) ) )\le     -B_2 f(t)(1+o(1)) & (t\rightarrow \infty)   
\end{align} 
for $B_1,B_2 >0 $. Then 
\begin{align} \label{Legendre-bounds}
 C_1  \inf_{t>0}[Et-f(t)  ]  (1+o(1))  
&    \le   \log\left(v(E)\right)      \le  
  C_2  \inf_{t>0}[Et-f(t)  ]  (1+o(1)) & (E\searrow 0).
\end{align}
with $C_1,C_2>0$.
\end{corollary}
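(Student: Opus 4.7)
The plan is to deduce Corollary~\ref{Cor-deBruijn} from Theorem~\ref{deBruijn} by choosing the auxiliary function $\psi$ so that the hypothesis (\ref{Cor-deBruijn-Input}) gets converted into the required input (\ref{dB-input}), and then translating the resulting statement on $\log v(1/\phi(\lambda))$ into the Legendre--Fenchel form of the conclusion by a change of variables in which the regular--variation inversion formula from Appendix~1 plays the decisive role.

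The concrete choice is $\psi(\lambda):=f^{-1}(1/\lambda)$, where $f^{-1}\in R_{1/\rho}$ is the asymptotic inverse of $f\in R_\rho$ guaranteed by Theorem~1.5.12 of \cite{BiGoTe89} (here $\rho>0$ is essential). Then $\psi\in R_{-1/\rho}(0+)$ with index $\alpha=-1/\rho<-1$ (using $\rho<1$), and $\phi(\lambda)=\lambda\psi(\lambda)\in R_{-(1-\rho)/\rho}(0+)$, so all regular--variation requirements of Theorem~\ref{deBruijn} are satisfied. By construction $f(\psi(\lambda))\sim 1/\lambda$, hence substituting $t=\psi(\lambda)$ into (\ref{Cor-deBruijn-Input}) produces
\begin{align*}
-B_1(1+o(1))\;\le\;\lambda\log\widehat{N}(\psi(\lambda))\;\le\;-B_2(1+o(1)),\qquad \lambda\searrow 0,
\end{align*}
which is precisely (\ref{dB-input}). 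An appeal to Theorem~\ref{deBruijn} then yields constants $\tilde C_1,\tilde C_2>0$ with
\begin{align*}
-\tilde C_1\;\le\;\liminf_{\lambda\searrow 0}\lambda\log v\bigl(1/\phi(\lambda)\bigr)\;\le\;\limsup_{\lambda\searrow 0}\lambda\log v\bigl(1/\phi(\lambda)\bigr)\;\le\;-\tilde C_2.
\end{align*}

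The remaining task is to rewrite this as (\ref{Legendre-bounds}). Setting $\tau:=\psi(\lambda)$ and $E:=1/\phi(\lambda)$ gives $\lambda=1/f(\tau)$ and $E=f(\tau)/\tau$, so the bounds above read $-\tilde C_1 f(\tau)(1+o(1))\le \log v(E)\le -\tilde C_2 f(\tau)(1+o(1))$ as $E\searrow 0$. To compare $f(\tau)$ with the Legendre transform, let $t^*(E)$ realize $\inf_{t>0}[Et-f(t)]$; using $f'(t)\sim\rho f(t)/t$ for $f\in R_\rho$ (\cite{BiGoTe89}, p.~44), the stationarity condition $E=f'(t^*)$ becomes $E\sim\rho f(t^*)/t^*$, whence
\begin{align*}
\inf_{t>0}[Et-f(t)]\;=\;Et^*-f(t^*)\;\sim\;-(1-\rho)f(t^*).
\end{align*}
Since both $E=f(\tau)/\tau$ and $E\sim\rho f(t^*)/t^*$ invert through the same asymptotic inverse (by the formula at the end of Appendix~1, Corollary~2.3.4 of \cite{BiGoTe89}), $\tau$ and $t^*$ are asymptotically proportional, so $f(\tau)\sim c_\rho f(t^*)$ for a positive constant $c_\rho$. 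Absorbing $c_\rho$ and $(1-\rho)^{-1}$ into new constants $C_1,C_2>0$ gives (\ref{Legendre-bounds}).

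The main obstacle is not the invocation of Theorem~\ref{deBruijn}, which is mechanical once $\psi$ has been selected, but the final identification step: the slowly varying factor $g_0$ hidden inside $f(t)=t^\rho g_0(t)$ enters both in $\phi^{-1}$ and in $t^*$, and the fact that the two scaling problems produce asymptotically equivalent solutions rests on the de~Bruijn conjugate machinery recalled in Appendix~1. Once this inversion is under control the constants propagate transparently through the estimates.
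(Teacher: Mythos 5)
Your proposal is correct and follows essentially the same route as the paper: it feeds Theorem \ref{deBruijn} with $\psi(\lambda)=f^{-1}(1/\lambda)$ (the paper's $\psi^{-1}=1/f$), changes variables via $E=1/\phi(\lambda)$, and identifies the Legendre transform asymptotics through the stationarity relation $E\sim\rho f(t^*)/t^*$ coming from $f'(t)\sim\rho f(t)/t$. Your final step comparing $f(\tau)$ with $f(t^*)$ via asymptotic proportionality is just a repackaging of the paper's explicit computation of $\inf_{t>0}[Et-f(t)]\sim-(1-\rho)\rho^{\rho/(1-\rho)}/\phi^{-1}(1/E)$, so the two arguments coincide in substance.
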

\begin{proof}
Again we encourage the interested reader to control the statements below for the special case $ \widetilde{f}(t)= t^{\rho}$, $\rho \in (0,1)$. 
With $\psi^{-1}(t)=1/f(t)\in R_{-\rho} $ inequality (\ref{Cor-deBruijn-Input}) becomes 
\begin{align*}
-B_1    &\le \liminf_{t\rightarrow \infty} \psi^{-1}(t) \log    \widehat{N} (t)    \;   \le \; \limsup_{t\rightarrow \infty} \psi^{-1}(t) \log    \widehat{N} (t)   \;   \le \;  -B_2, 
 \end{align*}
respectively with  $\lambda = 1/t$ and $\psi(\lambda) \in R_{-1/\rho}(0+) $
\begin{align*}
-B_1    &\le \limsup_{\lambda \searrow 0}\lambda  \log    \widehat{N} (\psi(\lambda)    )\;   \le \; \liminf_{\lambda \searrow 0} \lambda  \log    \widehat{N} (\psi(\lambda)    )\;   \le \; -B_2 
.\end{align*}
As a consequence of Theorem \ref{deBruijn} and $\phi(\lambda)=\lambda\psi(\lambda)\in R_{\frac{\rho-1}{\rho}}(0+)$ we have 
\begin{align} 
-C_1\le \liminf_{\lambda \searrow 0} \lambda v(1/\phi(\lambda))  \le \limsup_{\lambda \searrow 0} \lambda v(1/\phi(\lambda))\le - C_2. 
\end{align}
Setting $E=1/\phi(\lambda)$ we obtain
\begin{align} \label{Legendre-1}
-C_1/ \phi^{-1} (1/E)  (1+o(1))  
& \;  \le \; \log\left(v(E)\right)    \;  \le \; -
\:  C_2/\phi^{-1} (1/E) (1+o(1)). 
\end{align} 
Let us now prove   (\ref{Legendre-bounds}).  Without restriction we can choose a differentiable version of $f(t)$ \cite{BiGoTe89}.
Starting with
\begin{align*}
t^*= \frac{ \rho}{E   }  \frac{1}{\phi^{-1}\left(  \rho/E     \right)}   
\end{align*}
we have
\begin{align}
\rho/E \sim \phi\left(\frac{\rho}{E t^*}\right)= \psi\left(\frac{\rho}{E t^*}\right)  \frac{\rho}{E t^*}. 
\end{align}
Inversion of $\psi$ gives $\psi^{-1}(t^*)=\rho/(Et^*)$, respectively
\begin{align} 
0= E -\frac{\rho}{t^*\psi^{-1}(t^*)}=E -\rho f(t^*)/t^*\stackrel{E\searrow 0}{\sim} E -\frac{d}{dt}f(t)\mid_{t=t^*}.
\end{align}
We obtain
\begin{align}  \label{Legendre-2}
\inf_{t>0}[Et-f(t)]\,=\,E\frac{ \rho}{E   }  \frac{1}{\phi^{-1}\left( \rho/E     \right)} -\frac{1}{\phi^{-1}\left(  \rho/E     \right)}\,\stackrel{E\searrow 0}{\sim}\,- (1- \rho  )  \rho^{\: \rho/(1-\rho)} /\phi^{-1}(1/E)  .
\end{align}
Combining (\ref{Legendre-1}) and (\ref{Legendre-2}) we end up with
\begin{align*}  
\frac{C_1 \rho^{\: \rho/( \rho-1 )}}{1- \rho  } \inf_{t>0}[Et-f(t)  ]  (1+o(1))  
&    \le  \log\left(v(E)\right)      \le  
  \frac{C_2 \rho^{\: \rho/( \rho-1 )}}{1- \rho  } \inf_{t>0}[Et- f(t)  ]  (1+o(1))  . 
\end{align*}

\end{proof}

\noindent The corresponding result in the unbounded setting is the limit-of-oscillation version of Kasahara's Tauberian theorem, (\cite{Ka78}, Thm. 1(ii)). 
\begin{theorem}\label{Kasahara} Let $ \upsilon $ be a measure on $\bR$ whose Laplace transform $\widehat{N} (t)$ satifies (\ref{Laplace trafo finite}) and  denote by $v(E)   $ its distribution function. 
If $0<\alpha <1$, $\psi \in
R_{ \alpha} $, put  $\phi(t) =t/ \psi(t)\in R_{1-\alpha   } $. Suppose
\begin{align*}
B_1\le \liminf_{t \rightarrow \infty} \:t^{-1} \log  \widehat{N}(\psi(t))    \le \limsup_{t \rightarrow \infty} \: t^{-1} \log  \widehat{N}(\psi(t))   \le B_2
\end{align*}
for $B_1,B_2 >0$. Then 
\begin{align*}
 C_1 \le \liminf_{E \rightarrow -\infty} \:E^{-1} \log  v(-\phi(-E) )  \le \limsup_{E \rightarrow -\infty} \:E^{-1} \log v(-\phi(-E) )  \le  C_2
\end{align*}
with $C_1,C_2 >0$.
\end{theorem}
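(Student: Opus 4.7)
The plan is to establish the two-sided lim\,inf / lim\,sup bounds on $v(-\phi(-E))$ by separately converting the upper Laplace hypothesis into an upper bound on $v$ and the lower Laplace hypothesis into a lower bound on $v$, matching the constants via the Legendre transform and the inversion of regularly varying functions (Appendix 1). The structure parallels Corollary \ref{Cor-deBruijn}, but takes place at $-\infty$ rather than at the origin.

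For the upper bound on $v$ (which yields the $\liminf \ge C_1$ side), I would apply exponential Chebyshev: for every $t>0$,
\begin{align*}
v(E')\;=\;\int_{\lambda\le E'}d\nu(\lambda)\;\le\;\int e^{t(E'-\lambda)}d\nu(\lambda)\;=\;e^{tE'}\widehat{N}(t).
\end{align*}
Substituting $t=\psi(s)$ and applying $\log\widehat{N}(\psi(s))\le B_2 s(1+o(1))$ gives $\log v(E')\le E'\psi(s)+B_2 s(1+o(1))$. Minimizing over $s$: since $\psi\in R_\alpha$ one has $\psi'(s)\sim\alpha\psi(s)/s$, so the critical point satisfies $\phi(s^\ast)\sim -\alpha E'/B_2$, which inverts via the formula at the end of Appendix 1 to $s^\ast\sim\phi^{-1}(-\alpha E'/B_2)$. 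Substituting back, $\log v(E')\le -B_2(1/\alpha-1)s^\ast(1+o(1))$. Setting $E'=-\phi(-E)$ and using the regular variation of $\phi^{-1}\in R_{1/(1-\alpha)}$ gives $\log v(-\phi(-E))\le C_1 E(1+o(1))$ with an explicit constant $C_1>0$; since $E<0$, division yields the desired $\liminf \ge C_1$.

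For the lower bound on $v$ (which yields the $\limsup \le C_2$ side), I would use the Tauberian argument of Kasahara (\cite{Ka78}) adapted to one-sided control. Starting from the integration-by-parts identity
\begin{align*}
\widehat{N}(t)\;=\;t\int_{-\infty}^{\infty}v(\lambda)e^{-t\lambda}\,d\lambda,
\end{align*}
(valid provided $v(\lambda)e^{-t\lambda}\to 0$ at both infinities, which is guaranteed by the existence of $\widehat{N}(t)$ for every $t>0$), one splits the integral at a free level $E'$, uses the monotonicity of $v$ together with an auxiliary scale $t'>t$ (so that $e^{-t\lambda}=e^{(t'-t)\lambda}e^{-t'\lambda}\le e^{(t'-t)E'}e^{-t'\lambda}$ on $\lambda\le E'$), and balances at the Legendre-optimal scales $t=\psi(s)$, $E'=-\phi(-E)$. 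The already-established upper bound on $v$ together with the upper Laplace hypothesis controls the contribution of $\lambda \ll E'$, while the lower hypothesis $\widehat{N}(\psi(s))\ge e^{B_1 s(1+o(1))}$ forces $v(E')\ge\exp(C_2 E(1+o(1)))$, which is the desired side of the conclusion.

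The main obstacle is the lower bound on $v$: only one-sided lim\,inf / lim\,sup control on $\widehat{N}$ is assumed, so the classical Kasahara theorem cannot be invoked as a black box and the Tauberian argument must be re-examined to verify that oscillation in the Laplace transform propagates to oscillation in $v$ with the right constants. The subtle technical point is controlling the slowly varying corrections in $\psi$ and $\phi$; the de Bruijn conjugate formalism of Appendix 1 is the appropriate tool. The assumption $\alpha\in(0,1)$ is essential in that it guarantees $\phi\in R_{1-\alpha}$ is asymptotically invertible with compatible slowly varying behaviour, so that the Legendre duality between $\widehat{N}$ and $v$ closes up with the right constants on both sides of the inequality.
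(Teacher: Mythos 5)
First, a point of reference: the paper itself does not prove Theorem \ref{Kasahara}; it is imported verbatim from Kasahara (\cite{Ka78}, Thm.\ 1(ii)), with the Remark in Appendix 2 indicating an alternative Cram\'er-transform route in \cite{Me05}. So your proposal is attempting more than the paper does, and has to be judged as a self-contained Tauberian argument. The easy half of your argument is fine: the Chebyshev bound $v(E')\le e^{tE'}\widehat{N}(t)$, the substitution $t=\psi(s)$, the stationarity condition $\phi(s^{*})\sim -\alpha E'/B_2$ obtained from $\psi'(s)\sim\alpha\psi(s)/s$, and the inversion of $\phi\in R_{1-\alpha}$ do give $\log v(-\phi(-E))\le C_1E\,(1+o(1))$, hence the $\liminf\ge C_1$ side, with an explicit constant.

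The genuine gap is the other half, which is the entire nontrivial content of the theorem, and your sketch does not close it. After the identity $\widehat{N}(t)=t\int v(\lambda)e^{-t\lambda}\,d\lambda$ you split only once, at $E'$, and treat $\{\lambda\le E'\}$ with an auxiliary time $t'>t$; but then the complementary region $\{\lambda>E'\}$ contains both the window just above $E'$ (which is what should produce $v(E')$, via $e^{-t\lambda}\le e^{t|E''|}$ on a window $[E'',E']$ with a \emph{second}, deeper cutoff $E''$) and the shallow region near $0$, whose contribution is of order $e^{t|E'|}v(0)$, i.e.\ $e^{cs}$ with $c$ fixed by the relation between the level and $s$; showing this is subdominant to the lower bound $e^{B_1 s}$ imposes the constraint $c<B_1$, which you never address and which is exactly where $C_2$ degrades relative to $B_1$. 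Moreover, controlling the deep tail solely by the shift to $t'=\psi(s')$ together with $\widehat{N}(t')\le e^{B_2 s'(1+o(1))}$ requires $(t'-t)|E'|>B_2s'-B_1s$ for some $s'>s$, which by comparing growth in $s'$ forces the level to be deeper than roughly $(B_2/\alpha)\phi(s)$ --- incompatible with $c<B_1$ when $B_1\le B_2$; so one must instead integrate the already-proved pointwise bound on $v$ over $(-\infty,E'']$ and verify, using the regular variation of $\psi$ and $\phi$, that this term and the shallow term are simultaneously $o(\widehat{N}(t))$ for a choice of the two cutoffs and the auxiliary scale that still yields a positive $C_2$. That bookkeeping is precisely Kasahara's proof (and is exactly where the one-sidedness $B_1<B_2$ matters); in your proposal it is replaced by the statement that the argument ``must be re-examined to verify'' the propagation of oscillation, so the $\limsup\le C_2$ half is asserted rather than proved.
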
 
\noindent 
\begin{corollary} \label{Cor-Kasahara}
Let $ \upsilon $ be a measure on $\bR$ whose Laplace transform 
$\widehat{N} (t)$ satifies (\ref{Laplace trafo finite}). Denote by $v(E)$ the distribution function of $ \upsilon $. Suppose $f\in R_{\rho }$,  $\rho>1$ and
\begin{align} 
   B_1 f(t)(1+o(1))  &\le \log ( \widehat{N} (t) ) ) \le       B_2 f(t)(1+o(1)) & (t\rightarrow \infty) 
\end{align} 
for $B_1, B_2 >0$. Then
\begin{align}  
 C_1  \inf_{t>0}[Et+f(t)  ]  (1+o(1))  
&    \le   \log\left(v(E)\right)      \le  
  C_2  \inf_{t>0}[Et+f(t)  ]  (1+o(1))  &(E\rightarrow -\infty).
\end{align}
\end{corollary}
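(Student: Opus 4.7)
The plan is to proceed in direct parallel with the proof of Corollary~\ref{Cor-deBruijn}, substituting Kasahara's theorem (Theorem~\ref{Kasahara}) for de~Bruijn's and adjusting the sign conventions for the unbounded-below setting. First I would choose the auxiliary regularly varying functions. Since $f\in R_\rho$ with $\rho>1$, the asymptotic inverse $\psi:=f^{-1}$ exists (Theorem~1.5.12 of \cite{BiGoTe89}) and lies in $R_{1/\rho}$ with $1/\rho\in(0,1)$, matching the range $\alpha\in(0,1)$ required by Theorem~\ref{Kasahara}. The natural candidate for the function on the ``distribution-function'' side of Kasahara is then $\phi(t):=t/\psi(t)\in R_{(\rho-1)/\rho}$.

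Next I would feed the hypothesis into Theorem~\ref{Kasahara}. Evaluating the given bound $B_1 f(t)(1+o(1))\le \log \widehat N(t)\le B_2 f(t)(1+o(1))$ at $t=\psi(s)$ and dividing by $f(\psi(s))\sim s$ produces
\begin{align*}
B_1\le \liminf_{s\to\infty} s^{-1}\log\widehat N(\psi(s))\le \limsup_{s\to\infty} s^{-1}\log\widehat N(\psi(s))\le B_2.
\end{align*}
Applying Theorem~\ref{Kasahara} yields constants $C_1',C_2'>0$ with $C_1'\le \liminf_{E\to-\infty}E^{-1}\log v(-\phi(-E))\le \limsup\le C_2'$. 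After the substitution $x:=-\phi(-E)$ (so that $-E=\phi^{-1}(-x)$), and remembering that multiplying through by $E<0$ flips inequalities, this reads
\begin{align*}
-C_2'\,\phi^{-1}(-x)(1+o(1))\le \log v(x)\le -C_1'\,\phi^{-1}(-x)(1+o(1))\qquad (x\to-\infty).
\end{align*}

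The final step is to identify $\phi^{-1}(-E)$ with the Legendre transform $\inf_{t>0}[Et+f(t)]$, in direct analogy with (\ref{Legendre-2}). Since $tf'(t)\sim \rho f(t)$ for $f\in R_\rho$, the minimizer $t^*$ of $Et+f(t)$ satisfies $\phi(f(t^*))=f(t^*)/t^*\sim -E/\rho$; inverting and using $\psi(s)=s/\phi(s)$ gives $t^*\sim -\rho\phi^{-1}(-E/\rho)/E$ and hence
\begin{align*}
\inf_{t>0}[Et+f(t)]\sim Et^*(1-\rho^{-1})\sim -(\rho-1)\,\rho^{-\rho/(\rho-1)}\,\phi^{-1}(-E),
\end{align*}
the last equivalence using the regular variation $\phi^{-1}\in R_{\rho/(\rho-1)}$. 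Combined with the previous display, this delivers the claimed inequalities with $C_i:=C_i'\,\rho^{\rho/(\rho-1)}/(\rho-1)$.

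The main technical burden, exactly as in the proof of Corollary~\ref{Cor-deBruijn}, is bookkeeping: propagating the $o(1)$ errors cleanly through the chain of Karamata-type inversions and through the asymptotic evaluation of the Legendre transform. No new analytic ingredient beyond Theorems~1.5.12 and~4.12.7 of \cite{BiGoTe89} quoted in Appendix~1 and~2 is needed.
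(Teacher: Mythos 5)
Your proposal is correct and is exactly the route the paper intends: its own proof of Corollary \ref{Cor-Kasahara} consists of the single remark that it is ``closely analogous'' to Corollary \ref{Cor-deBruijn}, and your argument—taking $\psi=f^{-1}\in R_{1/\rho}$, applying Theorem \ref{Kasahara}, and identifying $\phi^{-1}(-E)$ with $\inf_{t>0}[Et+f(t)]$ up to the factor $(\rho-1)\rho^{-\rho/(\rho-1)}$—is precisely that analogy spelled out, with the constants and sign flips handled correctly.
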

\begin{proof}
Corollary \ref{Cor-Kasahara} may be proved by a method closely analogous to that used in Corollary \ref{Cor-deBruijn}.
\end{proof}
\begin{remark}
$\;$\\[-2mm]
\begin{itemize}
\item[(i)] The constants $C_1,C_2 >0$ in the results above are explicetly computable. As a consequence if $B_1=B_2$ then   $C_1=C_2$, i.e. we  obtain    de Bruijn's (\cite{BiGoTe89}, Thm. 4.12.9), respectively Kasahara's Tauberian theorem (\cite{BiGoTe89}, Thm. 4.12.7).
\item[(ii)]
The key idea linking  Laplace and   Legendre transform is the concept of the relevant energy interval. In the asymptotic limit for every time $t$ exists an energy $E=E(t)$, s.t. the behaviour of the Laplace transform is determined by a small energy interval around $E$. Using the Cram{\'e}r-transform (\cite{Ho00}, p.7) the idea above is  used in \cite{BiKo01} to prove  de Bruijn's   Tauberian theorem, repectively in \cite{Me05} not aware of  \cite{Ka78}  and \cite{Bi88} to prove the corresponding limit of oscillation versions.
\item[(iii)] While in general no Tauber theorem exists if $\widehat{N}(t)\in R_1$ it is possible to transfer the limit of oscillation argument via  Cram{\'e}r-transform discussed in (ii) to the double exponential regime \cite{Me05}. Suppose $B_1,B_2>0$, $f(t)=c_g t  \log(c_g t)  -   c_g t $ and 
\begin{align*}
   B_1 f(t)    \le \log   \widehat{N} (t)    \le    B_2f(t). 
\end{align*}
Then exists $C_1.C_2 >0$ s.t
\begin{align}
C_1 \inf_{t>0} \left[ E t+  f(t)    \right](1+o(1))   &\le   \log  v(E)     \le      C_2 \inf_{t>0} \left[ E t+  f(t)    \right](1+o(1)) .\label{Legendre-de}
\end{align}
\end{itemize}

\end{remark}

\noindent 
{\bf Acknowledgements:} The author thanks Werner Kirsch for   
a fruitful correspondence including many useful hints! We gratefully acknowledge financial support by the DFG-Forschergruppe 718 'Analysis and stochastics in complex physical systems',  the DFG-SFB/TR12 'Symmetries and Universality in Mesoscopic Systems' and the DFG-Schwerpunktprogramm. on 'Interacting stochastic systems of high complexity'.

\setcounter{equation}{0}
\setcounter{equation}{0}

\bibliographystyle{plain}   % this means that the order of references
			    % is dtermined by the order in which the
			    % \cite and \nocite commands appear
\bibliography{ddpt}

\end{document}